\pgfplotsset{compat=1.18}
\renewcommand{\Re}{\mathbb{R}}
\newcommand{\tp}{{\mbox{\tiny \sf T}}}
\newcommand{\Nint}{\mathcal N_{\text{int}}}
\newcommand{\Nlf}{\mathcal N_{\text{leaf}}}
\newcommand{\T}{\mathcal T}
\newcommand{\N}{\mathcal N}
\newcommand{\chd}{\mathcal C}
\newcommand{\W}{\mathcal W}
\newcommand{\Q}{\mathcal Q}
\newcommand{\z}{\mathbf z}
\DeclareMathOperator*{\argmax}{argmax}
\DeclareMathOperator*{\argmin}{argmin}
\newcommand{\tRoot}{t_{\mathsf R}}
\newcommand{\DKL}{\mathrm{D}_{\mathrm{KL}}}
\newcommand{\JS}{\mathrm{JS}_{\Pi}}
\renewcommand{\vec}[1]{\mathbf{#1}}
\newcommand{\betacrq}{\beta^{Q}_{\text{cr}}}
\newcommand{\betacr}{\beta_{\text{cr}}}
\newcommand{\betaPhaseTransitionSet}{\mathcal B_{\text{PT}}}
\newcommand{\betacrqset}{\mathcal B_{\text{cr}}^Q}
\newcommand{\betaPhaseTransitionSetNode}[1]{\mathcal B_{\text{PT},{#1}}}
\newcommand{\figref}[1]{Figure~\ref{#1}}
\newtheorem{theorem}{Theorem}[section]
\newtheorem{lemma}[theorem]{Lemma}
\newtheorem{proposition}[theorem]{Proposition}
\newtheorem{definition}[theorem]{Definition}
\renewcommand\footnotemark{}
\renewcommand{\thesection}{\arabic{section}} 
\renewcommand{\thesubsection}{\Alph{subsection}}
\title{\LARGE \bf A Dual Approach for Hierarchical Information-Theoretic\\ Tree Abstractions}
\date{}
\author{Daniel T. Larsson$^{1}$ \and Dipankar Maity$^{2}$ \and Panagiotis Tsiotras$^{3}$
\footnote{This research was funded by Office of Naval Research award N00014-18-1-2375 and by the Army Research Laboratory under DCIST CRA W911NF-17-2-0181.}
\footnote{$^{1}$Assistant Professor in the Aerospace and Mechanical Engineering Department at the University of Arizona, Tucson, AZ, 85721, USA. {\tt\small dlarsson@arizona.edu}}%
\footnote{$^{2}$Assistant Professor in the Department of Electrical and Computer Engineering at the University of North Carolina at Charlotte, NC, 28223, USA. {\tt\small dmaity@uncc.edu}}%
\footnote{$^{3}$Andrew and Lewis Chair Professor with the D. Guggenheim School of Aerospace Engineering and the Institute for Robotics and Intelligent Machines, Georgia Institute of Technology, Atlanta, GA, 30332-0150, USA. {\tt\small tsiotras@gatech.edu}}%
}
\begin{document} 

\maketitle

\begin{abstract}
In this paper, we consider establishing a formal connection between two distinct tree-abstraction problems inspired by the information-bottleneck (IB) method.
Specifically, we consider the hard- and soft-constrained formulations that have recently appeared in the literature to determine the conditions for which the two approaches are equivalent.
%
Our analysis leverages concepts from Lagrangian relaxation and duality theory to relate the dual function of the hard-constrained problem to the Q-function employed in Q-tree search and shows the connection between tree phase transitions and solutions to the dual problem obtained by exploiting the problem structure.
An algorithm is proposed that employs knowledge of the tree phase transitions to find a setting of the dual variable that solves the dual problem.
Furthermore, we present an alternative approach to select the dual variable that leverages the integer programming formulation of the hard-constrained problem and the strong duality of linear programming.
To obtain a linear program, we establish that a relaxation of the integer programming formulation of the hard-constrained tree-search problem has the integrality property by showing that the program constraint matrix is totally unimodular.
Empirical results that corroborate the theoretical developments are presented and discussed throughout.
\end{abstract}

\section{Introduction}

The ability to distill streams of information to identify the most salient and relevant features, and removing, or discarding, those aspects of the signal that are unimportant, or irrelevant, (i.e., abstraction) is considered a key attribute of intelligent reasoning~\cite{holte2003abstraction,ponsen2010abstraction,zucker2003grounded}.
As a result, researchers within the artificial intelligence and autonomous systems communities have considered the development of a number of frameworks that both consider the design and deployment of abstractions in systems applications.
However, despite the introduction of many frameworks that leverage abstraction in various aspects of autonomy, few approaches have emerged through the years that consider the principled development and design of the abstractions themselves. 
To date, the use of abstractions in autonomous systems is mainly motivated and considered from the perspective of developing systems that can tailor their (on-board) processing in accordance with system resource constraints.
For example, in the artificial intelligence community, the works of~\cite{holte2003abstraction,li2006towards} consider the development of abstractions for planning, decision-making, and problem-solving from a systematic perspective, providing some insights into the foundations for what defines an abstraction, especially with respect to their use in reducing the difficulty of obtaining solutions to challenging problems.
In other work, such as~\cite{hauer2015multi,kambhampati1986multiresolution,behnke2004local,cowlagi2008multiresolution}, researchers within the autonomous planning community have considered graph abstractions (i.e., compression or reduction) in order to reduce the number of vertices for which graph-search algorithms such as Dijkstra or \(A^*\) must visit, thereby reducing the computational burden of planning.
In a similar line of research, the authors of~\cite{larsson2022ucsdPerception} consider the design of abstractions in environments containing semantic information (e.g., trees, grass, asphalt), and show how the resulting abstractions, in the form of multi-resolution octrees, may be utilized in order to create reduced (compressed) colored graphs to speed semantic-constrained planning.
The use of abstractions in stochastic decision-making problems has also been well-documented~\cite{botvinick2012hierarchical,tishby2010information,genewein2015bounded,ponsen2010abstraction}.
Of specific interest is~\cite{tishby2010information,genewein2015bounded}, where researchers within the bounded rationality community consider the development of frameworks for single-step and sequential decision-making in agents that are subject to information-processing constraints.
In the aforementioned works, the authors employ ideas from information-theory to model information-limited agents as a reduced-capacity channel between states and actions, and derive the analytical structure of optimal decision rules in information-constrained Markov decision processes (MDPs) and maximum expected utility problems.
Abstractions arise in the works~\cite{tishby2010information,genewein2015bounded} as stochastic policies that become less and less state specific as the agent's information processing capabilities become more and more limited, thereby reducing the specificity to which the agent must localize itself (i.e., determine which state it is in) when deciding how to act. 
Information-theoretically inspired frameworks to model information-limited decision-making have also been extended to multi-agent systems~\cite{guan2022hierarchical,grau2018balancing}.
Few frameworks have been developed that specifically focus on the principled design of abstract representations that are amenable to autonomous planning, perception, and control.
To address this shortcoming, the work by the authors of~\cite{larsson2020q,larsson2021information,larsson2022generalized,larsson2022linearBC,larsson2021informationB} considers the development of information-theoretically inspired frameworks for the emergence of hierarchical, multi-resolution, abstractions in autonomous systems.
More specifically, the work of~\cite{larsson2020q,larsson2021information,larsson2022generalized,larsson2022linearBC} leverages ideas from the information-bottleneck (IB) problem~\cite{Tishby1999} and its variants~\cite{chechik2002extracting}, and draws connections between hierarchical, multi-scale, tree data structures and signal encoders to develop a new framework to generate abstractions for resource-limited autonomous systems that are emergent from an information-theoretic principle that is rooted in signal compression theory.
Interestingly, the problem introduced in~\cite{larsson2020q} is a soft-constrained, penalty-function-like, approach that likens the original information-bottleneck problem, whereby the balance between information-retention and compression in the resulting multi-resolution abstraction is specified by a trade-off parameter.
In other related work, the authors of~\cite{larsson2022linearBC,larsson2021information} introduce a hard-constrained version of the information-theoretic abstraction problem, where the balance between compression and information-retention is not specified in terms of a trade-off parameter, but rather enforced as a hard-constraint that alters the feasible set of the problem, thereby guaranteeing that the resulting abstraction retains a desired level of task-relevant information (or equivalently, achieve a desired level of compression).
The authors show that a solution to the hard-constrained problem may be obtained by solving an integer linear program, whereas a solution to the soft-constrained version of the information-theoretic tree abstraction problem~\cite{larsson2020q} is obtained by means of an algorithm called Q-tree search, which exploits the tree structure to tractably find an optimal, multi-resolution, tree abstraction.
However, while these two approaches have been shown to successfully generate abstractions for perception and planning in autonomous systems~\cite{larsson2021informationB,larsson2022ucsdPerception}, a clear connection between the two methods is not yet available.
The importance of establishing a connection between the two methods lies in the guarantees furnished by the resulting solutions, which are formulation-dependent.
Specifically, consider, for example, an agent who must design an abstraction for onboard storage or for communicating with other agents across a capacity-limited channel.
In such a setting, the ability to design an abstraction satisfying hard-constraints is clearly useful due to the system constraints.
In contrast, the Q-tree search algorithm is a tractable approach to obtaining multi-resolution abstractions, but its solutions depend on the specification of a trade-off parameter, whose selection as a function of the hard-constraint is not trivial.
Thus, the development of a framework that allows the trade-off parameter in Q-tree search to be selected as a function of system-level specifications represented by hard constraints is of great importance.
Consequently, the contributions of this paper are as follows.
First, we develop a connection between the hard-constrained, integer programming~\cite{larsson2022linearBC,larsson2021information} and the soft-constrained, penalty-function-like~\cite{larsson2020q} approaches to the information-theoretic hierarchical tree abstraction problem.
To accomplish our goal, we employ concepts from relaxation and duality theory and ultimately show that the dual function of the hard-constrained problem may be written in terms of the Q-function from Q-tree search.
Then, having established a formal bridge between the two problems that link Q-tree search and the hard-constrained formulation, we propose an algorithm that solves the resulting dual problem, thereby providing a means by which a setting of the trade-off parameter in Q-tree search may be obtained as a function of the hard-constraint in the integer programming formulation.
Our algorithm leverages tree phase transitions and delineates how they are related to solutions of the dual problem.
Lastly, we show that Q-tree search may be realized as a linear program by considering a suitable relaxation of the hard-constrained integer linear program.
To achieve our goal, we show that the hierarchical tree constraint results in a matrix that is totally unimodular, thereby providing another means by which the trade-off parameter in the soft-constrained problem may be selected as a function of the hard-constraints.
In summary, this paper details the precise connection between the hard- and soft-constrained information-theoretic tree search problems that appeared originally in~\cite{larsson2020q,larsson2021information,larsson2022linearBC}, and develops methods by which the value of the trade-off parameter in the soft-constrained formulation may be selected as a function of the hard constraint without the need to solve a (potentially large) integer program.
%

\section{Preliminaries} \label{sec:prelims}

We begin with comments regarding our notation throughout this paper.
To this end, we take \(\Re\) to be the set of real numbers, and for any integer \(n > 0\) let \(\Re^n\) represent the \(n\)-dimensional Euclidean space; that is \(\Re^n = \{(x_1,\ldots,x_n): x_i\in\Re,~0\leq i\leq n\}\).
For any vector \(\vec{x}\in\Re^n\), we denote the \(i^{\text{th}}\) element of \(\vec{x}\) by \([\vec{x}]_i\) for \(1\leq i \leq n\).
Given any \(\vec{x},\vec{y}\in\Re^n\), the relation \(\vec{x}\leq\vec{y}\) is to be understood component-wise: \([\vec{x}]_i \leq [\vec{y}]_i\) for \(i = 1,\ldots,n\).
Moreover, for integer \(n > 0\), we take \(\Re^n_+\) to be the collection of all \(n\)-dimensional vectors with non-negative components; that is, \(\Re^n_+ = \{\vec{x}\in\Re^n : [\vec{x}]_i \geq 0,~1\leq i\leq n\}\).
Similarly, for integers \(n > 0\) and \(m > 0\), the collection of all \(m\times n\) dimensional real-valued matrices is denoted \(\Re^{m\times n}\), and for any \(A \in\Re^{m\times n}\), the entry in the \(i^{\text{th}}\) row and \(j^{\text{th}}\) column is denoted \([A]_{ij}\) for \(1\leq i \leq m\), and \(1 \leq j \leq n\).
To formulate the information-theoretic compression problems in the sequel, we require the specification of a probability space.
To this end, let \((\Omega,\Sigma,\mu)\) be a probability space with (finite) sample space \(\Omega\), \(\sigma\)-algebra \(\Sigma\) and probability measure \(\mu\).
A random variable \(X: \Omega \to \Re\) is a measurable function with corresponding distribution (mass function) \(p(x) = \mu(\{\omega \in \Omega: X(\omega) = x\})\).
We let the collection of outcomes of the random variable \(X\) we denoted  \(\Omega_X = \{x \in \Re: X(\omega) = x,~\omega\in \Omega\}\).
Given a random variable \(X\), its \emph{Shannon entropy} is defined as \(H(X) = - \sum_{x \in \Omega_X} p(x) \log p(x)\).
The entropy of \(X\) is a measure of the uncertainty in the outcomes of \(X\), and depends only on the distribution \(p(x)\)~\cite{Cover2006}.
Thus, we will at times abuse notation and write \(H(p)\) in place of \(H(X)\).
For any two probability distributions \(p(x)\) and \(q(x)\) over the same collection of outcomes, the \emph{Kullback-Leibler (KL) divergence} is given by \(\DKL(p(x),q(x)) = \sum_{x} p(x) \log ( p(x) / q(x) )\)~\cite{Cover2006}.
Another important divergence measure is that of the \emph{Jensen-Shannon (JS) divergence}~\cite{lin1991divergence}, defined for a collection of distributions \(\{p_1,\ldots,p_n\}\) over the same set of outcomes as 
\begin{equation}\label{eq:JSdivergenceDef}
\JS(p_1,\ldots,p_n) = \sum_{i=1}^n [\Pi]_i \DKL(p_i(x),\bar p(x)),
\end{equation}
where \(\Pi\in\Re_+^n\) are given weights satisfying \(0 \leq [\Pi]_i \leq 1\), \(1\leq i \leq n\), \(\sum_{i=1}^n [\Pi]_i = 1\) and \(\bar p(x) = \sum_{i=1}^{n} [\Pi]_i p_i(x)\).

Consider a random variable \(Y:\Omega \to \Re\) with distribution \(p(y)\).
Then, the \emph{mutual information} between the random variables \(X\) and \(Y\) is defined in terms of the KL-divergence as
\begin{equation}\label{eq:MI}
I(X;Y) = \DKL(p(x,y),p(x)p(y)).
\end{equation}
The mutual information is non-negative, and is a measure of the statistical dependence between the random variables \(X\) and \(Y\)~\cite{Cover2006}.
Importantly, the mutual information may also be written as
\begin{equation}\label{eq:MIconditionalEntropy}
I(X;Y) = H(X) - H(X|Y) = H(Y) - H(Y|X),
\end{equation}
where \(H(X|Y)\) is the conditional entropy of \(X\) given \(Y\) defined by \(H(X|Y) = -\sum_{x,y}p(x,y)\log p(x|y)\).
In this paper, we will consider two problems that employ ideas from the information-bottleneck (IB) method~\cite{Tishby1999}.
The IB method is a signal-compression framework that considers the problem of forming a compressed representation \(T\) of a signal \(X\) so that the compressed representation \(T\) is as informative as possible regarding a third, relevance-variable \(Y\) which contains the relevant information.
The IB problem assumes that the joint distribution is given by \(p(t,x,y) = p(t|x)p(x,y)\), which implies that the random variables \(X\), \(Y\), and \(T\) satisfy the Markov chain conditions \(T \leftarrow X \leftarrow Y\) and \(T \rightarrow X \rightarrow Y\) (oftentimes denoted \(T \leftrightarrow X \leftrightarrow Y\)).
Given the distribution \(p(x,y)\), the IB method seeks to design the compressed representation \(T\) by designing a conditional distribution \(p(t|x)\) so as the solve the problem
\begin{equation}\label{eq:stdIBproblem}
    \min I(T;X) - \beta I(T;Y),
\end{equation}
where \(\beta \geq 0\) is a non-negative parameter that trades the importance of compression (i.e., minimizing \(I(T;X)\)) and information-retention regarding the relevant variable \(Y\) (i.e., maximizing \(I(T;Y)\)).
For more information regarding the IB method, the interested reader is referred to~\cite{Tishby1999,Slonim2002}.
In this paper, we consider developing abstractions of the environment \(\W\) in the form of hierarchical tree structures that emerge as a solution to an information-theoretic signal encoder problem that utilizes ideas from the IB method.
To this end, we assume that there exists an \(\ell > 0\) such that the grid-world environment \(\W \subset \Re^{w}\), is contained within a hypercube of side length \(2^\ell\).
A hierarchical tree representation of \(\W\) is given by \(\T = (\mathcal E(\T),  \N(\T))\), which consists of a collection of nodes \(\N(\T)\) and a set of edges \(\mathcal E(\T)\), the latter of which specify the nodal interconnections~\cite{Bondy1976}.
The developments in this paper will be limited to hierarchical tree structures in the form of quadtrees, although the analysis holds for other  high-dimensional tree structures (e.g., octrees).
With this in mind, we let \(\T^\Q\) be the space of all feasible, multi-resolution, quadtree abstractions of \(\W\), and take \(\T_\W\in\T^\Q\) to be the finest-resolution tree representation of \(\W\); that is, the leafs of \(\T_\W\) coincide with the finest-resolution cells of \(\W\).
Moreover, for any \(\T\in\T^\Q\), we let \(\Nlf(\T)\) be the set of leaf nodes, \(\Nint(\T)\) be the collection of all interior (i.e., non-leaf) nodes, and, for any \(0\leq k\leq \ell\), \(\N_k(\T)\) be the collection of nodes in the tree \(\T\) at depth \(k\).
Moreover, for any node \(t\in\N(\T)\), take \(\chd(t)\) to be the collection of all immediate child nodes of \(t\), and let \(\T_{(t)}\) be the subtree of \(\T\) rooted at the node \(t\); in other words, \(\T_{(t)}\) is the portion of the tree \(\T\in\T^\Q\) that is descendant from the node \(t\).
For more information regarding trees and subtrees, we refer the reader to~\cite{larsson2020q,larsson2022generalized}. 
Bridging the gap between trees and signal encodes, we note that the source \(X\) in our problem are the finest-resolution grid cells of \(\W\).
Then, any tree \(\T_q\in\T^\Q\) may be represented by a corresponding encoder \(p_q(t|x)\), where \(p_q(t|x) = 1\) if and only if the finest-resolution cell \(x\in\Nlf(\T_\W)\) is aggregated to the node \(t\in\Nlf(\T_q)\)~\cite{larsson2022generalized}.
As a result, we may define the information contained in the tree according to the function
\begin{equation}\label{eq:treeXinfo}
	I_X(\T_q) = I(T_q;X) = \sum_{t,x} p_q(t,x) \log \frac{p_q(t,x)}{p_q(t) p(x)},
\end{equation}
where \(T_q : \Omega \to \Re\) is a compressed representation of \(X\) whose outcomes are defined by the leafs of the tree \(\T_q \in \T^\Q\) and has distribution \(p_q(t) = \sum_x p_q(t|x)p(x)\).
Similarly, we may quantify the amount of relevant information contained in the tree \(\T_q\in\T^\Q\) via \(I_Y(\T_q)\), which is defined according to
\begin{equation}\label{eq:treeYinfo}
    I_Y(\T_q) = I(T_q;Y) = \sum_{t,y} p_q(t,y) \log \frac{p_q(t,y)}{p_q(t) p(y)},
\end{equation}
where \(p_q(t,y) = \sum_x p_q(t|x) p(x,y)\).
Importantly, relations~\eqref{eq:treeXinfo} and~\eqref{eq:treeYinfo} can be written in terms of the interior node set of the tree \(\T_q\in\T^\Q\) as
\begin{equation}\label{eq:incrementalXinfoChange}
	I_X(\T) = \sum_{s \in \Nint(\T_q)} \Delta I_X(s),
\end{equation}
and
\begin{equation}\label{eq:incrementalYinfoChange}
	I_Y(\T) = \sum_{s \in \Nint(\T_q)} \Delta I_Y(s),
\end{equation}
where \(\Delta I_X(s)\) and \(\Delta I_Y(s)\) quantify the (non-negative) incremental change in information contributed by the interior node \(s\).
These incremental information contributions are, in turn, given by
\begin{equation}
	\Delta I_Y(s) = \JS(p(y|s'_1),\ldots,p(y|s'_n)),\quad s'_i\in\chd(s),
\end{equation}
and \(\Delta I_X(s) = H(\Pi)\), where \(\Pi = [p(s'_1)/p(s),\ldots,p(s'_{\lvert \chd(s)\rvert})/p(s)]^\tp\) and the distributions \(p(s)\) and \(p(y|s'_i)\) are computed from \(p(x,y)\)~\cite[p.~9]{larsson2022generalized}.
For a more detailed discussion regarding the connection between hierarchical trees, signal encoders, and information theory, the interested reader is referred to~\cite{larsson2022generalized,larsson2020q}.
%

\section{Problem Formulation}\label{sec:probStatment}

Throughout, we will assume that the environment \(\W\) and distribution \(p(x,y)\), which specifies the statistical relationship between the source \(X\) (i.e., the finest-resolution cells of \(\W\)) and the relevant information \(Y\), are both known.
We consider establishing a formal connection between two problems.
The first problem, which we will call the \emph{primal problem} is
%
\begin{equation}\label{eq:HCIBtreeProb}
	v(D) = \min \{I_X(\T) : I_Y(\T) \geq D,~ \T\in\T^\Q\},
\end{equation}
where \(D \geq 0\).
Before proceeding, we have the following definition which delineates the terminology we will employ with respect to problem~\eqref{eq:HCIBtreeProb}.
\begin{definition}
	Let \(D \geq 0\) be a given scalar. Then a tree \(\T\in\T^\Q\) is \emph{primal feasible}, or \emph{feasible} for short, if \(I_Y(\T) \geq D\).
	Moreover, the tree \(\T^*\in\T^\Q\) is \emph{primal optimal} if it is feasible and \(I_X(\T^*) = v(D)\).
\end{definition}

We will assume throughout that there exists a primal feasible solution, and that \(I(X;Y) > 0\) (i.e., that the environment contains some relevant information).
The second problem we consider is
%
\begin{equation}\label{eq:IBtreeProb}
f(\beta) = \min\{I_X(\T) - \beta I_Y(\T) : \T\in\T^\Q\},
\end{equation}
where  \(\beta \geq 0\) is a non-negative parameter that trades information retention and compression, akin to the mechanism of the IB problem~\eqref{eq:stdIBproblem}.
It is important to note that the problems~\eqref{eq:HCIBtreeProb} and~\eqref{eq:IBtreeProb} are not equivalent to the standard information-bottleneck (IB) problem~\eqref{eq:stdIBproblem}.
In contrast, the problems~\eqref{eq:HCIBtreeProb} and~\eqref{eq:IBtreeProb} leverage the IB principle to formulate a tree (graph) compression problem where the objective is to find a compressed representation of \(\W\) in the form of a multi-resolution tree (a discrete optimization problem).
As a result, the additional constraint \(\T\in\T^\Q\) in~\eqref{eq:HCIBtreeProb} and~\eqref{eq:IBtreeProb} renders the above problems unsolvable by methods developed for the IB problem or its variations.
Each of the above problems have appeared independently in the literature.
The hard-constrained problem~\eqref{eq:HCIBtreeProb} first appeared in~\cite{larsson2021information}, where it was shown that the problem may be realized as an (linear) integer program.
More recently, a detailed of discussion of~\eqref{eq:HCIBtreeProb} and its connections to integer programming, multi-objective optimization, Pareto optimality, and (convex) linear programming relaxation was presented in~\cite{larsson2022linearBC}.
In contrast, the problem~\eqref{eq:IBtreeProb} was introduced prior to the hard-constrained problem in~\cite{larsson2020q}, where the authors develop an algorithm, called Q-tree search, that is guaranteed to find an optimal tree solution to~\eqref{eq:IBtreeProb} for any value of \(\beta\geq 0\).
A generalization of the problem~\eqref{eq:IBtreeProb} has been considered in~\cite{larsson2022generalized}, where the authors develop a framework for multi-resolution tree abstraction with respect to multiple sources of information.
Interestingly, in~\cite{larsson2022generalized}, the authors show how their generalized framework may be used to form semantically-information driven multi-resolution tree abstractions for autonomous systems applications, and discuss how other IB-like formulations over the space of trees can be recovered as special cases of the generalized problem.

However, while the two problems~\eqref{eq:HCIBtreeProb} and~\eqref{eq:IBtreeProb} have both appeared in the literature, the relationship between the two is not fully understood.
Specifically, it has not been established if the two approaches to the information-theoretic tree abstraction problem are equivalent or not; that is, if for every value of \(D \geq 0\) there exists a \(\beta \geq 0\) so that the tree solution to the two problems are equivalent in terms of the trade-off of information-retention and compression, and vice versa.
It is therefore our objective in this paper to formally establish whether the formulations~\eqref{eq:HCIBtreeProb} and~\eqref{eq:IBtreeProb} are equivalent and what, if any, relationship exists between these problems.
The importance in establishing whether or not the two formulations are equivalent is two-fold.
Firstly, providing a rigorous understanding relationship between the two formulations~\eqref{eq:HCIBtreeProb} and~\eqref{eq:IBtreeProb} furnishes a deeper understanding of the benefits and drawbacks to each approach, and insight into when one formulation is preferable over the other.
Secondly, an analysis into the two formulations~\eqref{eq:HCIBtreeProb} and~\eqref{eq:IBtreeProb} may give rise to methods for the selection of \(\beta \geq 0\) in~\eqref{eq:IBtreeProb} as a function of the hard-constraint \(D\geq 0\) in~\eqref{eq:HCIBtreeProb}.
As we will see, the selection of the trade-off parameter \(\beta \geq 0\) so that the resulting solution to~\eqref{eq:IBtreeProb} is feasible in the primal problem is challenging.
Consequently, the development of a method to select \(\beta \geq 0\) as a function of \(D \geq 0\) would allow hierarchical multi-resolution tree abstractions that are feasible in~\eqref{eq:HCIBtreeProb} to be obtained by employing search algorithms like Q-tree search~\cite{larsson2020q} that do not require solving a large integer program.
%

\section{Relaxation and Duality Theory for Tree Abstractions} \label{sec:RelaxAndDualTrees}

In this section, we seek to establish a formal connection between problems~\eqref{eq:HCIBtreeProb} and~\eqref{eq:IBtreeProb}, in addition to investigating whether the two formulations are equivalent in our problem setting (i.e., hierarchical tree abstractions).
To do so, we will leverage concepts from Lagrangian relaxation and duality theory, and show how \(\beta \geq 0\) in the problem~\eqref{eq:IBtreeProb} may be viewed as a dual variable of the primal problem~\eqref{eq:HCIBtreeProb}.
To this end, we note that problem~\eqref{eq:IBtreeProb} assumes that a value of \(\beta \geq 0\) is given, and does not explicitly depend on \(D \geq 0\).
Instead, the problem~\eqref{eq:IBtreeProb} is generally solved for a variety of \(\beta\) to generate a family of abstractions that trade information-retention and compression, whereby larger values of \(\beta\) result with multi-resolution trees that are more informative regarding \(Y\) than those obtained at lower \(\beta\) values.
To establish an explicit connection between \(D\) and \(\beta\), we introduce the \emph{dual function}~\cite{geoffrion1971duality,Lemarechal2001,hiriart-urruty_convex_1993} of the hard-constrained, primal problem~\eqref{eq:HCIBtreeProb} with respect to the constraint \(I_Y(\T) \geq D\) given by
%
\begin{equation}\label{eq:dualFunctionTreeIB}
	d(\beta) = \min\{L(\T;\beta) : \T\in\T^\Q\},
\end{equation}
where 
\begin{equation}\label{eq:TreeLagrangian}
	L(\T;\beta) = I_X(\T) +\beta (D - I_Y(\T)).
\end{equation}
In~\eqref{eq:dualFunctionTreeIB} and~\eqref{eq:TreeLagrangian}, the non-negative variable \(\beta\) is called the \emph{Lagrange multiplier} (or dual variable) associated with the constraint \(I_Y(\T) \geq D\).
The importance of~\eqref{eq:dualFunctionTreeIB} is that, provided that the values of \(\beta\) and \(D\) are given and fixed, the solution to~\eqref{eq:IBtreeProb} is equivalent to the solution of~\eqref{eq:dualFunctionTreeIB}, since the additional term, \(\beta D\), does not depend on the tree \(\T\in\T^\Q\).
As a result, for given \(D\) and \(\beta\), we may view~\eqref{eq:IBtreeProb} as a problem akin to the problem~\eqref{eq:dualFunctionTreeIB}, where the optimal value of~\eqref{eq:IBtreeProb} is \(f(\beta) = d(\beta) - \beta D\), but whose (tree) solution is the same as~\eqref{eq:dualFunctionTreeIB} for all \(\beta\).
Therefore, for fixed \(\beta\) and \(D\), any algorithm that provides a solution to~\eqref{eq:IBtreeProb} may also be employed to solve~\eqref{eq:dualFunctionTreeIB}.
It should be noted that the dual function~\eqref{eq:dualFunctionTreeIB} may be viewed as a \emph{Lagrangian relaxation}~\cite{Geoffrion2010,Lemarechal2001} of the primal problem~\eqref{eq:HCIBtreeProb} with respect to the constraint \(I_Y(\T) \geq D\), since, for any \(D\geq 0\) we have \(d(\beta) \leq v(D)\) for all \(\beta \geq 0\).
From the viewpoint of Lagrangian relaxation theory, the multiplier \(\beta \geq 0\) represents a ``price-to-be-paid" for violations of the constraint \(I_Y(\T) \geq D\).
In this way, the problem~\eqref{eq:dualFunctionTreeIB} may be viewed as a relaxation of~\eqref{eq:HCIBtreeProb} where the constraint \(I_Y(\T) \geq D\) is moved into the objective with weighing parameter \(\beta \geq 0\), and the subsequent optimization performed over the entirety of \(\T^\Q\) as opposed to the trees in the set \(\{\T\in\T^\Q : I_Y(\T) \geq D\} \subseteq \T^\Q\).
Consequently, solutions to~\eqref{eq:dualFunctionTreeIB} (or~\eqref{eq:IBtreeProb}) for a given value of \(\beta \geq 0\) need not be feasible in~\eqref{eq:HCIBtreeProb}, and are therefore generally not primal optimal.
In light of the above observations, it is clear that the choice of \(\beta \geq 0\) is important in terms of the tree solutions obtained by solving~\eqref{eq:dualFunctionTreeIB}.
Motivated by the observation that \(d(\beta) \leq v(D)\) for all \(\beta \geq 0\) and \(D \geq 0\), the dual problem~\cite{boyd2004convex,hiriart-urruty_convex_1993}
aims to find a suitable setting of \(\beta\geq 0\) by considering the problem
\begin{equation}\label{eq:IBtreeDualProblem}
	\max\{d(\beta) : \beta  \geq 0\}.
\end{equation}
Notice that for any \(\beta^*\in\argmax\{d(\beta):\beta\geq 0\}\), we have \(d(\beta)\leq d(\beta^*) \leq v(D)\).
The utility of the dual function~\eqref{eq:dualFunctionTreeIB} and corresponding dual problem~\eqref{eq:IBtreeDualProblem} lies in the fact that the former is a convex function of the dual variable \(\beta \geq 0\) and that, as a result, the latter is a convex optimization problem~\cite{boyd2004convex}.
Combining these observations with the fact that the dual function~\eqref{eq:dualFunctionTreeIB} is a relaxation of the primal problem~\eqref{eq:HCIBtreeProb}, we note that the utility of the dual problem lies in replacing a difficult-to-solve, non-convex and non-linear optimization problem~\eqref{eq:HCIBtreeProb} with a convex one given by~\eqref{eq:IBtreeDualProblem}.
However, it should be noted that in order to employ~\eqref{eq:dualFunctionTreeIB} to solve the primal~\eqref{eq:HCIBtreeProb}, we require that there exists a value of \(\beta \geq 0\) so that the tree solution to~\eqref{eq:dualFunctionTreeIB} is optimal in~\eqref{eq:HCIBtreeProb}.
As we will see, such a value of the dual variable need not exist.
To this end, the inequality \(d(\beta)\leq v(D)\leq I_X(\T)\) for all \(\beta \geq 0\) and primal feasible trees \(\T\in\T^\Q\) is known as the \emph{weak duality theorem}~\cite[p.~149]{hiriart-urruty_convex_1993}.
Furthermore, if there exists \(\beta^* \geq 0\) such that \(d(\beta^*) = v(D)\) then strong duality holds.
In contrast to weak duality, strong duality does not hold in general.
However, the following theorem provides sufficient conditions for strong duality to be established.
\begin{theorem}[\hspace{-1sp}\cite{geoffrion1971duality}]\label{thm:sufficientForStrongDuality}
	Assume \(D \geq 0\) is a given scalar.
	If the pair \((\T^*, \beta^*)\), where \(\T^*\in\T^\Q\), satisfies:
	\begin{enumerate}
		\item[(i).] \(\T^* \in \argmin L(\T;\beta^*),\)
		\item[(ii).] \(I_Y(\T^*) \geq D ~\text{and}~ \beta^* \geq 0,\)
		\item[(iii).] \(\beta^* (D - I_Y(\T^*)) = 0,\)
	\end{enumerate}
	then \(\T^*\) is primal optimal, \(\beta^*\) is a solution to~\eqref{eq:IBtreeDualProblem}, and \(d(\beta^*) = v(D) = I_X(\T^*)\).
\end{theorem}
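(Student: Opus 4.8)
The plan is to use the three hypotheses to pin down the value of the dual function at \(\beta^*\), and then to squeeze \(I_X(\T^*)\) between \(d(\beta^*)\) and \(v(D)\) using the weak duality inequality already established in the excerpt. Every hypothesis is used exactly once, so the argument is short once the pieces are placed correctly.

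First I would evaluate the dual function at \(\beta^*\). By hypothesis (i), the tree \(\T^*\) attains the minimum of \(L(\cdot\,;\beta^*)\) over \(\T^\Q\), so by the definition~\eqref{eq:dualFunctionTreeIB} we have \(d(\beta^*) = L(\T^*;\beta^*) = I_X(\T^*) + \beta^*(D - I_Y(\T^*))\). Invoking the complementary slackness condition (iii), the term \(\beta^*(D - I_Y(\T^*))\) vanishes, which yields \(d(\beta^*) = I_X(\T^*)\). Next I would record that \(\T^*\) is primal feasible: condition (ii) gives \(I_Y(\T^*) \geq D\), so \(\T^*\) lies in the feasible set of~\eqref{eq:HCIBtreeProb}, and therefore, since \(v(D)\) is the minimum of \(I_X\) over feasible trees, \(v(D) \leq I_X(\T^*)\).

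The final step is the squeeze. Weak duality, stated earlier as \(d(\beta) \leq v(D)\) for all \(\beta \geq 0\), applied at \(\beta^*\) gives \(d(\beta^*) \leq v(D)\). Chaining the three facts produces \(I_X(\T^*) = d(\beta^*) \leq v(D) \leq I_X(\T^*)\), which forces all inequalities to hold with equality, so \(d(\beta^*) = v(D) = I_X(\T^*)\). From this every conclusion follows immediately: \(\T^*\) is feasible with \(I_X(\T^*) = v(D)\), hence primal optimal by definition; and since weak duality gives \(d(\beta) \leq v(D) = d(\beta^*)\) for every \(\beta \geq 0\), the point \(\beta^*\) maximizes \(d\) over \(\beta \geq 0\) and therefore solves the dual problem~\eqref{eq:IBtreeDualProblem}.

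I do not anticipate a genuine obstacle here, as this is the classical sufficiency direction of Lagrangian duality specialized to the tree setting; the weak duality bound does all the heavy lifting and is already available. The only point requiring care is bookkeeping: keeping the direction of each inequality straight so that the chain closes into equalities, and confirming that conditions (i)--(iii) are the precise ingredients needed to collapse \(d(\beta^*)\) onto \(I_X(\T^*)\) and to certify feasibility, so that no additional convexity or regularity assumption on \(\T^\Q\) is secretly invoked.
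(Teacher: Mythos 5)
Your proof is correct. The paper does not prove this theorem at all---it is imported verbatim from the cited reference \cite{geoffrion1971duality} as one of the classical ``optimality conditions'' of Lagrangian duality---and your argument is exactly the standard sufficiency proof that reference gives: (i) and (iii) collapse \(d(\beta^*)\) onto \(I_X(\T^*)\), (ii) certifies primal feasibility so \(v(D)\le I_X(\T^*)\), and weak duality (which needs \(\beta^*\ge 0\) from (ii), so that hypothesis is in fact used twice) closes the chain \(I_X(\T^*)=d(\beta^*)\le v(D)\le I_X(\T^*)\) into equalities, from which primal optimality of \(\T^*\) and dual optimality of \(\beta^*\) both follow. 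No gap.
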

The conditions delineated by Theorem~\ref{thm:sufficientForStrongDuality} are called the \emph{optimality conditions}\cite{geoffrion1971duality}.
While Theorem~\ref{thm:sufficientForStrongDuality} provides sufficient conditions for strong duality to hold, we remark that the converse of Theorem~\ref{thm:sufficientForStrongDuality} is true if strong duality can be established, for example via Slater's constraint qualification in convex programming problems~\cite[p.~226]{boyd2004convex}.
That is, if strong duality holds, \(\T^*\in\T^\Q\) is primal optimal, and \(\beta^*\) is a solution to~\eqref{eq:IBtreeDualProblem} then conditions (i)-(iii) of Theorem~\ref{thm:sufficientForStrongDuality} hold~\cite{geoffrion1971duality}.

As we previously remarked, there need not, in general, exist a setting of the dual variable \(\beta \geq 0\) such that strong duality holds.
However, the important observation is that strong duality is a question of finding a value of \(\beta\geq 0\) for a given \(D \geq 0\) so that \(d(\beta) = v(D)\), since the converse is always true.
That is, for every \(\beta \geq 0\) there will always exist a setting of \(D \geq 0\) so that strong duality holds.
Before we state the theorem that formally establishes this observation, we remind the reader that, for given \(\beta \geq 0\), the tree solution to the problem~\eqref{eq:dualFunctionTreeIB} does not depend on \(D \geq 0\).
\begin{theorem}[\hspace{-1sp}{\cite[p.~163]{hiriart-urruty_convex_1993}}]\label{thm:betaToDthmEverett}
	Let \(\beta \geq 0\) be a given scalar and assume \(\T_\beta \in \T^\Q\) is the corresponding tree solution to~\eqref{eq:dualFunctionTreeIB}.
    Define \(D_\beta = I_Y(\T_\beta)\).
	Then \(\T_\beta\) is also an optimal solution to the problem
	\begin{equation*}
		v(D_\beta) = \min\{I_X(\T) : I_Y(\T) \geq D_\beta,~\T\in\T^\Q\},
	\end{equation*}
	and \(d(\beta) = v(D_\beta) = I_X(\T_\beta)\).
\end{theorem}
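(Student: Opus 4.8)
The plan is to verify that the pair $(\T_\beta, \beta)$ satisfies the three optimality conditions of Theorem~\ref{thm:sufficientForStrongDuality} when the hard-constraint level in~\eqref{eq:HCIBtreeProb} is taken to be $D = D_\beta$, and then to invoke that theorem to read off the conclusion directly. The essential preliminary observation, already noted following~\eqref{eq:TreeLagrangian}, is that the minimizer of the Lagrangian $L(\T;\beta) = I_X(\T) + \beta(D - I_Y(\T))$ over $\T \in \T^\Q$ does not depend on the constant $D$, since the additive term $\beta D$ is independent of the tree $\T$. Hence the solution $\T_\beta \in \argmin\{I_X(\T) - \beta I_Y(\T) : \T \in \T^\Q\}$ of~\eqref{eq:IBtreeProb} is simultaneously a minimizer of $L(\cdot\,;\beta)$ for \emph{every} value of $D \geq 0$, and in particular for the specific choice $D = D_\beta$.

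With this in hand, I would set $\T^* = \T_\beta$ and $\beta^* = \beta$, and check the three conditions against the primal problem instantiated at $D = D_\beta$. Condition (i), $\T^* \in \argmin L(\T;\beta^*)$, holds by the $D$-independence just described. Condition (ii), $I_Y(\T^*) \geq D$ together with $\beta^* \geq 0$, follows because $I_Y(\T_\beta) = D_\beta$ by the definition of $D_\beta$, so the constraint is met with equality, while $\beta \geq 0$ by hypothesis. Condition (iii), the complementary-slackness relation $\beta^*(D - I_Y(\T^*)) = 0$, is immediate since $D_\beta - I_Y(\T_\beta) = D_\beta - D_\beta = 0$, irrespective of the value of $\beta$.

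Having verified (i)--(iii), Theorem~\ref{thm:sufficientForStrongDuality} applies with the constraint level $D_\beta$ and yields at once that $\T_\beta$ is primal optimal for $v(D_\beta)$, that $\beta$ is a solution of the associated dual problem~\eqref{eq:IBtreeDualProblem}, and that $d(\beta) = v(D_\beta) = I_X(\T_\beta)$, which is precisely the asserted chain of equalities.

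The conceptual core of the argument is the recognition that the constraint level $D_\beta$ is \emph{defined} so that the candidate $\T_\beta$ meets the constraint with equality, which forces complementary slackness to hold automatically; there is thus no genuine analytical obstacle. The only point requiring care is condition (i): one must ensure that re-instantiating the Lagrangian with the new constant $D_\beta$ leaves the set of minimizers unchanged. This is guaranteed by the $D$-independence of $\argmin_{\T} L(\T;\beta)$, and it is the single step I would state explicitly to forestall any appearance of circularity.
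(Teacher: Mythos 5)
Your verification of conditions (i)--(iii) of Theorem~\ref{thm:sufficientForStrongDuality} at the constraint level \(D = D_\beta\), using the \(D\)-independence of \(\argmin_{\T} L(\T;\beta)\) and the fact that \(D_\beta\) is defined to make complementary slackness hold with equality, is correct and is precisely the observation the paper itself records in the remark immediately following the theorem (the theorem being otherwise cited from the literature). No gaps.
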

It should be noted that, with any tree \(\T_\beta \in \T^\Q\) obtained according to Theorem~\ref{thm:betaToDthmEverett}, the pair \((\T_\beta, \beta)\) satisfies (i)-(iii) of Theorem~\ref{thm:sufficientForStrongDuality}, since, by construction, \(\T_\beta \in \argmin\{L(\T;\beta):\T\in\T^\Q\}\), \(I_Y(\T_\beta) \geq D_\beta\), \(\beta \geq 0\), and \(\beta (D_\beta - I_Y(\T_\beta)) = 0\).

The importance of Theorem~\ref{thm:betaToDthmEverett} in our problem is that it establishes that the problems~\eqref{eq:HCIBtreeProb} and~\eqref{eq:dualFunctionTreeIB} have equivalent solutions (i.e., strong duality holds) for those values of \(D \geq 0\) which can be obtained by solving~\eqref{eq:dualFunctionTreeIB} (or, equivalently~\eqref{eq:IBtreeProb}) for a given value of \(\beta \geq 0\).
Our objective now is to determine whether strong duality holds in our problem for every value of \(D \geq 0\), not just those obtained in the manner delineated by Theorem~\ref{thm:betaToDthmEverett}.
To this end, we will next investigate the structure of solutions to~\eqref{eq:dualFunctionTreeIB} more closely to determine if the strong duality property holds in our problem.

Recall that, since the tree solution to problem~\eqref{eq:dualFunctionTreeIB} for a given value of \(\beta \geq 0\) does not depend on \(D \geq 0\), any method that finds a solution to~\eqref{eq:IBtreeProb} can be employed to solve~\eqref{eq:dualFunctionTreeIB}.
The above observation is more explicit if we write~\eqref{eq:dualFunctionTreeIB} as
\begin{equation}\label{eq:dualFunctionIBTrees}
	d(\beta) = \min\{I_X(\T) - \beta I_Y(\T): \T\in\T^\Q\} + \beta D.
\end{equation}
A solution to the problem \(\min\{I_X(\T) - \beta I_Y(\T): \T\in\T^\Q\}\) (i.e.,~\eqref{eq:IBtreeProb}) can be obtained by employing the Q-tree search algorithm~\cite{larsson2020q}.
To solve~\eqref{eq:IBtreeProb}, the Q-tree search algorithm prunes \(\T_\W\) by traversing the nodes of \(\T_\W\) and employing a node-wise pruning rule that relies on the so-called Q-function, defined according to the rule: if \(t\in\Nint(\T_\W)\) then
%
\begin{align}
Q(t;&\beta) = \nonumber\\
&~\min\{\Delta I_X(t) - \beta \Delta I_Y(t) + \sum_{t'\in\chd(t)}Q(t';\beta),~0\},\label{eq:QfunctionDef}
\end{align}
and where \(Q(t;\beta) = 0\) otherwise (i.e., \(t\in \Nlf(\T_\W)\)).
Observe that, by definition, \(Q(t;\beta)\leq 0\) for all \(t\in\N(\T_\W)\).
Q-tree search then builds an optimal solution to~\eqref{eq:IBtreeProb} by starting at the root node of the tree \(\T_\W\) and expanding nodes \(t\in\Nint(\T_\W)\) for which \(Q(t;\beta) < 0\) in a top-down manner, leaving those nodes \(t\in\N(\T_\W)\) with \(Q(t;\beta) = 0\) as leafs of the resulting tree solution.
The Q-tree search algorithm is guaranteed to find an optimal tree solution to~\eqref{eq:IBtreeProb}~\cite{larsson2020q}. 
Importantly, the value of the Q-function~\eqref{eq:QfunctionDef} may be related to the value of the objective in~\eqref{eq:IBtreeProb}.
To show how this is possible, we require the following.
\begin{lemma}[\hspace{-1sp}\cite{larsson2020q}]\label{lem:existenceOfTreeAndQfunction}
    Let \(\beta \geq 0\) be a given scalar, and \(t\) be any node in \(\N(\T_\W)\).
    Then \(Q(t;\beta) \leq \sum_{s \in \Nint(\T_{(t)})} \Delta I_X(s) - \beta \Delta I_Y(s)\) for all \(\T \in \T^\Q\).
    Moreover, there exits a tree \(\tilde \T\in\T^\Q\) such that \(Q(t;\beta) = \sum_{s\in\Nint(\tilde\T_{(t)})}\Delta I_X(s) - \beta \Delta I_Y(s)\).
\end{lemma}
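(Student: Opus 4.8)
The plan is to prove both assertions simultaneously by strong induction on the height of $t$ in $\T_\W$ (equivalently, by processing the nodes of $\T_\W$ bottom-up), mirroring the recursive definition of $Q(t;\beta)$ in~\eqref{eq:QfunctionDef}. Taken together, the two claims state that $Q(t;\beta) = \min_{\T\in\T^\Q} \sum_{s\in\Nint(\T_{(t)})} \Delta I_X(s) - \beta\Delta I_Y(s)$, where the minimization ranges over the admissible subtree configurations rooted at $t$: the first claim is the ``$\leq$'' direction holding for every $\T$, and the second exhibits a minimizer. The engine of the argument is the node-additive form of the tree information in~\eqref{eq:incrementalXinfoChange} and~\eqref{eq:incrementalYinfoChange}, which lets me split a sum over $\Nint(\T_{(t)})$ into the contribution of $t$ itself plus the sums over the child subtrees.

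For the base case I take $t\in\Nlf(\T_\W)$. Since $t$ cannot be subdivided, it is a leaf in every $\T\in\T^\Q$, so $\Nint(\T_{(t)})=\emptyset$ and the sum is $0$; as $Q(t;\beta)=0$ by definition, both the inequality and the equality hold, with $\tilde\T$ any admissible tree. For the inductive step with $t\in\Nint(\T_\W)$, I establish the inequality by fixing an arbitrary $\T\in\T^\Q$ and splitting into two cases. If $t$ is a leaf of $\T$, the sum is $0\geq Q(t;\beta)$, the inequality using $Q(t;\beta)\leq 0$ (noted immediately after~\eqref{eq:QfunctionDef}). If $t$ is interior in $\T$, its children in $\T$ coincide with $\chd(t)$ from $\T_\W$, and the additivity decomposition gives $\sum_{s\in\Nint(\T_{(t)})}\Delta I_X(s)-\beta\Delta I_Y(s) = \Delta I_X(t)-\beta\Delta I_Y(t) + \sum_{t'\in\chd(t)}\sum_{s\in\Nint(\T_{(t')})}\Delta I_X(s)-\beta\Delta I_Y(s)$; applying the induction hypothesis to lower-bound each inner sum by $Q(t';\beta)$, and then using that $Q(t;\beta)$ is the minimum of the resulting expression and $0$, yields $Q(t;\beta)\leq\sum_{s\in\Nint(\T_{(t)})}\Delta I_X(s)-\beta\Delta I_Y(s)$.

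For achievability I read off the minimizer from which branch of the $\min$ in~\eqref{eq:QfunctionDef} is active. If the untruncated quantity $\Delta I_X(t)-\beta\Delta I_Y(t)+\sum_{t'\in\chd(t)}Q(t';\beta)$ is nonnegative, then $Q(t;\beta)=0$ and I take $\tilde\T$ to keep $t$ as a leaf, so $\Nint(\tilde\T_{(t)})=\emptyset$ and the sum is $0$. Otherwise $Q(t;\beta)$ equals that quantity, and I build $\tilde\T$ by expanding $t$ and grafting, at each child $t'$, the optimal subtree furnished by the induction hypothesis (the one attaining $Q(t';\beta)$); the additivity decomposition then telescopes the sum exactly to $Q(t;\beta)$. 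In either case the portion of the tree outside $\T_{(t)}$ may be completed by any admissible choice, since it does not enter the sum, which ensures $\tilde\T\in\T^\Q$.

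I expect the main obstacle to be the bookkeeping in the achievability construction: recognizing that the $\min\{\cdot,0\}$ truncation in~\eqref{eq:QfunctionDef} is exactly the decision to prune $t$ to a leaf versus expand it, and verifying that grafting the child-optimal subtrees produces a single admissible tree in $\T^\Q$ whose interior-node sum matches $Q(t;\beta)$ with no cross terms. The inequality direction is comparatively routine once the node-additive decomposition of~\eqref{eq:incrementalXinfoChange} and~\eqref{eq:incrementalYinfoChange} is in hand.
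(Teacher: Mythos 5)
Your proof is correct: the bottom-up induction on node height, splitting the interior-node sum into the contribution of \(t\) plus the child-subtree sums via~\eqref{eq:incrementalXinfoChange}--\eqref{eq:incrementalYinfoChange}, and reading the minimizer off the active branch of the \(\min\{\cdot,0\}\) in~\eqref{eq:QfunctionDef}, is exactly the standard dynamic-programming argument this lemma calls for. The paper itself does not reproduce a proof (it defers to Lemma~1 of~\cite{larsson2020q}), so there is nothing in-text to diverge from; the only cosmetic addition worth making is to fold the case where \(t\) does not appear in \(\T\) at all into your ``\(t\) is a leaf of \(\T\)'' case, since the sum is again empty and \(Q(t;\beta)\leq 0\) closes it.
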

See Lemma 1 of~\cite{larsson2020q} for the proof of Lemma~\ref{lem:existenceOfTreeAndQfunction}.
What this lemma says is that, for any node \(t\), there exists a tree \(\tilde\T \in \T^\Q\) for which node-wise the Q-value equals the sum of weighted incremental information contributions of all nodes descendant from \(t\) in the tree \(\tilde \T\).
In addition, Lemma~\ref{lem:existenceOfTreeAndQfunction} states that, for any other tree \(\T\in\T^\Q\), the Q-value for the node \(t\) is a lower bound to the weighted incremental information sum over all descendants of \(t\) in the (alternative) tree \(\T\).
We now have the following result, which relates the Q-function~\eqref{eq:QfunctionDef} to the value of~\eqref{eq:IBtreeProb} as a function of \(\beta\).
\begin{proposition}\label{prop:QfunctionAndObjIBTree}
	Let \(\tRoot \in \N(\T_\W)\) be the root node of the tree \(\T_\W\).
	Then
	\begin{equation*}
		Q(\tRoot;\beta) = \min\{I_X(\T) -\beta I_Y(\T):\T\in\T^\Q\}.
	\end{equation*}
\end{proposition}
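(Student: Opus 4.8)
The plan is to sandwich $Q(\tRoot;\beta)$ between the two sides of the claimed identity using Lemma~\ref{lem:existenceOfTreeAndQfunction} applied at the root node, together with the additive decompositions~\eqref{eq:incrementalXinfoChange} and~\eqref{eq:incrementalYinfoChange}. The crucial structural observation is that the subtree rooted at $\tRoot$ is the entire tree, i.e., $\T_{(\tRoot)} = \T$ and hence $\Nint(\T_{(\tRoot)}) = \Nint(\T)$ for every $\T \in \T^\Q$. Combined with~\eqref{eq:incrementalXinfoChange} and~\eqref{eq:incrementalYinfoChange}, this lets me rewrite the weighted incremental-information sum appearing in Lemma~\ref{lem:existenceOfTreeAndQfunction} as the objective of~\eqref{eq:IBtreeProb}:
\[
\sum_{s \in \Nint(\T_{(\tRoot)})} \bigl(\Delta I_X(s) - \beta \Delta I_Y(s)\bigr) = I_X(\T) - \beta I_Y(\T).
\]

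For the lower bound (the $\leq$ direction), I would invoke the first claim of Lemma~\ref{lem:existenceOfTreeAndQfunction} with $t = \tRoot$, which gives $Q(\tRoot;\beta) \leq \sum_{s \in \Nint(\T_{(\tRoot)})}(\Delta I_X(s) - \beta \Delta I_Y(s))$ for every $\T \in \T^\Q$. By the rewriting above, the right-hand side equals $I_X(\T) - \beta I_Y(\T)$, so $Q(\tRoot;\beta)$ is a lower bound on this objective uniformly in $\T$, and is therefore a lower bound on its minimum over $\T \in \T^\Q$.

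For the reverse bound (the $\geq$ direction), I would use the second claim of Lemma~\ref{lem:existenceOfTreeAndQfunction}: there exists a tree $\tilde\T \in \T^\Q$ attaining the equality $Q(\tRoot;\beta) = \sum_{s \in \Nint(\tilde\T_{(\tRoot)})}(\Delta I_X(s) - \beta \Delta I_Y(s)) = I_X(\tilde\T) - \beta I_Y(\tilde\T)$. Since $\tilde\T$ is an admissible competitor in the minimization, its objective value, which equals $Q(\tRoot;\beta)$, is at least the minimum, giving $Q(\tRoot;\beta) \geq \min\{I_X(\T) - \beta I_Y(\T) : \T \in \T^\Q\}$. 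The two inequalities together yield the claimed equality.

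I do not anticipate a genuine obstacle here; the result is essentially a corollary of Lemma~\ref{lem:existenceOfTreeAndQfunction} once one recognizes that the root subtree is the whole tree. The only points requiring care are bookkeeping ones: confirming that the index set $\Nint(\T_{(\tRoot)})$ of the lemma coincides with $\Nint(\T)$ so that the incremental decompositions~\eqref{eq:incrementalXinfoChange} and~\eqref{eq:incrementalYinfoChange} apply verbatim, and verifying that the tree $\tilde\T$ furnished by the lemma is itself an element of $\T^\Q$ so that it may legitimately enter the minimization.
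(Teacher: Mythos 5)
Your proposal is correct and follows essentially the same route as the paper's proof: both directions are obtained by applying the two parts of Lemma~\ref{lem:existenceOfTreeAndQfunction} at the root, together with the observation that $\Nint(\T_{(\tRoot)}) = \Nint(\T)$ so that the weighted incremental sums become $I_X(\T) - \beta I_Y(\T)$ via~\eqref{eq:incrementalXinfoChange} and~\eqref{eq:incrementalYinfoChange}. The only cosmetic difference is that the paper names an explicit minimizer $\T_\beta$ before sandwiching, whereas you argue directly with the minimum; the content is identical.
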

\begin{proof}
    The proof is given in Appendix~\ref{app:QfunctionAndObjIBTreePrf}.
\end{proof}
By employing Proposition~\ref{prop:QfunctionAndObjIBTree}, we note that we can write the dual function~\eqref{eq:dualFunctionTreeIB} as 
\begin{equation}\label{eq:dualFunctionQ}
	d(\beta) = Q(\tRoot;\beta) + \beta D.
\end{equation}
It is important to note that since the Q-tree search method computes the Q-values according to~\eqref{eq:QfunctionDef} in a bottom-up manner directly from the input data \(p(x,y)\) and \(\W\), the value of~\eqref{eq:dualFunctionQ} can be readily determined for a given value of \(\beta\) by querying the root-node Q-value once the bottom-up pass is completed, and therefore does not require solving an optimization problem.
Moreover, a tree \(\T_\beta\in\T^\Q\) that attains the value of~\eqref{eq:dualFunctionQ} for a given value of \(\beta \geq 0\) may be obtained by executing the Q-tree search algorithm.
Furthermore, by applying Proposition~\ref{prop:QfunctionAndObjIBTree} and~\eqref{eq:incrementalXinfoChange}-\eqref{eq:incrementalYinfoChange} to~\eqref{eq:dualFunctionQ}, we see that the dual function~\eqref{eq:dualFunctionTreeIB} in our problem may be written as
\begin{equation}\label{eq:dualIBtreeFunctionWithDeltaInfo}
d(\beta) = \sum_{t\in\Nint(\T_\beta)} \Delta I_X(t) + \beta(D - \sum_{t\in\Nint(\T_\beta)} \Delta I_Y(t)),
\end{equation}
where \(\T_\beta \in\T^\Q\) is the tree solution to~\eqref{eq:dualFunctionTreeIB} for the given \(\beta \geq 0\).
Next, we introduce and discuss tree phase transitions, their relevance to our problem, and  develop an algorithm that leverages the mechanism of Q-tree search for their computation.
We will then show the connection of tree phase transitions to solutions of the dual problem~\eqref{eq:IBtreeDualProblem} and utilize our observations to establish that strong duality does not hold in our problem setting. 
%

\section{Phase Transitions, Solving the Dual Problem, and Connections to Greedy Search}\label{sec:PTsAndDualOpt}

In this section, we will utilize the ideas from the previous section in order to develop an algorithm that computes the \(\beta\geq 0\) value that solves the dual problem~\eqref{eq:IBtreeDualProblem}.
The algorithm we design exploits the structure of the dual function~\eqref{eq:dualFunctionTreeIB} via the relations~\eqref{eq:dualFunctionQ} and~\eqref{eq:dualIBtreeFunctionWithDeltaInfo}.
Before discussing the development of such an algorithm, we need to examine the Q-function~\eqref{eq:QfunctionDef} in detail to provide greater insight into our algorithmic developments that follow.

\subsection{Critical \(\beta\) values for Q-tree search}\label{subsec:criticalBeta}

By inspecting~\eqref{eq:dualFunctionQ}, it is clear that the Q-values play an important role in the structure of the dual function.
To this end, in this section, we discuss the Q-values in greater detail and define the concept of \(\beta\)-critical value that will play an important role in the development of an algorithm to solve the dual problem~\eqref{eq:IBtreeDualProblem}.
The following property of the Q-function is vital to our later developments.
\begin{proposition}\label{prop:QfunctionMonotone}
	The Q-function \(Q(t;\beta)\) is monotone decreasing with respect to \(\beta\).
	That is, if \(\beta_1 \geq \beta_2\) then \(Q(t;\beta_1)\leq Q(t;\beta_2)\) for all nodes \(t \in \N(\T_\W)\).
\end{proposition}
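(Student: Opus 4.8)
The plan is to prove the claim by structural (bottom-up) induction on the nodes of \(\T_\W\), ordered by height, i.e., distance from the leaves. The recursive definition~\eqref{eq:QfunctionDef} is perfectly suited to this, since the Q-value at a node is expressed in terms of the Q-values at its immediate children, and leaves serve as the base of the recursion. Throughout, I would rely on the single structural fact that the incremental information contributions are non-negative, namely \(\Delta I_Y(t) \geq 0\) for every \(t \in \Nint(\T_\W)\), as recorded in the discussion surrounding~\eqref{eq:incrementalYinfoChange}.

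For the base case, I would take \(t \in \Nlf(\T_\W)\). Here \(Q(t;\beta) = 0\) by definition for every \(\beta\), so \(Q(t;\beta_1) = 0 = Q(t;\beta_2)\), and the desired inequality \(Q(t;\beta_1) \leq Q(t;\beta_2)\) holds trivially. For the inductive step, I would fix an interior node \(t \in \Nint(\T_\W)\) and assume, as the induction hypothesis, that \(Q(t';\beta_1) \leq Q(t';\beta_2)\) holds for every child \(t' \in \chd(t)\). Introducing the quantity inside the minimum in~\eqref{eq:QfunctionDef},
\begin{equation*}
g(t;\beta) = \Delta I_X(t) - \beta \Delta I_Y(t) + \sum_{t' \in \chd(t)} Q(t';\beta),
\end{equation*}
I would argue that \(g(t;\beta_1) \leq g(t;\beta_2)\) whenever \(\beta_1 \geq \beta_2\). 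This follows by comparing the three contributions term by term: \(\Delta I_X(t)\) is independent of \(\beta\); the term \(-\beta \Delta I_Y(t)\) is non-increasing in \(\beta\) precisely because \(\Delta I_Y(t) \geq 0\), so \(-\beta_1 \Delta I_Y(t) \leq -\beta_2 \Delta I_Y(t)\); and the summed child terms satisfy \(\sum_{t'} Q(t';\beta_1) \leq \sum_{t'} Q(t';\beta_2)\) by the induction hypothesis, summing the per-child inequalities.

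It then remains to pass from \(g\) to \(Q(t;\beta) = \min\{g(t;\beta), 0\}\). Since the map \(a \mapsto \min\{a,0\}\) is non-decreasing, the inequality \(g(t;\beta_1) \leq g(t;\beta_2)\) yields \(\min\{g(t;\beta_1),0\} \leq \min\{g(t;\beta_2),0\}\), that is, \(Q(t;\beta_1) \leq Q(t;\beta_2)\), closing the induction. Honestly, this argument is essentially routine and I do not anticipate a genuine obstacle; the only point requiring a moment of care is ensuring the whole argument of the minimum, not merely its individual pieces, is monotone, which is why I would isolate \(g(t;\beta)\) explicitly and combine the explicit \(-\beta\Delta I_Y(t)\) monotonicity with the inductive hypothesis on the children before applying the (monotone) truncation \(\min\{\cdot,0\}\). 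Because the argument is uniform over \(t\), the conclusion holds for all \(t \in \N(\T_\W)\), as claimed.
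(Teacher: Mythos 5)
Your proof is correct and follows essentially the same route as the paper's: a bottom-up induction using the non-negativity of \(\Delta I_Y(t)\) to get monotonicity of the inner expression, followed by the monotone truncation \(\min\{\cdot,0\}\). The only cosmetic difference is that you base the induction at the leaves (where \(Q \equiv 0\)) while the paper bases it at depth \(\ell-1\); both are fine.
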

\begin{proof}
    The proof is given in Appendix~\ref{app:QfunctionMonotonePrf}.
\end{proof}
In light of Proposition~\ref{prop:QfunctionMonotone}, and the fact that \(Q(t;\beta) \leq 0\) for all \(\beta \geq 0\), we define the \emph{critical \(\beta\) value} for each node \(t\in\N(\T_\W)\) as
\begin{equation}\label{eq:betaCriticalQ}
	\betacrq(t) = 
	\begin{cases}
		\sup\{\beta \geq 0: Q(t;\beta) = 0\}, &\text{ if } t\in\Nint(\T_\W),\\
		\infty, & \text{ otherwise.}
	\end{cases}
\end{equation}
It should be noted that the critical \(\beta\) value defined in~\eqref{eq:betaCriticalQ} is distinct from the critical \(\beta\) value discussed in~\cite{larsson2022generalized}.
More specifically, the authors of~\cite{larsson2022generalized} define \(\beta\)-critical for any \(t\in\N(\T_\W)\) according to
\begin{equation}\label{eq:criticalBetaOneStep}
	\betacr(t) = 
	\begin{cases}
		\frac{\Delta I_X(t)}{\Delta I_Y(t)} & \text{ if } t\in\Nint(\T_\W) \text{ and } \Delta I_Y(t) > 0,\\
		\infty, & \text{ otherwise}. 
	\end{cases}
\end{equation}
The critical \(\beta\)-value defined in~\eqref{eq:criticalBetaOneStep} is the value of \(\beta \geq 0\) for which \(\Delta I_X(t) - \beta \Delta I_Y(t) = 0\).
However, the critical \(\beta\)-value defined according to the Q-function in~\eqref{eq:betaCriticalQ} captures the interplay of the one-step cost \(\Delta I_X(t) - \beta I_Y(t)\) \emph{and} the effect of the child Q-values \(Q(t';\beta)\) for \(t'\in\chd(t)\) on the value of \(\beta\)-critical. 

As a result of Proposition~\ref{prop:QfunctionMonotone}, it follows that for each node \(t\) such that \(\betacrq(t)<\infty\), the value of \(\betacrq(t)\) partitions the non-negative real line into two intervals; namely, the interval \([0,\betacrq(t)]\) for which \(Q(t;\beta) = 0\) and, the interval \((\betacrq(t),\infty)\) for which \(Q(t;\beta) < 0\).
If, instead, \(\betacrq(t) = \infty\) then all values of \(\beta \geq 0\) will result in \(Q(t;\beta) = 0\).
To establish a connection between \(\betacrq(t)\) and \(\betacr(t)\), we note that, since \(Q(t;\beta) \leq 0\) for all \(t\) and \(\beta \geq 0\), it follows that
\begin{align}\label{key}
	\Delta I_X(t) &- \beta \Delta I_Y(t) \nonumber\\
	&\geq \Delta I_X(t) - \beta \Delta I_Y(t) + \sum_{t'\in\chd(t)} Q(t';\beta), \nonumber\\
	&\geq Q(t;\beta),
\end{align}
where the final inequality follows from the definition of \(Q(t;\beta)\) in~\eqref{eq:QfunctionDef}.
Therefore, if \(t\) is a node such that \(\betacr(t) < \infty\), then we have \(\Delta I_X(t) - \betacr(t) \Delta I_Y(t) = 0\), and so \(Q(t;\betacr(t)) \leq 0\).
Therefore, by employing Proposition~\ref{prop:QfunctionMonotone}, we conclude that 
\(\betacrq(t) \leq \betacr(t)\).
Note that the above result can be straightforwardly extended to all nodes, as for any node \(t\) for which \(\betacr(t) = \infty\), it straightforwardly holds that \(\betacrq(t) \leq \betacr(t)\).

The observation that \(\betacrq(t) \leq \betacr(t)\) provides additional insight as to why Q-tree search is able to find solutions that other methods to solve~\eqref{eq:IBtreeProb}, such as a greedy one-step look-ahead approach, are not able to.
In more detail, a greedy approach towards solving~\eqref{eq:IBtreeProb} expands nodes not according to the Q-values, but rather with respect to the one-step cost \(\Delta I_X(t) - \beta \Delta I_Y(t)\).
Thus, a greedy approach to solving~\eqref{eq:IBtreeProb} will traverse nodes top-down, identically to Q-tree search but expand nodes \(t\) only if \(\Delta I_X(t) - \beta I_Y(t) < 0\).
While simple, a greedy approach is known to not find an optimal solution to~\eqref{eq:IBtreeProb}~\cite{larsson2022generalized}.
Both the Q-tree search and greedy approaches may be alternatively viewed in terms of the node-wise \(\betacrq(t)\) and \(\betacr(t)\) functions, respectively, instead of the values of \(Q(t;\beta)\) or the one-step cost \(\Delta I_X(t) - \beta\Delta I_Y(t)\).
Namely, for a given value of \(\beta \geq 0\), the Q-tree search and the greedy approach will expand nodes \(t\) that the algorithms reach for which \(\beta > \betacrq(t)\) and \(\beta > \betacr(t)\), respectively.
From this point of view, we see that, if for some node \(t\) we have \(\beta > \betacr(t)\), meaning that the node is expanded by the greedy approach, then it follows that \(\betacrq(t) \leq \betacr(t) < \beta\), and therefore the node is also expanded by Q-tree search.
However, if for some \(\beta \geq 0\) both algorithms reach a node \(t\) that satisfies \(\betacrq(t) < \beta \leq \betacr(t)\), then the node will be expanded by Q-tree search but not by the greedy search (see Section~6.1 of~\cite{larsson2022generalized}).
The above discussion provides insights as to why Q-tree search is able to find trees that are not obtainable by the one-step greedy approach and furnishes a dual perspective of existing results, such as Theorem~1 of~\cite{larsson2020q} (i.e., that the tree returned by the greedy search is a subtree of the tree found by Q-tree search).
Next, we will utilize these observations to develop an algorithm to solve the dual problem~\eqref{eq:IBtreeDualProblem}.
%

\subsection{Identifying Phase Transitions and an Algorithm to Solve the Dual Problem}\label{subsec:IDphaseTransitions}

To develop an algorithm to solve~\eqref{eq:IBtreeDualProblem} we will leverage our previously discussed observations, especially as they relate to so-called \emph{phase transitions} of the tree.
Specifically, a tree phase transition occurs when the tree solution to~\eqref{eq:IBtreeProb}, or equivalently~\eqref{eq:dualFunctionTreeIB}, changes as a function of \(\beta \geq 0\), as formalized by the following definition.
%
\begin{definition}\label{def:treePT}
    For any \(\beta \geq 0\), let \(\T_\beta \in\T^\Q\) be a minimal solution\footnote{A tree \(\tilde \T \in \T^\Q\) is a minimal solution to~\eqref{eq:IBtreeProb} if \(I_X(\T) - \beta I_Y(\T) < I_X(\tilde \T) -\beta I_Y(\tilde \T)\) for all trees \(\T\in\T^\Q\) that are subtrees of \(\tilde \T\). See~\cite{larsson2020q,larsson2022generalized} for more details.} to~\eqref{eq:IBtreeProb}.
    Then \(\beta\geq 0\) is a \emph{tree phase transition} if
    \begin{equation*}
        \Nint(\T_{\beta}) \subsetneq \Nint(\T_{\beta+\varepsilon}),
    \end{equation*}
    holds for all \(\varepsilon > 0\).
    The set comprising all tree phase transitions is denoted \(\betaPhaseTransitionSet\).
\end{definition}

We may draw a connection between tree phase transitions and the discussion in the previous section as follows. 
Namely, Definition~\ref{def:treePT} establishes that a tree phase transition is a value of \(\beta \geq 0\) for which the tree solution to problem~\eqref{eq:IBtreeProb} when the trade-off parameter has value \(\beta + \varepsilon\) is not the same as the tree solution to~\eqref{eq:IBtreeProb} of the unperturbed problem, where the \(\beta\)-perturbations may be arbitrarily small.
Since it is known that Q-tree search returns an optimal solution to~\eqref{eq:IBtreeProb} for any value of \(\beta \geq 0\), we may exploit the mechanism of Q-tree search to draw a connection between the Q-values of select nodes in \(\T_\beta\) and the values of tree phase transitions.

Specifically, we note that, if a tree solution, say \(\T_{\beta}\in\T^\Q\), is known for for some \(\beta \geq 0\) from executing Q-tree search, then we may gather the values of \(\betacrq(t)\) for each \(t\in\Nlf(\T_\beta)\).
It follows, by definition of \(\betacrq(t)\),  that the mechanism of the Q-tree search algorithm, and Proposition~\ref{prop:QfunctionMonotone} that the tree solution to~\eqref{eq:IBtreeProb} will not change for any \(\beta'\) in the interval \(\beta \leq \beta' \leq \bar\beta\), where \(\bar \beta = \min\{\betacrq(t):t\in\Nlf(\T_{\beta})\}\), since \(Q(t;\beta') = 0\) for all \(t\in\Nlf(\T_{\beta})\).
Consequently, the value of \(\bar \beta = \min\{\betacrq(t):t\in\Nlf(\T_{\beta})\}\) is a tree phase transition, since any perturbation, no matter how small, of \(\bar \beta\) will result in an expansion of a leaf node of \(\T_{\beta}\), thereby changing the tree.
Thus, to determine the tree phase transitions, it is sufficient to consider only those values of \(\beta\) in the set \(\betacrqset\). 
The challenge, however, is that the solutions \(\T_\beta\in\T^\Q\) to~\eqref{eq:IBtreeProb} for any \(\beta \geq 0\) are not known a priori.
If they were, the above procedure could be followed to generate \(\betaPhaseTransitionSet\).
Our challenge is therefore two-fold: (1) determine \(\betacrq(t)\) for any \(t\in\N(\T_\W)\), and (2) establish which values of \(\betacrq(t)\) constitute a tree phase transition.

Before proceeding, we recall that \(\betaPhaseTransitionSet\) will be a subset of \(\betacrqset\).
Therefore, not all critical-\(\beta\) values will constitute a phase transition.
The reason for this follows again from the mechanism of the Q-tree search method.
Namely, since it is possible for \(\betacrq(t) > \betacrq(t')\) for some ancestor \(t\) of \(t'\), and since Q-tree search expands nodes in a top-down fashion, in this case, that as soon as the value of \(\beta\) is large enough to cause expansion of the node \(t\), that is \(\beta > \betacrq(t)\), the descendant \(t'\) will be expanded should it be reached during the search.
As a result, the algorithm will not terminate with \(t'\) as a leaf node of any solution to~\eqref{eq:IBtreeProb} and thus the value of \(\betacrq(t')\) will not constitute a phase transition.
Consequently, \(\betaPhaseTransitionSet\) is a subset of \(\betacrqset\).
Our challenge now is to devise an algorithm that computes \(\betacrqset\) and determine which of those \(\betacrq(t)\in\betacrqset\) constitute phase transitions to build the set \(\betaPhaseTransitionSet\) from input data.
With the above discussion in mind, it is important to note that the phase transitions are vital to solving the dual problem~\eqref{eq:IBtreeDualProblem}.
To see why this is the case, we note that relation~\eqref{eq:dualIBtreeFunctionWithDeltaInfo} indicates that the dual function is a piece-wise linear function of \(\beta\), which is differentiable at all points except those for which a tree phase transition occurs.
Consequently, if we are able to identify for what values of \(\beta\) a tree phase transition occurs, then we may utilize information of the subgradient to find a value of \(\beta\) that solves~\eqref{eq:IBtreeDualProblem}.
Thus, the idea of our proposed algorithm is to identify the tree phase transitions in a recursive manner, thereby determining where the solution to~\eqref{eq:dualFunctionTreeIB} changes as a function of \(\beta\), and then employ this knowledge to find the setting of \(\beta\) that maximizes \(d(\beta)\) as a function of \(D\) by exploiting relation~\eqref{eq:dualIBtreeFunctionWithDeltaInfo}.

%
\begin{algorithm}
	\caption{\(\texttt{treePhaseTransitions}(p(x,y),\T_\W)\).}\label{alg:phaseTransitionAlgorithm}
	\KwData{\(p(x,y)\), \(\T_\W\)}
	\KwResult{\(\betaPhaseTransitionSetNode{\tRoot},~Q_{X,\tRoot},~Q_{Y,\tRoot}\)}
	
	Set \(Q_{X,t} = \varnothing, Q_{Y,t} = \varnothing, \betaPhaseTransitionSetNode{t} = \varnothing\) for all \(t\)\;

	Set \(k = \ell -1 \)\;
	
	\While{\(k \geq 0\)}{
		\For{\(t\in\N_k(\T_\W)\)}
		{
			compute \(\betacr(t)\) according to~\eqref{eq:criticalBetaOneStep}\;
			
			\((\bar{\mathcal B}_{\text{PT},t},\bar{Q}_{X,t},\bar{Q}_{Y,t}) = \texttt{subtreePTs}(t)\)\; \label{algPT:getsubtreeInfo}
			
			\If{\(\betacr(t) < \min\{\beta : \beta \in \bar{\mathcal B}_{\text{PT},t}\}\)}
			{
				\label{algPT:betacrCheck1}
				\(\betacrq(t) = \betacr(t)\)\;
				\(dX_{\text{PT}} = \Delta I_X(t), dY_{\text{PT}} = \Delta I_Y(t)\)\;
			}\Else{
			
			\(\bar{\mathcal B}_{\text{PT},t} \leftarrow \texttt{sort}(\bar{\mathcal B}_{\text{PT},t})\)\; \label{algPT:sortingBetaChildren}
			
			sort \(\bar{Q}_{X,t}\) and \(\bar{Q}_{Y,t} \) according to \(\bar{\mathcal B}_{\text{PT},t}\).
			
			\(dX \leftarrow \Delta I_X(t),~dY\leftarrow \Delta I_Y(t)\)\;

			\For{\(i = 1:(\lvert\bar{\mathcal B}_{\text{PT},t}\rvert)\)}{
				\((\beta_{\text{lb}},\beta_{\text{ub}}) \leftarrow \texttt{betaBnds}(\bar{\mathcal B}_{\text{PT},t},i)\)\;\label{algPT:betaBndsGen}
			
			\(dX = dX + [\bar{Q}_{X,t}]_i\)\;\label{algPT:runningXinformation}
			\(dY = dY + [\bar{Q}_{Y,t}]_i\)\;\label{algPT:runningYinformation}
			
			\If{\(dY > 0\)}{
				\(\bar{\beta} = dX / dY\)\;\label{algPT:generateCandidateBeta1}
			}\Else{
				\(\bar{\beta} = \infty\)\;\label{algPT:generateCandidateBeta2}
			}

			\If{\(\beta_{\text{lb}} \leq \bar{\beta} \leq \beta_{\text{ub}}\)}
			{
				\If{\(\beta_{\text{lb}} \leq \betacr(t) \leq \beta_{\text{ub}}\)}
				{
					\(\betacrq(t) = \min\{\bar{\beta},\betacr(t)\}\)\;\label{algPT:betacrCheck2}
					
					\If{\(\betacrq(t) < \betacr(t)\)}{
						\(dX_{\text{PT}} = dX\)\;
						\(~dY_{\text{PT}} = dY\)\;
					}\Else{
						\(dX_{\text{PT}} = \Delta I_X(t)\)\;
						\(dY_{\text{PY}} = \Delta I_Y(t)\)\;
					}
				}\Else{
					\(\betacrq(t) = \bar{\beta}\)\;
					\(dX_{\text{PT}} = dX\)\;
					\(dY_{\text{PT}} = dY\)\;
				}
				\texttt{\textbf{break}}\;\label{algPT:candidateSolnBreak1}
			}\ElseIf{\(\beta_{\text{lb}} \leq \betacr(t) \leq \beta_{\text{ub}}\)}
			{
				\(\betacrq(t) = \betacr(t)\)\;
				\(dX_{\text{PT}} = \Delta I_X(t), dY_{\text{PT}} = \Delta I_Y(t)\)\;
				\texttt{\textbf{break}}\;\label{algPT:candidateSolnBreak2}
			}			
			}	
		}		
			
			\(\mathcal I = \{j: \betacrq(t) \leq  [\bar{\mathcal B}_{\text{PT},t}]_j <\infty \}\)\;\label{algPT:determineChildPT}
			\((\betaPhaseTransitionSetNode{t}, Q_{X,t},Q_{Y,t})\leftarrow\texttt{setPTData}({\mathcal I}, \bar Q_{X,t}, \bar Q_{Y,t}, dX_{\text{PT}},dY_{\text{PT}})\)\;\label{algPT:setData}
			
		}
		\(k = k - 1\)\;
	}
\end{algorithm}
Our proposed algorithm to determine the tree phase transitions is illustrated in Algorithm~\ref{alg:phaseTransitionAlgorithm}.
The general idea of the algorithm is to traverse the tree \(\T_\W\) in a bottom-up manner to determine the value of \(\betacrq(t)\) for each node, and to subsequently decide which, if any, of the candidate phase transitions in the set \(\{\betacrq(t'):t'\text{ is a descendant of } t\}\) may be excluded from consideration.
To do so, Algorithm~\ref{alg:phaseTransitionAlgorithm} finds the \(X\)- and \(Y\)-information of the subtrees of the candidate solutions to~\eqref{eq:IBtreeProb} that is descendant from the node \(t\), and utilizes this information to determine a candidate phase transition. 
The aforementioned process can be done in a bottom-up manner, as described next.
%

%
\begin{figure}[t]
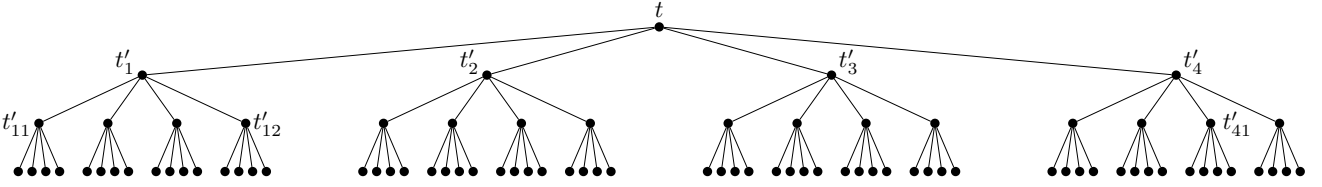

	\begin{adjustbox}{max size={0.48\textwidth}}

	\end{adjustbox}
	\caption{Example of a subtree rooted at \(t\). While the value of \(\betacrq(t)\) is unknown, we assume that the values of \(\betacrq(t')\) for all descendants \(t'\) of \(t\) are known and satisfy \(\betacrq(t'_1)\leq\betacrq(t'_3)\leq\betacrq(t'_4)\leq\betacrq(t'_{11})\leq\betacrq(t'_{41})\leq\betacrq(t'_{12})\leq\betacrq(t'_2)\). Nodes not explicitly represented are assumed to have a critical \(\beta\)-value that is less than that of one of its ancestors, and therefore do not constitute a candidate phase transition.}
	\label{fig:ptAlgTreeMechanism}
\end{figure*}

%
To this end, Algorithm~\ref{alg:phaseTransitionAlgorithm} first considers nodes \(t\in\N_{\ell-1}(\T_\W)\), since \(\betacrq(t) = \infty\) for all \(t \in\Nlf(\T_\W)\) since the leafs of \(\T_\W\) are not expandable nodes. 
Thus, for any node \(t\in\N_{\ell-1}(\T_\W)\), we have that \(\betacrq(t) = \Delta I_X(t) / \Delta I_Y(t)\) if \(\Delta I_Y(t) > 0\), where \(\betacrq(t) = \infty\) otherwise, which follows from the definition of the Q-function.
That is, for \(t\in\N_{\ell-1}(\T_\W)\) we have, from~\eqref{eq:QfunctionDef},
\begin{equation}
    Q(t;\beta) = \min\{\Delta I_X(t) - \beta \Delta I_Y(t),~0\},
\end{equation}
and therefore \(\betacrq(t)\) is the value of the trade-off parameter that satisfies \(\Delta I_X(t) - \betacrq(t) \Delta I_Y(t) = 0\), which coincides with \(\betacr(t)\) for the node \(t\).
The associated \(X\)- and \(Y\)-information contributed by the phase transition (i.e., nodal expansion) is \(Q_{X,t} = \Delta I_X(t)\) and \(Q_{Y,t} = \Delta I_Y(t)\) if \(\betacrq(t) < \infty\), where \(Q_{X,t} = Q_{Y,t} = 0\) otherwise since, in the latter case, the node is never expanded.
Once the value of \(\betacrq(t)\) is known for all \(t\in\N_{\ell-1}(\T_\W)\), the algorithm moves to consider nodes at depth \(\N_{\ell-2}(\T_\W)\).
The process is more involved for the remaining nodes in the tree.
Namely, consider some \(t\in\N_{\ell-2}(\T_\W)\).
The node \(t\) has children \(\chd(t) = \{t'_1,\ldots,t'_4\}\), where each of the child nodes \(t'_i\) has a (known) value of \(\betacrq(t'_i)\), \(1\leq i\leq 4\), from the previous computations. 
In contrast to the the previous case when \(t\in\N_{\ell-1}(\T_\W)\), the case when \(t\in\N_{\ell-2}(\T_\W)\) must consider the changes in the Q-function at the node \(t\) brought about by the expansion of descendant nodes, the latter of which occur at the (known) values of \(\betacrq(t'_i)\).
To better understand the above discussion, let us assume, without loss of generality, that the values of \(\betacrq(t')\) satisfy \(\betacrq(t'_1)\leq \betacrq(t'_2) \leq \betacrq(t'_3)\leq\betacrq(t'_4)\).
Then, from the definition of \(\betacrq(t'_i)\), it follows that no child expansion occurs for \(\beta\in [0,\betacrq(t'_1)]\), at least one child expansion occurs for \(\beta\in(\betacrq(t'_1),\betacrq(t'_2)]\), and all children are expanded for \(\beta \in (\betacrq(t'_4),\infty)\).
What the above observations imply is that the values of the child phase transitions, namely \(\betacrq(t'_i)\), partition the non-negative real line into intervals that are characterized by the number of child expansions that occur for any \(\beta\) with a given interval.
For each of these intervals we then compute the \(X\)- and \(Y\)-information of the resulting subtree rooted at \(t\), and check if the candidate value of \(\betacrq(t)\) lies in the interval of \(\beta\) that is being considered.
More precisely, assume \(\beta\) is selected such that \(\betacrq(t'_1) < \beta \leq \betacrq(t'_2)\) and that, for the purposes of describing the algorithm, that the values of \(\betacrq(t'_i)\) are distinct.
Then, for the given value of \(\beta\) we have that the subtree rooted at \(t\) has a total \(X\)-information of \(\Delta I_X(t) + [\bar Q_{X,t}]_1\) and \(Y\)-information given by \(\Delta I_Y(t) + [\bar Q_{Y,t}]_1\), corresponding to the expansion of the node \(t\) and the child node \(t'_1\), which has the smallest value of \(\betacrq(t')\).
Thus, it follows that for \(\betacrq(t'_1) < \beta \leq \betacrq(t'_2)\),
\begin{align}
	&Q(t;\beta) = \nonumber \\
	&\min\{(\Delta I_X(t) + [\bar Q_{X,t}]_1) - \beta (\Delta I_Y(t) + [\bar Q_{Y,t}]_1),~0\} \label{eq:qFunctionAlgDescription1}.
\end{align}
We may then generate a candidate value of \(\betacrq(t)\), denoted \(\bar \beta\), by solving \((\Delta I_X(t) + [\bar Q_{X,t}]_1) - \bar{\beta} (\Delta I_Y(t) + [\bar Q_{Y,t}]_1) = 0\), which is done in Lines~\ref{algPT:generateCandidateBeta1} and~\ref{algPT:generateCandidateBeta2} of Algorithm~\ref{alg:phaseTransitionAlgorithm}.
If the resulting value of \(\bar \beta\) satisfies \(\betacrq(t'_1) < \bar \beta \leq \betacrq(t'_2)\), then \(\bar \beta\) is deemed a candidate value of \(\betacrq(t)\).
Otherwise, if \(\bar \beta\) does not fall within the interval \((\betacrq(t'_1),\betacrq(t'_2)]\) then it is not a candidate solution since the relation~\eqref{eq:qFunctionAlgDescription1} does not hold for \(\bar \beta \notin (\betacrq(t'_1),\betacrq(t'_2)]\).
In this case, we discard \(\bar \beta\) and check the next interval, \((\betacrq(t'_2),\betacrq(t'_3)]\), by repeating the aforementioned process, noting that there will now be two child expansions.
Since the interval \((\betacrq(t'_2),\betacrq(t'_3)]\) results in two child node expansions, the \(X\)-information and \(Y\)-information contributed by the subtree rooted at \(t\) is \(\Delta I_X(t) + [\bar Q_{X,t}]_1 + [\bar Q_{X,t}]_2\) and \(\Delta I_X(t) + [\bar Q_{Y,t}]_1 + [\bar Q_{Y,t}]_2\), respectively.
This computation is done in lines~\ref{algPT:runningXinformation} and~\ref{algPT:runningYinformation}, where we note that the variables \(dX\) and \(dY\) in Algorithm~\ref{alg:phaseTransitionAlgorithm} contain the cumulative sums of the \(X\)- and \(Y\)-information.
We then apply~\eqref{eq:qFunctionAlgDescription1} and check if a candidate solution is obtained, continuing this process until a solution is found. 
The above process is illustrated in~\figref{fig:ptAlgoSequenceSmall}.

In light of Proposition~\ref{prop:QfunctionMonotone}, and since we order the values of \(\betacrq(t')\) for descendants \(t'\) of \(t\) in increasing order in Line~\ref{algPT:sortingBetaChildren}, we terminate our search once a candidate solution is found in lines~\ref{algPT:candidateSolnBreak1} or~\ref{algPT:candidateSolnBreak2}.
However, the process we have thus far described assumes that a child node is indeed expanded.
Instead, it may be the case that no child Q-value contribution is required for \(Q(t;\beta)\) to undergo a transition from \(Q(t;\beta) = 0\) to \(Q(t;\beta) < 0\), which implies \(\betacrq(t) = \betacr(t)\).
Algorithm~\ref{alg:phaseTransitionAlgorithm} does not miss this case by comparing the candidate value of \(\betacrq(t)\) with that of \(\betacr(t)\) and ensuring that the minimum of these two values is ultimately declared the value of \(\betacrq(t)\).
The aforementioned checks are done in Lines~\ref{algPT:betacrCheck1} and~\ref{algPT:betacrCheck2} of Algorithm~\ref{alg:phaseTransitionAlgorithm}.
Once a solution is found, we store the corresponding values of the \(X\)- and \(Y\)-information contributed by the expansion of the node \(t\) when \(\beta \geq \betacrq(t)\) in the variables \(dX_{\text{PT}}\) and \(dY_{\text{PT}}\) so that they may be stored and used by ancestors of \(t\) at a later iteration of the algorithm.
See Figures~\ref{fig:ptAlgTreeMechanism} and~\ref{fig:ptAlgTreeMechanismSeq} for an illustration of the algorithm mechanism at a later stage in the recursive process.
%

\begin{figure}[tbh]
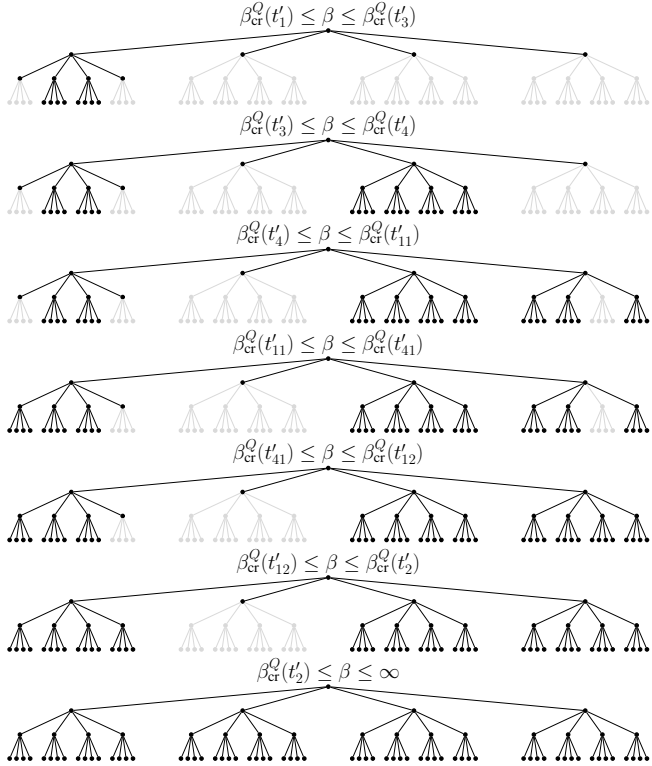

	\begin{adjustbox}{max size={0.48\textwidth}}

        \end{adjustbox}
	\caption{Possible subtrees rooted at \(t\) as \(\beta\) is increased. The subtrees are determined by the values \(\betacrq(t')\) of the descendants \(t'\) of \(t\) (see also~\figref{fig:ptAlgTreeMechanism}).}
	\label{fig:ptAlgTreeMechanismSeq}
\end{figure}

%
It is important to note, however, that only those \(\beta\)-values in the set \(\{\betacrq(s) \geq \betacrq(t): s\in\Nint(\T_{\W(t)})\}\) constitute a candidate tree phase transition.\footnote{We say \emph{candidate} tree phase transition here, since for \(\betacrq(s)\), \(s\in\N(\T_{\W(t)})\), to constitute a true tree phase transition, it must be the case that \(\betacrq(s) \geq \betacrq(\bar t)\) for \emph{all} ancestors \(\bar t\) of \(s\). 
The node \(t\) is only one of potentially many such nodes. 
Thus, while it may be the case that \(\betacrq(s) \geq \betacrq(t)\), it need not be the case that \(\betacrq(s) \geq \betacrq(\bar t)\) for every ancestor \(\bar t\) of \(s\). The tree phase transitions cannot be completely determined until \(\betacrq(\tRoot)\) has been computed.}
This is because if, for some \(s\in\N(\T_{\W(t)})\) we have \(\betacrq(s)\leq \betacrq(t)\), then the Q-tree search algorithm will expand \(s\) once it expands \(t\); should \(s\) be reached in the expansion process, and therefore \(s\) will not be a leaf node of any solution to~\eqref{eq:IBtreeProb}.
In light of these observations, we see that line~\ref{algPT:determineChildPT} determines which \(\betacrq(s)\) values constitute candidate phase transitions, and that the set \(\betaPhaseTransitionSetNode{t}\) contains the values of \(\betacrq(s)\) that satisfy \(\betacrq(s) \geq \betacrq(t)\) for \(s\in\N(\T_{\W(t)})\).
As a result, by repeating the above argument, we notice that \(\betaPhaseTransitionSetNode{\tRoot}\) contains the values of \(\betacrq(s)\) that satisfy \(\betacrq(s) \geq \betacrq(\tRoot)\) for \(s\in\N(\T_\W)\).
In other words, \(\betaPhaseTransitionSetNode{\tRoot}\) contains those values of \(\betacrq(t)\) that constitute tree phase transitions, as if \(\betacrq(t) \in \betaPhaseTransitionSetNode{\tRoot}\) then it must be the case that \(\betacrq(t) \geq \betacrq(s)\) for every ancestor \(s\) of \(t\).
Correspondingly, the associated quantities of \(Q_{X,\tRoot}\) and \(Q_{Y,\tRoot}\) characterize the total \(X\)- and \(Y\)-information of each of the tree phase transitions, and thus characterize the possible \(X\)- and \(Y\)-information combinations possible by any tree solution to~\eqref{eq:IBtreeProb}.
Notice also that \(\betacrq(\tRoot)\) is always a phase transition, since the root node, is by definition, the first node to be expanded in any solution to~\eqref{eq:IBtreeProb}.
With the above discussion in mind, it thus follows that \(\betaPhaseTransitionSet = \{[\betaPhaseTransitionSetNode{\tRoot}]_j:1\leq j \leq \texttt{length}(\betaPhaseTransitionSetNode{\tRoot})\}\).
As a result, we will, in the sequel, refer to \(\betaPhaseTransitionSetNode{\tRoot}\) and \(\betaPhaseTransitionSet\) interchangeably.
With the mechanism of Algorithm~\ref{alg:phaseTransitionAlgorithm} in mind, we provide a few brief comments regarding the routines invoked in Lines~\ref{algPT:getsubtreeInfo},~\ref{algPT:betaBndsGen}, and~\ref{algPT:setData} before moving to delineate how knowledge of the tree phase transitions may be utilized to obtain a solution to the dual problem~\eqref{eq:IBtreeDualProblem}.
Namely, calling \(\texttt{subtreePTs}(t)\) aggregates the child node information so that the set \(\bar{\mathcal B}_{\text{PT},t}\) contains the intervals that are to be considered in generating a candidate solution for \(\betacrq(t)\), and the vectors \(\bar Q_{X,t}\), \(\bar Q_{Y,t}\) contain the corresponding \(X\)- and \(Y\)-information contributions each of the corresponding subtrees rooted at the children of \(t\).
It is important to note that \(\bar Q_{X,t}\) and \(\bar Q_{Y,t}\) are not the same as \(Q_{X,t}\) and \(Q_{Y,t}\), as the latter contain only the \(X\)- and \(Y\)-information contributions of those subtrees that are brought about by values of \(\beta\) that are greater than the computed value of \(\betacrq(t)\).
In other words, \(\bar Q_{X,t}\) and \(\bar Q_{Y,t}\) may be viewed as candidate phase transition subtrees, whereas those values in \(Q_{X,t}\) and \(Q_{Y,t}\) correspond to subtrees that are to be considered by ancestor nodes when computing their critical \(\beta\)-values at a later iteration.
Similarly, \(\bar{\mathcal B}_{\text{PT},t}\) contains those intervals which must be checked when computing \(\betacrq(t)\) for the node \(t\), although there may be values \(\beta\in\betaPhaseTransitionSetNode{t}\) that satisfy \(\beta \leq \betacrq(t)\) once the value of \(\betacrq(t)\) is known.
Those \(\beta\in\betaPhaseTransitionSetNode{t}\) for which \(\beta \leq \betacrq(t)\) are not included in \(\betaPhaseTransitionSetNode{t}\), the latter of which is passed to the ancestors of \(t\).
To this end, once the value of \(\betacrq(t)\) is known, the routine \(\texttt{setPTData}(\cdot)\) in Line~\ref{algPT:setData} of Algorithm~\ref{alg:phaseTransitionAlgorithm} picks off those values of \(\beta\) satisfying \(\beta > \bar{\mathcal B}_{\text{PT},t}\) from \(\bar{\mathcal B}_{\text{PT},t}\) (summarized by the index set \(\mathcal I\)) and the corresponding values of \(\bar{Q}_{X,t}\) and \(\bar Q_{Y,t}\) that correspond to candidate phase transitions in order to pass them to the next iteration of the algorithm.
Lastly, the routine \(\texttt{betaBnds}(\cdot)\) in Line~\ref{algPT:betaBndsGen} sets the endpoints of the current \(\beta\)-interval under consideration.
Next, we discuss how the tree phase transitions may be utilized to generate a solution to the dual problem.
%

\subsection{Tree Phase Transitions and the Dual Problem}

It is important to notice that the discussion in the previous section regarding tree phase transitions and the computation thereof was completely independent of \(D\).
This was possible due to the relation~\eqref{eq:dualFunctionQ}, which shows that the dual function may be written as a sum of two terms: \(Q(\tRoot,\beta)\), which does not depend on \(D\), and \(\beta D\).
The tree phase transitions are then the values of \(\beta\) for which \(Q(\tRoot;\beta)\) changes value, which follows directly from Definition~\ref{def:treePT}, Proposition~\ref{prop:QfunctionAndObjIBTree} and the definition of critical \(\beta\)-value in~\eqref{eq:betaCriticalQ}.
Therefore, the tree phase transitions provide vital information in obtaining a solution to the dual problem~\eqref{eq:IBtreeDualProblem}.
%

%
\begin{figure}[t]
	\centering
	\begin{tikzpicture}[
		declare function={
			func(\x)= (\x < 1) * (\x)   +
			and(\x >= 1, \x <= 2) * (1/2*\x + 1/2)     +
			and(\x > 2, \x <= 3) * (-1/2*\x + 5/2) + 
			(\x > 3) * (-1*\x + 4)
			;
		}
		]
		\begin{axis}[
			width=0.5\textwidth,
			height=5cm,
			axis x line*=middle, 
			axis y line*=middle,
			ymin=-0.5, 
			ymax=2, 
			ylabel={\small $d(\beta)$},
			xmin=0, 
			xmax=4.5, 
			xlabel={\small $\beta$},
			domain=0:5,
			samples=200, 
			ticks=none,
			x label style={at={(axis description cs: 0.97,0.21)},anchor=north},
			]
			
			\addplot [black,thick] {func(x)};
			
			\addplot [black, mark=none, opacity=0.4] coordinates {(1, 0) (1, 1)};
			\addplot [black, mark=none, opacity=0.4] coordinates {(2, 0) (2, 3/2)};
			\addplot [black, mark=none, opacity=0.4] coordinates {(3, 0) (3, 1)};
			
			\addplot [black, mark=none, dotted, thick] coordinates {(0, 3/2) (2, 3/2)};
			
			\node[circle, draw, minimum size=0.4cm, inner sep = 0pt] at (0.5,0.2) {\small E};
			\node[circle, draw, minimum size=0.4cm, inner sep = 0pt] at (1.5, 0.8) {\small F};
			\node[circle, draw, minimum size=0.4cm, inner sep = 0pt] at (2.5, 0.8) {\small G};
			\node[circle, draw, minimum size=0.4cm, inner sep = 0pt] at (3.5,0.2) {\small H};
			
			\node at (1,-0.15) {\tiny \([\betaPhaseTransitionSetNode{\tRoot}]_1\)};
			\node at (2,-0.15) {\tiny \([\betaPhaseTransitionSetNode{\tRoot}]_2\)};
			\node at (3,-0.15) {\tiny \([\betaPhaseTransitionSetNode{\tRoot}]_3\)};
			
			\node at (1,1.6) {\tiny $d(\beta^*)$};
		\end{axis}
	\end{tikzpicture} 
	\caption{Dual function structure for an environment \(p(x,y)\) and \(\W\) with three phase transitions, indicated by the corresponding elements of \(\betaPhaseTransitionSetNode{\tRoot}\) for some value of \(D \geq 0\).}
	\label{fig:dualFunctionVisualization}
\end{figure}

To this end, the set \(\betaPhaseTransitionSet\) can be readily computed from input data according to Algorithm~\ref{alg:phaseTransitionAlgorithm} discussed in the previous section, and will be henceforth represented by \(\betaPhaseTransitionSetNode{\tRoot}\).
Moreover, from executing the tree phase transition algorithm, we obtain the quantities of \(Q_{X,\tRoot}\) and \(Q_{Y,\tRoot}\), which quantify the \(X\)- and \(Y\)-information retained by each of the minimal tree solution to~\eqref{eq:IBtreeProb} associated with the phase transition values in \(\betaPhaseTransitionSetNode{\tRoot}\).

Specifically, once the tree phase transitions are known we may exploit the structural knowledge of the dual function~\eqref{eq:dualFunctionQ} to solve~\eqref{eq:IBtreeDualProblem} as a function of \(D\).
Namely, from~\eqref{eq:dualIBtreeFunctionWithDeltaInfo} we see that, for given \(D \geq 0\), the dual function~\eqref{eq:dualFunctionTreeIB} is piece-wise linear and differentiable, except for those values of \(\beta \geq 0\) where the tree solution to~\eqref{eq:IBtreeProb} changes; that is, \(d(\beta)\) is differentiable at all points except for \(\beta\in\betaPhaseTransitionSet\).

However, the values of \(\beta\) where the tree solution to~\eqref{eq:IBtreeProb} changes are, by definition, those values of \(\beta\) that phase transitions occur, which are contained in the set \(\betaPhaseTransitionSetNode{\tRoot}\) and can be obtained by Algorithm~\ref{alg:phaseTransitionAlgorithm}.
Moreover, Algorithm~\ref{alg:phaseTransitionAlgorithm} not only furnishes the set of tree phase transitions, but the total \(X\)- and \(Y\)-information of the trees associated with each of the phase transitions in \(\betaPhaseTransitionSetNode{\tRoot}\).
Consequently, from Definition~\ref{def:treePT} and relation~\eqref{eq:dualIBtreeFunctionWithDeltaInfo} we have that, for any \(\beta \geq 0\) the dual function~\eqref{eq:dualFunctionTreeIB} may be written as
\begin{equation}\label{eq:dualFunctionInTermsOfPTalgVars}
	d(\beta) = [Q_{X,\tRoot}]_{j-1} + \beta (D - [Q_{Y,\tRoot}]_{j-1}),
\end{equation}
where 
\begin{equation}
j = 
\begin{cases}
    \min\{i : \beta \leq [\betaPhaseTransitionSetNode{\tRoot}]_i\}, & \text{ if } \beta \leq \max(\betaPhaseTransitionSetNode{\tRoot}), \\
    \lvert \betaPhaseTransitionSetNode{\tRoot} \rvert + 1, & \text{ otherwise},
\end{cases}
\end{equation}
where \(\max(\betaPhaseTransitionSetNode{\tRoot})\) is the maximal phase transition computed by Algorithm~\ref{alg:phaseTransitionAlgorithm}, and \([Q_{X,\tRoot}]_0 := 0\), and \([Q_{Y,\tRoot}]_0 := 0\); see~\figref{fig:dualFunctionVisualization}.

From~\figref{fig:dualFunctionVisualization} we see that the tree phase transitions partition the \(\beta\)-axis into intervals where the slope of \(d(\beta)\) changes according to~\eqref{eq:dualFunctionInTermsOfPTalgVars}.
More specifically, in region \(\mathrm{E}\), that is, for \(0 \leq \beta \leq [\betaPhaseTransitionSetNode{\tRoot}]_1\), no expansion occurs and \(d(\beta) = \beta D\).
The corresponding \(X\)- and \(Y\)-information of the tree obtained for any \(\beta\) in this interval will be zero (recall that the solution to~\eqref{eq:IBtreeProb} in this interval is the root tree).
For the interval \([\betaPhaseTransitionSetNode{\tRoot}]_1 \leq \beta \leq [\betaPhaseTransitionSetNode{\tRoot}]_2\) (i.e., region F), the dual function is given by~\eqref{eq:dualFunctionInTermsOfPTalgVars} as \(d(\beta) = [Q_{X,\tRoot}]_1 + \beta (D - [Q_{Y,\tRoot}]_1)\).
Thus, the dual function is a linear function with slope \((D - [Q_{Y,\tRoot}]_1)\) and y-intercept \([Q_{X,\tRoot}]_1\).
The above argument may be repeated for the other intervals.

It is important to notice that, in~\figref{fig:dualFunctionVisualization}, we have drawn the dual function with \(\betacrq(\tRoot) = [\betaPhaseTransitionSetNode{\tRoot}]_1 > 0\).
In fact, \(\betacrq(\tRoot) > 0\) is always true, since from the data processing inequality\footnote{recall: \(T \leftrightarrow X \leftrightarrow Y\).}, we have \(I(T;X) \geq I(T;Y)\) (or, equivalently \(I_X(\T) \geq I_Y(\T)\)). 
Thus, \((1-\beta)I_X(\T) \leq I_X(\T) - \beta I_Y(\T) \leq I_X(\T)\) for all \(\T\in\T^\Q\). 
Consequently, for \(0 \leq \beta \leq 1\), \(I_X(\T) - \beta I_Y(\T)\) is minimized by the root tree for which \(I_X(\tRoot) - \beta I_Y(\tRoot) = 0\).
But since the root tree does not expand the root node, it follows that \(\betacrq(\tRoot) \geq 1 > 0\).

Leveraging the relation~\eqref{eq:dualFunctionInTermsOfPTalgVars}, our proposed algorithm to find a solution to the dual problem~\eqref{eq:IBtreeDualProblem} is shown in Algorithm~\ref{alg:subgradOptPhaseTrans}.
The proposed approach circumnavigates the non-differentiability of the dual function by employing sub-gradients.
Specifically, one may observe that the dual function may be written as
\begin{align}
	&d(\beta) = \nonumber \\
	&\max\{w : w \leq I_X(\T_{j}) +\beta(D - I_Y(\T_{j})),~ 0\leq j \leq \lvert \betaPhaseTransitionSet \rvert\},\label{eq:dualFunctionPTsMinPointwise}
\end{align}
where \(I_X(\T_0) = I_Y(\T_{0}) = 0\), and for \(1 \leq j \leq \lvert \betaPhaseTransitionSet\rvert\), \(\T_j\in\T^\Q\) is the tree satisfying \(I_X(\T_j) = [Q_{X,\tRoot}]_j\) and \(I_Y(\T_j) = [Q_{Y,\tRoot}]_j\).
As a result, at a phase transition, \(d(\beta) = I_X(\T_{j-1}) +\beta(D - I_Y(\T_{j-1}))\), and \(d(\beta) = I_X(\T_{j}) + \beta(D - I_Y(\T_{j}))\) for some \(1\leq j \leq \lvert \betaPhaseTransitionSetNode{\tRoot}\rvert\).
By employing~\eqref{eq:dualFunctionPTsMinPointwise}, it is straightforward to show that, for any such point, both \((D - I_Y(\T_{j-1}))\) \emph{and} \((D - I_Y(\T_{j}))\) are subgradients of the dual function at \([\betaPhaseTransitionSetNode{\tRoot}]_{j}\).

%
\begin{algorithm}[t]
	\caption{\(\texttt{optIBTreeDualFunct}(p(x,y),\T_\W,D)\), which finds a solution to the dual problem.}\label{alg:subgradOptPhaseTrans}
	\KwData{\(p(x,y)\), \(\T_\W\) constructed from \(\W\), \(D \geq 0\)}
	\KwResult{A solution \(\beta^* \geq 0\) to~\eqref{eq:IBtreeDualProblem}, optimal value \(d(\beta^*)\)}

    \((\betaPhaseTransitionSetNode{\tRoot},Q_{X,\tRoot},Q_{Y,\tRoot}) \leftarrow \texttt{treePhaseTransitions}(p(x,y),\T_\W)\)\; \label{algOPTdualFunc:getPTs}
    \(\betaPhaseTransitionSetNode{\tRoot} \leftarrow \texttt{sort}(\betaPhaseTransitionSetNode{\tRoot})\)\;\label{algOPTdualFunc:orderPTs}
    order \(Q_{Y,\tRoot}\) and \(Q_{X,\tRoot}\) according to \(\betaPhaseTransitionSetNode{\tRoot}\)\;\label{algOPTdualFunc:orderQs}
   	\If{\(D < [Q_{Y,\tRoot}]_1\)}
   	{
   		\label{algOPTdualFunc:subgradCheck1}
   		\(j^* = 0\)\;
   	}\ElseIf{\(D  \geq [Q_{Y,\tRoot}]_{j}\) for all \(j\)}
   	{
   		\label{algOPTdualFunc:subgradCheck2}
   		\(j^* = \texttt{length}(Q_{Y,\tRoot}) - 1\)\;
   	}\Else{
   		\label{algOPTdualFunc:subgradCheck3}
   		\(j^* = \max\{j: D - [Q_{Y,\tRoot}]_{j} \geq 0\}\)\;
   	}
    \(\beta^* = [\betaPhaseTransitionSetNode{\tRoot}]_{j^*+1}\)\;
    
    \If{\(j^* = 0\)}
    {
    	\(d(\beta^*) = \beta^* D\)\;
    }\Else{
		\(d(\beta^*) = [Q_{X,\tRoot}]_{j^*} + \beta^*(D - [Q_{Y,\tRoot}]_{j^*})\)\;
	}
    
\end{algorithm}
The mechanism of Algorithm~\ref{alg:subgradOptPhaseTrans} is as follows.
First, given an environment, that is, \(p(x,y),~\W\) and finest-resolution tree \(\T_\W\), Algorithm~\ref{alg:subgradOptPhaseTrans} retrieves the tree phase transitions by calling the phase transition algorithm in line~\ref{algOPTdualFunc:getPTs}.
Along with the phase transitions (contained in \(\betaPhaseTransitionSetNode{\tRoot}\)), we also obtain the corresponding \(X\)- and \(Y\)-information of each of the trees defined by the phase transitions.
We then sort the phase transitions in increasing order in line~\ref{algOPTdualFunc:orderPTs}, and ensure that the ordering of \(Q_{X,\tRoot}\) and \(Q_{Y,\tRoot}\) are consistent in line~\ref{algOPTdualFunc:orderQs}.
The algorithm then determines if the dual function will increase or decrease at a phase transition based on subgradient information.
Namely, if the algorithm determines that the dual function will decrease as \(\beta\) is increased beyond some phase transition, then the phase transition is declared a local minimum (which is also a global solution since the function \(d(\beta)\) is concave~\cite{Papadimitriou1998combinatorial}).
Finding such a local solution is done by the conditional statements in lines~\ref{algOPTdualFunc:subgradCheck1},~\ref{algOPTdualFunc:subgradCheck2}, and~\ref{algOPTdualFunc:subgradCheck3}.

Algorithm~\ref{alg:subgradOptPhaseTrans} employs the structural knowledge of the dual function.
That is, since the first tree phase transition, namely \(\betacrq(\tRoot) = \min(\betaPhaseTransitionSetNode{\tRoot})\), satisfies \(\betacrq(\tRoot) > 0\), the dual function will be monotone increasing on \([0,\betacrq(\tRoot)]\).
Therefore, if \(D - [Q_{Y,\tRoot}]_1 < 0\) then zero will be a subgradient of the dual function at \(\betacrq(\tRoot)\), as shown in~\figref{fig:dualFuncOPTcase1}.
Consequently, the dual function is maximized at \(\betacrq(\tRoot)\) and has optimal value \(\betacrq(\tRoot)D\), which is considered in line~\ref{algOPTdualFunc:subgradCheck1} (recall that \(\betacrq(\tRoot) = [\betaPhaseTransitionSetNode{\tRoot}]_1\)).
In the other extreme case, namely \(D = I_Y(\T_\W)\), we have that \(D - [Q_{Y,\tRoot}]_j \geq 0\) for all \(j\), since any tree, other than one that retains all the relevant information in the environment, will have less than \(D\) units of information, and results with a dual function structure as shown in~\figref{fig:dualFuncOPTcase2}.
Algorithm~\ref{alg:subgradOptPhaseTrans} checks for this case in line~\ref{algOPTdualFunc:subgradCheck2}, and returns the maximal element of \(\betaPhaseTransitionSetNode{\tRoot}\) as a solution if the conditional statement in line~\ref{algOPTdualFunc:subgradCheck2} is true.
Finally, for all other cases, our algorithm checks in line~\ref{algOPTdualFunc:subgradCheck3} for a phase transition where zero is a subgradient of the dual function.
Once such a phase transition is found, our algorithm returns the corresponding value of \(\beta\) and the dual function value.

%
%
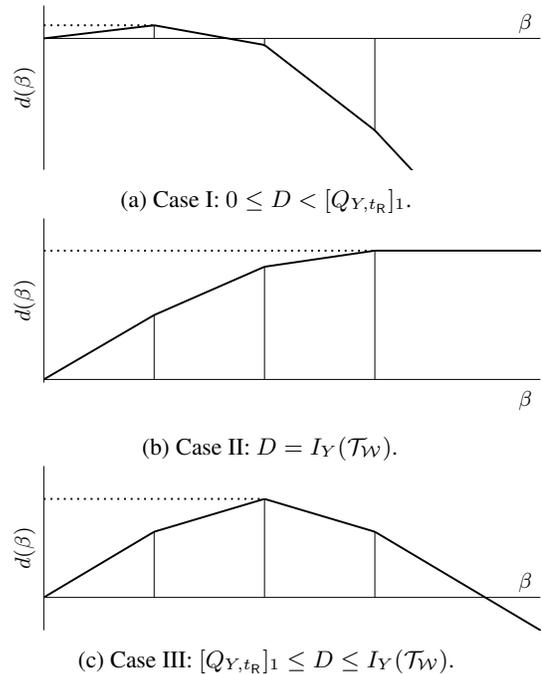
\begin{figure}[tbh]
	\centering
	\subfloat[Case I: {\(0\leq D < [Q_{Y,\tRoot}]_1\).} \label{fig:dualFuncOPTcase1}]
	{
		\begin{adjustbox}{max size={0.4\textwidth}}
			\begin{tikzpicture}[
				declare function={
					func(\x)= (\x < 1) * ((0.2)*\x)   +
					and(\x >= 1, \x <= 2) * ((0.2 - 1/2)*\x + 1/2)     +
					and(\x > 2, \x <= 3) * ((0.2 - 3/2)*\x + 5/2) + 
					(\x > 3) * ((0.2 - 2)*\x + 4)
					;
				}
				]
				\begin{axis}[
					width=0.5\textwidth,
					height=4cm,
					axis x line*=middle, 
					axis y line*=middle,
					ymin=-2, 
					ymax=0.5, 
					xmin=0, 
					xmax=4.5, 
					domain=0:5,
					samples=200, 
					ticks=none,
					ylabel={\small $d(\beta)$},
					xlabel={\small $\beta$},
					x label style={at={(axis description cs: 0.97,1.0)},anchor=north},
					]
					
					\addplot [black,thick] {func(x)};
					
					\addplot [black, mark=none, opacity=0.4] coordinates {(1, 0) (1, 0.2)};
					\addplot [black, mark=none, opacity=0.4] coordinates {(2, 0) (2, -0.1)};
					\addplot [black, mark=none, opacity=0.4] coordinates {(3, 0) (3, -1.4)};
            
					\addplot [black, mark=none, dotted, thick] coordinates {(0, 0.2) (1, 0.2)};
				\end{axis}
			\end{tikzpicture} 
		\end{adjustbox}
	}
	\\
	\subfloat[Case II: {\(D = I_Y(\T_\W)\).}\label{fig:dualFuncOPTcase2}]
	{
		\begin{adjustbox}{max size={0.4\textwidth}}
			\begin{tikzpicture}[
				declare function={
					func(\x)= (\x < 1) * ((2)*\x)   +
					and(\x >= 1, \x <= 2) * ((2 - 1/2)*\x + 1/2)     +
					and(\x > 2, \x <= 3) * ((2 - 3/2)*\x + 5/2) + 
					(\x > 3) * ((2 - 2)*\x + 4)
					;
				}
				]
				\begin{axis}[
					width=0.5\textwidth,
					height=4cm,
					axis x line*=middle, 
					axis y line*=middle,
					ymin=-0.1, 
					ymax=5, 
					xmin=0, 
					xmax=4.5, 
					domain=0:5,
					samples=200, 
					ticks=none,
					ylabel={\small $d(\beta)$},
					xlabel={\small $\beta$},
					x label style={at={(axis description cs: 0.97,0.0)},anchor=north},
					]
					
					\addplot [black,thick] {func(x)};
					
					\addplot [black, mark=none, opacity=0.4] coordinates {(1, 0) (1, 2)};
					\addplot [black, mark=none, opacity=0.4] coordinates {(2, 0) (2, 3.5)};
					\addplot [black, mark=none, opacity=0.4] coordinates {(3, 0) (3, 4)};
					
					\addplot [black, mark=none, dotted, thick] coordinates {(0, 4) (3, 4)};
				\end{axis}
			\end{tikzpicture} 
		\end{adjustbox}
	}
	\\
	\subfloat[Case III: {\([Q_{Y,\tRoot}]_1 \leq D \leq I_Y(\T_\W)\).} \label{fig:dualFuncOPTcase3}]
	{
		\begin{adjustbox}{max size={0.4\textwidth}}
			\begin{tikzpicture}[
				declare function={
					func(\x)= (\x < 1) * (\x)   +
					and(\x >= 1, \x <= 2) * (1/2*\x + 1/2)     +
					and(\x > 2, \x <= 3) * (-1/2*\x + 5/2) + 
					(\x > 3) * (-1*\x + 4)
					;
				}
				]
				\begin{axis}[
					width=0.5\textwidth,
					height=4cm,
					axis x line*=middle, 
					axis y line*=middle,
					ymin=-0.5, 
					ymax=2, 
					xmin=0, 
					xmax=4.5, 
					domain=0:5,
					samples=200, 
					ticks=none,
					ylabel={\small $d(\beta)$},
					xlabel={\small $\beta$},
					x label style={at={(axis description cs: 0.97,0.4)},anchor=north},
					]
					
					\addplot [black,thick] {func(x)};
					
					\addplot [black, mark=none, opacity=0.4] coordinates {(1, 0) (1, 1)};
					\addplot [black, mark=none, opacity=0.4] coordinates {(2, 0) (2, 3/2)};
					\addplot [black, mark=none, opacity=0.4] coordinates {(3, 0) (3, 1)};
					
					\addplot [black, mark=none, dotted, thick] coordinates {(0, 3/2) (2, 3/2)};
				\end{axis}
			\end{tikzpicture} 
		\end{adjustbox}
	}
	\caption{Dual function structure as a function of \(D\). The dual function in (a)-(c) is changed only by the value of \(D\). Observe that the tree phase transitions (indicated by the vertical lines) are independent of the value of \(D\geq0\).}	
	\label{fig:optDualFuncCondCheck}
\end{figure}

Before we discuss the importance of phase transitions on the characterization of strong duality in the IB tree problem, we have the following lemma.
%
\begin{lemma}\label{lem:suffDualAndPTsOPT}
	Assume \(\beta^* \geq 0\) is a solution to the dual problem~\eqref{eq:IBtreeDualProblem}.
    Then there exists a tree phase transition \(\hat \beta\) such that \(d(\hat \beta) = d(\beta^*)\).
\end{lemma}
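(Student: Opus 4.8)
The plan is to exploit the concavity and piecewise-linear structure of the dual function established in the preceding development, and to argue that the maximum of such a function is always attained at, or shares its value with, a kink, which here coincides with a tree phase transition.

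First I would recall that, by relation~\eqref{eq:dualFunctionInTermsOfPTalgVars}, the dual function $d(\beta)$ is affine on each interval delimited by consecutive elements of $\betaPhaseTransitionSetNode{\tRoot}$, with slope $(D - [Q_{Y,\tRoot}]_{j-1})$ on the $j$-th such interval; on $[0,\betacrq(\tRoot)]$ the slope equals $D \geq 0$, and on the final, unbounded piece it equals $D - I_Y(\T_\W) \leq 0$, the latter inequality following from the assumed existence of a primal feasible tree (which forces $D \leq I_Y(\T_\W)$, since $\T_\W$ maximizes $I_Y$). Thus $d$ is continuous and piecewise linear, its only points of non-differentiability are the phase transitions in $\betaPhaseTransitionSetNode{\tRoot}$, and, since $d$ is concave, the slopes of these pieces are non-increasing in $\beta$.

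Next I would take $\beta^*$ to be any solution of~\eqref{eq:IBtreeDualProblem} and split into cases according to where $\beta^*$ sits relative to the phase transitions. If $\beta^* \in \betaPhaseTransitionSetNode{\tRoot}$, then $\hat\beta = \beta^*$ already proves the claim. Otherwise $\beta^*$ lies in the relative interior of one of the affine pieces, where $d$ is differentiable; optimality of $\beta^*$ then forces the slope of that piece to vanish. A zero-slope piece means $d$ is constant on the whole closed piece, so by continuity $d$ takes the value $d(\beta^*)$ at both of its endpoints. Since every endpoint of an affine piece is either an element of $\betaPhaseTransitionSetNode{\tRoot}$ or one of the two extreme points $\beta = 0$ and $\beta = +\infty$, I would check that in each instance at least one endpoint is a genuine phase transition: an interior zero-slope piece has two phase-transition endpoints; a zero-slope first piece (which can occur only when $D = 0$) has right endpoint $\betacrq(\tRoot)$; and a zero-slope last piece (which can occur only when $D = I_Y(\T_\W)$) has left endpoint $\max(\betaPhaseTransitionSetNode{\tRoot})$. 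Taking $\hat\beta$ to be such an endpoint gives $d(\hat\beta) = d(\beta^*)$.

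The only real subtlety, and the step I expect to require the most care, is simultaneously ruling out a maximizer hiding in the interior of a nonzero-slope piece and handling the two degenerate outer pieces at $\beta = 0$ and $\beta \to \infty$ whose outer endpoints are not phase transitions. Both are dispatched by the monotone-slope (concavity) observation together with the sign facts $D \geq 0$ and $D - I_Y(\T_\W) \leq 0$: a nonzero-slope piece is strictly monotone and hence cannot contain an interior maximizer, while $\betacrq(\tRoot) > 0$ (established earlier) guarantees the first piece always has a phase transition as its right endpoint, so the outer endpoints $\beta = 0$ and $\beta = +\infty$ are never needed as the chosen $\hat\beta$.
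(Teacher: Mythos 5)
Your argument is correct and follows essentially the same route as the paper's: both proofs reduce the claim to the piecewise-affine structure of \(d(\beta)\) from~\eqref{eq:dualFunctionInTermsOfPTalgVars} and use the sign of the slope on the piece containing \(\beta^*\) (nonnegative slope \(D\) on the first piece, nonpositive slope on the last, and a two-sided monotonicity argument in between) to transport the optimal value to a phase-transition endpoint. The only cosmetic difference is on the unbounded final piece, where you identify the slope as \(D - I_Y(\T_\W) \leq 0\) via primal feasibility (which tacitly uses the true but unstated fact that the Q-tree solution beyond the last phase transition retains all the relevant information), whereas the paper obtains the same sign more directly from boundedness of the dual function under weak duality.
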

\begin{proof}
	The proof is given in Appendix~\ref{app:suffDualAndPTsOPTProof}.
\end{proof}
As a result of Lemma~\ref{lem:suffDualAndPTsOPT}, there is no loss of generality in restricting the search for a solution to~\eqref{eq:IBtreeDualProblem} to those \(\beta\) in \(\betaPhaseTransitionSet\), as is done by Algorithm~\ref{alg:subgradOptPhaseTrans}.

\subsection{Characterizing Strong Duality in the IB Tree Problem}

In the previous discussion, we introduced a method to maximize the dual function by exploiting the structure of our problem.
Next, we draw connections between the duality discussion in Section~\ref{sec:RelaxAndDualTrees} and the algorithms discussed in Section~\ref{sec:PTsAndDualOpt}\ref{subsec:IDphaseTransitions}.
Specifically, here we seek to characterize for which values of \(D\) strong duality holds in our problem setting.
This brings us to the following result.
%
\begin{theorem}\label{thm:charaterizeStrongDualityIBtree}
	Strong duality holds in the IB tree problem if and only if \(D = [Q_{Y,\tRoot}]_j\) for some \(j\).
\end{theorem}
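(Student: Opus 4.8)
The plan is to prove both implications by explicitly computing and comparing the optimal primal value $v(D)$ and the optimal dual value $\max_{\beta\ge 0} d(\beta)$, each expressed in terms of the data $\betaPhaseTransitionSetNode{\tRoot}$, $Q_{X,\tRoot}$, $Q_{Y,\tRoot}$ returned by Algorithm~\ref{alg:phaseTransitionAlgorithm}. Throughout I would use that the components of $Q_{Y,\tRoot}$ are strictly increasing in their index (each phase transition expands at least one node with $\Delta I_Y>0$, since $\betacrq=\infty$ whenever $\Delta I_Y=0$), that the phase-transition trees are nested by Definition~\ref{def:treePT}, and that the pairs $([Q_{Y,\tRoot}]_j,[Q_{X,\tRoot}]_j)$ are precisely the Pareto-optimal information combinations attainable by trees in $\T^\Q$.

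For the ``if'' direction, suppose $D=[Q_{Y,\tRoot}]_j$ for some $j$. I would invoke Theorem~\ref{thm:betaToDthmEverett}: choosing any $\beta$ in the open interval $([\betaPhaseTransitionSetNode{\tRoot}]_j,[\betaPhaseTransitionSetNode{\tRoot}]_{j+1})$, the Q-tree search solution $\T_\beta$ is the phase-transition tree with $I_Y(\T_\beta)=[Q_{Y,\tRoot}]_j=D$, so that $d(\beta)=v(I_Y(\T_\beta))=v(D)$ and strong duality holds. The boundary indices are handled identically: $j=0$ (so $D=0$) uses $\beta\in[0,\betacrq(\tRoot))$ with $\T_\beta$ the root tree, while the maximal $j$ uses $\beta>\max(\betaPhaseTransitionSetNode{\tRoot})$ with $\T_\beta=\T_\W$ and $I_Y(\T_\W)=[Q_{Y,\tRoot}]_j$.

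For the ``only if'' direction, I would characterize the two optimal values directly. First, since the Pareto frontier of $(I_Y,I_X)$ is the increasing staircase $\{([Q_{Y,\tRoot}]_i,[Q_{X,\tRoot}]_i)\}$, the primal value is the step function $v(D)=[Q_{X,\tRoot}]_{k}$ with $k=\min\{i\ge 0:[Q_{Y,\tRoot}]_i\ge D\}$ (using $[Q_{X,\tRoot}]_0=[Q_{Y,\tRoot}]_0=0$). Second, by Lemma~\ref{lem:suffDualAndPTsOPT} a dual optimizer may be taken at a phase transition, and Algorithm~\ref{alg:subgradOptPhaseTrans} locates it: with $j^*=\max\{j:D\ge [Q_{Y,\tRoot}]_j\}$ and $\beta^*=[\betaPhaseTransitionSetNode{\tRoot}]_{j^*+1}$, relation~\eqref{eq:dualFunctionInTermsOfPTalgVars} gives $d(\beta^*)=[Q_{X,\tRoot}]_{j^*}+\beta^*(D-[Q_{Y,\tRoot}]_{j^*})$. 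Using continuity of the piecewise-linear $d$ at $\beta^*$ to switch to the adjacent linear piece, I would rewrite this as $d(\beta^*)=[Q_{X,\tRoot}]_{j^*+1}+\beta^*(D-[Q_{Y,\tRoot}]_{j^*+1})$. If $D>[Q_{Y,\tRoot}]_{j^*}$ then $k=j^*+1$, so $v(D)=[Q_{X,\tRoot}]_{j^*+1}$; since $\beta^*\ge\betacrq(\tRoot)\ge 1>0$ and $D<[Q_{Y,\tRoot}]_{j^*+1}$, the extra term is strictly negative and $d(\beta^*)<v(D)$, so strong duality fails. Hence strong duality forces $D=[Q_{Y,\tRoot}]_{j^*}$, and in that case the extra term vanishes and $d(\beta^*)=[Q_{X,\tRoot}]_{j^*}=v(D)$, which also recovers the ``if'' direction as a consistency check.

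The main obstacle I anticipate is the step establishing that the attainable information pairs $([Q_{Y,\tRoot}]_i,[Q_{X,\tRoot}]_i)$ are exactly the Pareto frontier of $\{(I_Y(\T),I_X(\T)):\T\in\T^\Q\}$, and hence that $v(D)$ is the stated step function; this is the bridge between the recursive, node-local bookkeeping of Algorithm~\ref{alg:phaseTransitionAlgorithm} and a global statement about optimal trees. I would argue it from Proposition~\ref{prop:QfunctionAndObjIBTree} (the root Q-value equals the optimal Lagrangian objective) together with the monotonicity Proposition~\ref{prop:QfunctionMonotone}, which guarantee that as $\beta$ increases the Q-tree solution traverses exactly these frontier trees in order of increasing $I_Y$. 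Degenerate situations where several descendants share a critical value (so a single phase transition carries more than one expansion) would need a short separate check confirming that no attainable $I_Y$ strictly between consecutive $[Q_{Y,\tRoot}]$ values can arise as the $Y$-information of a Lagrangian-minimizing tree.
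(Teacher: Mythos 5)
Your ``if'' direction is sound and is essentially the paper's own argument: pick $\beta$ in the interval attached to the $j$-th phase transition so that the Q-tree solution $\T_\beta$ satisfies $I_Y(\T_\beta)=[Q_{Y,\tRoot}]_j=D$, then conclude via Theorem~\ref{thm:betaToDthmEverett}. (The paper instead writes out the chain $I_X(\T_\beta)+\beta(D-I_Y(\T_\beta))\leq d(\beta^*)\leq I_X(\T^*)\leq I_X(\T_\beta)$ and uses $D-I_Y(\T_\beta)=0$, but the substance is identical.)

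The ``only if'' direction has a genuine gap, and it sits exactly at the step you flagged. Your argument needs the identity $v(D)=[Q_{X,\tRoot}]_k$ with $k=\min\{i:[Q_{Y,\tRoot}]_i\geq D\}$, which in turn requires the pairs $([Q_{Y,\tRoot}]_i,[Q_{X,\tRoot}]_i)$ to be the \emph{entire} Pareto frontier of $\{(I_Y(\T),I_X(\T)):\T\in\T^\Q\}$. But Propositions~\ref{prop:QfunctionAndObjIBTree} and~\ref{prop:QfunctionMonotone} only control trees that arise as Lagrangian minimizers for some $\beta$; they say nothing about feasible trees that are never Lagrangian-optimal for any $\beta$ (the non-supported solutions familiar from discrete multi-objective optimization). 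A tree $\bar\T$ with $I_Y(\bar\T)\geq D$ and $I_X(\bar\T)<[Q_{X,\tRoot}]_k$ would make your formula for $v(D)$ an overestimate, and then your bound $d(\beta^*)<[Q_{X,\tRoot}]_{j^*+1}$ no longer yields the required strict inequality $d(\beta^*)<v(D)$. The paper avoids computing $v(D)$ altogether: it assumes strong duality, takes the phase transition $\hat\beta$ furnished by Lemma~\ref{lem:suffDualAndPTsOPT}, and sandwiches
\[
I_X(\T^*)=d(\hat\beta)=[Q_{X,\tRoot}]_j+\hat\beta\,(D-[Q_{Y,\tRoot}]_j)\;\leq\; I_X(\T^*)+\hat\beta\,(D-I_Y(\T^*))\;\leq\; I_X(\T^*),
\]
where the middle inequality holds because the left-hand side is the minimum of the Lagrangian while $\T^*$ is merely feasible, and the last holds because $\hat\beta\geq 0$ and $D-I_Y(\T^*)\leq 0$. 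Forcing equality throughout gives $\hat\beta\,(D-[Q_{Y,\tRoot}]_j)=0$, and since $\hat\beta\geq\betacrq(\tRoot)>0$ this yields $D=[Q_{Y,\tRoot}]_j$ with no reference to $v(D)$ or to the Pareto frontier. If you want to keep your contrapositive framing, you must replace the Pareto-frontier claim with this direct complementary-slackness argument; as written, that bridge cannot be built from the propositions you cite.
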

\begin{proof}
	The proof is given in Appendix~\ref{app:charaterizeStrongDualityIBtreeProof}.
\end{proof}
The importance of Theorem~\ref{thm:charaterizeStrongDualityIBtree} is that it fully characterizes the values of \(D \geq 0\) for which strong duality holds in the IB tree problem.
This is important, as when strong duality holds in the IB tree problem, a primal optimal solution may be obtained by employing Q-tree search.
Moreover, from this observation, we may also conclude that for those values of \(D \geq 0\) specified by Theorem~\ref{thm:charaterizeStrongDualityIBtree}, Q-tree search finds a tree that is equivalent to that found by solving the primal problem in terms of retained \(X\)- and \(Y\)-information.
It is also important to note that Theorem~\ref{thm:charaterizeStrongDualityIBtree} also establishes that strong duality will fail to hold in \emph{every} IB tree problem, since it is always possible to select a value of \(D \geq 0\) for which there does not exist a tree \(\T_\beta \in\T^\Q\) obtainable by Q-tree search for some \(\beta \geq 0\) such that \(I_Y(\T_\beta) = D\). 

\section{Connections to Integer and Linear Programming} \label{sec:ILPandLPtrees}

The results presented in Section~\ref{sec:RelaxAndDualTrees} showed that it is not generally the case that for any given value of \(D \geq 0\) in~\eqref{eq:IBtreeProb} that there will exist a corresponding value of \(\beta \geq 0\) so that the solutions to~\eqref{eq:HCIBtreeProb} and~\eqref{eq:IBtreeProb} are equivalent.
However, thus far we have only discussed the relationship between the two alternative formulations from the perspective of duality theory, but have not provided any details in regards to how the primal problem~\eqref{eq:HCIBtreeProb} is solved in practice.
As it turns out, the authors of~\cite{larsson2021information,larsson2022linearBC} recently showed that the primal problem may be realized as an integer linear program (ILP), thereby offering another perspective from which we may study the problem~\eqref{eq:HCIBtreeProb} and its relation to~\eqref{eq:IBtreeProb}.
Moreover, equipped with the knowledge that the primal problem~\eqref{eq:HCIBtreeProb} may be realized as an ILP allows for powerful tools from linear programming to be leveraged in our analysis by considering suitable relaxations of the problem~\eqref{eq:HCIBtreeProb}.
Therefore, it is our goal in this section to employ the ILP formulation of the primal problem, and relaxations thereof, to establish a connection between Q-tree search and linear programming, as well as provide a method for selecting \(\beta \geq 0\) as a function of \(D\geq 0\) that leverages strong duality of linear programs.

To this end, the primal problem~\eqref{eq:HCIBtreeProb} can be realized as the ILP
%
\begin{equation}\label{eq:fullILPformulation_opt}
\min \z^{\tp} \Delta_X,
\end{equation}
subject to
%
\begin{align}
\z^{\tp}\Delta_Y &\geq D, \label{eq:fullILPcons1} \\
[\z]_{t'} - [\z]_t &\leq 0, \quad t\in\mathcal{B},~t'\in\chd(t), \label{eq:fullILPcons2} \\
0 \leq [\z]_t &\leq  1,\quad t\in\Nint(\T_\W),\label{eq:fullILPcons3} \\
[\z]_t &\text{ integer} \label{eq:fullILPcons4},
\end{align}
where \(\mathcal{B}\) is the set of interior nodes that have expandable child nodes, \(\Delta_X,\Delta_Y\in\Re^{\lvert\Nint(\T_\W)\rvert}\) have entries assigned according to \([\Delta_X]_t = \Delta I_X(t)\), and \([\Delta_Y]_t = \Delta I_Y(t)\), respectively~\cite{larsson2021information,larsson2022linearBC}.
In~\eqref{eq:fullILPformulation_opt}-\eqref{eq:fullILPcons4}, the constraints~\eqref{eq:fullILPcons2}-\eqref{eq:fullILPcons4} enforce the structure on the vector \(\z\in\{0,1\}^{\lvert \Nint(\T_\W)\rvert}\) to ensure it corresponds to some tree \(\T_{\z}\in\T^\Q\).
The problem~\eqref{eq:fullILPformulation_opt}-\eqref{eq:fullILPcons4} may be written more compactly as
%
\begin{equation}\label{eq:shortFormILP}
\min\{\z^{\tp} \Delta_X : \z^{\tp} \Delta_Y \geq D,~A\z \leq \mathbf{0},~\mathbf{0}\leq \z\leq \mathbf{1},~\z \text{ integer}\},
\end{equation}
where \(A\vec{z}\leq \vec{0}\), with integer matrix \(A\) of dimension \(\left(\sum_{t\in\mathcal B} \lvert \chd(t) \rvert\right) \times \lvert \Nint(\T_\W)\rvert\), represents the constraint~\eqref{eq:fullILPcons2}.
We may draw a connection to the discussion in Section~\ref{sec:RelaxAndDualTrees} by introducing a non-negative weight parameter \(\beta \geq 0\) and by considering a Lagrangian relaxation of~\eqref{eq:shortFormILP} created by incorporating the constraint \(\vec{z}^{\tp} \Delta_Y \geq D\) into the objective to obtain
%
\begin{equation}\label{eq:LagrangianRelaxationOfLPrelax}
\min\{\z^{\tp} \left(\Delta_X - \beta \Delta_Y \right):~A\z \leq \mathbf{0},~\mathbf{0}\leq \z\leq \mathbf{1},~\z \text{ integer}\} + \beta D,
\end{equation}
where one may observe that
%
\begin{align}
&Q(\tRoot;\beta) = \nonumber \\
&\min\{\z^{\tp} \left(\Delta_X - \beta \Delta_Y \right):~A\z \leq \mathbf{0},~\mathbf{0}\leq \z\leq \mathbf{1},~\z \text{ integer} \}, \label{eq:QasILP} 
\end{align}
since \(\T^\Q = \{\z: A\z \leq \mathbf{0},~\mathbf{0}\leq \z\leq \mathbf{1}\}\), and for any \(\z\in\{\z: A\z \leq \mathbf{0},~\mathbf{0}\leq \z\leq \mathbf{1}\}\) we have \(\z^{\tp} \left(\Delta_X - \beta \Delta_Y \right) = I_X(\T_{\z}) - \beta I_Y(\T_\z)\), where \(\T_\z \in\T^\Q\) is the tree corresponding to the vector \(\z\)~\cite{larsson2021information,larsson2022linearBC}.
As a result,~\eqref{eq:QasILP} shows that the value of \(Q(\tRoot;\beta)\) may be obtained by solving an ILP.
However, much more can be deduced from the relation~\eqref{eq:QasILP} by leveraging strong duality of linear programming~\cite[ch.~4]{bertsimas1997introduction}.
To this end, we note that the problem~\eqref{eq:LagrangianRelaxationOfLPrelax} (or, equivalently~\eqref{eq:QasILP}) would be a linear program were it not for the complicating constraint that the vector \(\vec{z}\) be integer.
One may work around this complicating fact to obtain a linear program by considering a relaxation of~\eqref{eq:LagrangianRelaxationOfLPrelax} formed by removing the integer constraint on \(\vec{z}\).
It is important to note that, since the Q-tree search algorithm can solve the optimization problem~\eqref{eq:LagrangianRelaxationOfLPrelax} for a given value of \(\beta \geq 0\) over the space of trees, such a relaxation does not provide much utility.
Instead, we will investigate whether or not~\eqref{eq:LagrangianRelaxationOfLPrelax} has the \emph{integrality property}~\cite{Geoffrion2010}, thereby allowing the complicating integer constraint in problem~\eqref{eq:LagrangianRelaxationOfLPrelax} to be removed without inducing a loss in the resulting relaxation.
In other words, if the feasible set \(\{\z: A\z \leq \mathbf{0},~\mathbf{0}\leq \z\leq \mathbf{1}\}\) of~\eqref{eq:LagrangianRelaxationOfLPrelax} has the integrality property, then the problem~\eqref{eq:LagrangianRelaxationOfLPrelax} may be solved equivalently as a regular linear program.
This brings us to the following definition.
%
\begin{definition}[\hspace{-1sp}\cite{wolsey1999integer}]\label{def:polyhebra}
	A \emph{Polyhedron} \(P\subseteq \Re^n\) is a set of the form \(P = \{\vec{z} : B \vec{z} \leq \vec{b}\}\), where \(B \in \Re^{m\times n}\) and \(\vec{b}\in \Re^m\).
	Moreover, \(\vec{z}\in P\) is an \emph{extreme point} of \(P\) if there does not exist \(\vec{z}_1,\vec{z}_2\in P\), \(\vec{z}_1\neq\vec{z}_2\), such that \(\vec{z} = \frac{1}{2}\vec{z}_1 + \frac{1}{2}\vec{z}_2\).
\end{definition}
Note that the set \(\{\z: A\z \leq \vec{0},~\vec{0}\leq \z\leq \vec{1}\}\) can be written in the form of Definition~\ref{def:polyhebra} by taking \(B = \left[A^\tp, I_{N}, -I_{N}\right]^\tp\) and \(\vec{b} = \left[\vec{0}^\tp,\vec{0}^\tp,\vec{1}^\tp\right]^\tp\), where \(N = \lvert \Nint(\T_\W)\rvert\).
Moreover, by writing the matrix \(B\) for our problem in this manner allows us to conclude that \(B\) has full column rank. 
With this in mind, we next define what it means for a polyhedron to be integral.
%
\begin{definition}[\hspace{-1sp}{\cite[p.~536]{wolsey1999integer}}]\label{def:integralPolyhedra}
	Let \(B\in\Re^{m\times n}\).
	A non-empty polyhedron \(P = \{\z\in\Re^n : B\z \leq \vec{b}\}\) with \(\text{rank}(B) = n\) is integral if all of its extreme points are integral.\footnote{Strictly speaking, Definition~\ref{def:integralPolyhedra} is not the \emph{definition} of integer polyhedra according to~\cite{wolsey1999integer}. However, it is an equivalent characterization of integer polyhedra that suffices for our work. See~\cite[p.~536]{wolsey1999integer} for more details.}
\end{definition}
In the language of our problem, if the polyhedron \(\{\z: A\z \leq \vec{0},~\vec{0}\leq \z\leq\vec{1}\}\) is integral, then the problem~\eqref{eq:LagrangianRelaxationOfLPrelax} may be equivalently solved as a linear program.
However, while it is intuitive that the integral property of a polyhedron depends on the inequalities that define the set, it is not clear exactly how one may establish that a polyhedron is integral from the provided constraint information (i.e., \(B\) and \(\vec{b}\)).
For this, we require the following definition. 
%
\begin{definition}[\hspace{-1sp}{\cite[p.~540]{wolsey1999integer}}]
	Let \(A\) be an integer matrix.
	Then A is \emph{totally unimodular} if the determinant of each square submatrix of \(A\) is either \(-1,~0\) or \(1\).
\end{definition}
The following theorem makes the connection between total unimodularity and integral polyhedra precise.
%
\begin{theorem}[\hspace{-1sp}{\cite[p.~541]{wolsey1999integer}}]\label{thm:TUMandIntegralPolyhedra}
	Let \(A\) be a totally unimodular matrix.
	Then the polyhedron \(\{\z: A\z \leq \vec{0},~\vec{0}\leq \z \leq \vec{1}\}\) is integral so long as it is not empty.
\end{theorem}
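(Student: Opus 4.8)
The plan is to show that every extreme point of $P = \{\z : A\z \le \vec 0,~\vec 0 \le \z \le \vec 1\}$ is integral, which by Definition~\ref{def:integralPolyhedra} establishes integrality of $P$. Writing $P = \{\z : B\z \le \vec b\}$ as in the remark following Definition~\ref{def:polyhebra}, with $B = [A^\tp, I_N, -I_N]^\tp$ and $\vec b = [\vec 0^\tp, \vec 0^\tp, \vec 1^\tp]^\tp$, recall that $B$ has full column rank $N = \lvert\Nint(\T_\W)\rvert$ and that $\vec b$ is an integer vector. The strategy rests on two ingredients: (i) upgrading the total unimodularity of $A$ to that of the full constraint matrix $B$, and (ii) the algebraic characterization of extreme points together with Cramer's rule.

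First I would establish that $B$ is totally unimodular. The rows appended to $A$ to form $B$ are precisely the rows of $I_N$ and $-I_N$, each of which is a unit row containing a single nonzero entry equal to $\pm 1$. I would invoke the elementary fact that appending to a totally unimodular matrix a row with at most one nonzero entry, itself equal to $\pm 1$, preserves total unimodularity: any square submatrix that omits the new row inherits its determinant from the previously-constructed matrix, while any square submatrix that includes the new row either contains a zero row (determinant $0$) or admits a Laplace expansion along that row reducing its determinant to $\pm$ a determinant of a square submatrix of the previously-constructed matrix, hence lying in $\{-1,0,1\}$. Applying this observation inductively over all $2N$ appended unit rows, starting from the totally unimodular matrix $A$, shows that $B$ is totally unimodular.

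Next I would use the standard characterization of extreme points. Since $\text{rank}(B) = N$, a point $\z^* \in P$ is an extreme point (in the sense of Definition~\ref{def:polyhebra}) if and only if the rows of $B$ that are active at $\z^*$ contain $N$ linearly independent rows; collecting these into a nonsingular $N \times N$ submatrix $B'$ of $B$, with corresponding subvector $\vec b'$ of $\vec b$, yields $B'\z^* = \vec b'$, so that $\z^* = (B')^{-1}\vec b'$. Because $B$ is totally unimodular and $B'$ is a nonsingular square submatrix, $\det(B') \in \{-1, +1\}$. By Cramer's rule, each coordinate satisfies $[\z^*]_i = \det(B'_i)/\det(B')$, where $B'_i$ is obtained from $B'$ by replacing its $i$th column with $\vec b'$; the numerator is an integer since every entry of $B$ and of $\vec b$ is an integer, and the denominator is $\pm 1$. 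Hence $[\z^*]_i \in \Z$ for every $i$, so $\z^*$ is integral.

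Since every extreme point of $P$ is integral, $P$ is an integral polyhedron whenever it is nonempty, completing the argument. I expect the main obstacle to be step (i): carefully justifying that stacking $A$ with $\pm I_N$ preserves total unimodularity, and in particular setting up the inductive Laplace-expansion argument so that it applies uniformly to every square submatrix regardless of how many appended unit rows it contains. The remaining steps are standard consequences of linear programming theory (the equivalence of the geometric extreme-point definition with the existence of $N$ linearly independent active constraints) and of Cramer's rule, and may be invoked with reference to~\cite{bertsimas1997introduction,wolsey1999integer}.
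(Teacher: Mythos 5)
The paper does not prove this statement at all: it is imported verbatim as a known result, with the citation to Wolsey and Nemhauser standing in for the proof. There is therefore no in-paper argument to compare against, and the relevant question is simply whether your reconstruction is sound. It is. Your proof is the standard textbook derivation: total unimodularity is preserved when $\pm$ unit rows are appended, so the full constraint matrix $B = [A^\tp, I_N, -I_N]^\tp$ is totally unimodular; an extreme point of $P$ is determined by a nonsingular $N \times N$ submatrix $B'$ of active rows, which has $\det(B') = \pm 1$ by total unimodularity; and Cramer's rule with the integral right-hand side $\vec b'$ forces every coordinate of the extreme point to be an integer. The step you flag as the main obstacle is handled correctly: in your induction, the Laplace expansion along an appended unit row reduces the determinant to $\pm$ a minor of the previously constructed matrix, which is totally unimodular by the induction hypothesis, so the argument applies uniformly no matter how many appended rows a given square submatrix contains. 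The remaining ingredient, the equivalence between the midpoint-free notion of extreme point in Definition~\ref{def:polyhebra} and the existence of $N$ linearly independent active constraints under the full-column-rank hypothesis on $B$, is a standard fact of linear programming and is correctly invoked with reference to~\cite{bertsimas1997introduction}.
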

The significance of Theorem~\ref{thm:TUMandIntegralPolyhedra} is that it provides a means to determine whether or not a polyhedron of the form \(\{\z: A\z\leq \vec{0},~\vec{0}\leq\z\leq\vec{1}\}\) is integral directly from the constraint matrix \(A\).
The challenge is then to determine whether or not the matrix \(A\) in problem~\eqref{eq:LagrangianRelaxationOfLPrelax} is totally unimodular.

Given the importance of total unimodularity in establishing the integrality property of integer programs, it is unsurprising that a number of conditions have been put forth that allows one to establish that a matrix is totally unimodular (see, for example,~\cite{wolsey1999integer} and~\cite{schrijver1998theory}).
One criterion, which will be employed in what follows, is summarized by the following theorem.
%
\begin{theorem}[\hspace{-1sp}{\cite[p.~542]{wolsey1999integer}}]\label{thm:necSufTUMConditions}
	Let \(A\) be an \(m\times n\) integer matrix.
	If for every collection of columns \(J\subseteq\{1,\ldots,n\}\) of \(A\) there exists a partition \(J_1\), \(J_2\) of \(J\) such that
	\begin{equation*}
		\left| \sum_{j\in J_1} [A]_{ij} - \sum_{j\in J_2} [A]_{ij} \right| \leq 1,
	\end{equation*}
	for all \(i=1,\ldots,m\), then \(A\) is totally unimodular.
\end{theorem}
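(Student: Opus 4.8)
The plan is to prove this Ghouila--Houri sufficiency criterion by reducing total unimodularity to a determinantal statement and then inducting on the size of square submatrices. By the definition of totally unimodular, \(A\) is totally unimodular precisely when every square submatrix has determinant in \(\{-1,0,1\}\); since a singular submatrix contributes determinant \(0\), it suffices to show that every \emph{nonsingular} \(k\times k\) submatrix \(B\) of \(A\) satisfies \(\det B \in \{-1,+1\}\). I would establish this by induction on \(k\). For the base case \(k=1\), applying the hypothesis to a single column \(J=\{j\}\) with the trivial partition \(J_1=\{j\}\), \(J_2=\varnothing\) yields \(\lvert [A]_{ij}\rvert \leq 1\), so every entry of \(A\) lies in \(\{-1,0,1\}\); this settles \(k=1\) and will also be used freely below.

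For the inductive step, assume every nonsingular submatrix of order \(k-1\) has determinant \(\pm 1\), and let \(B\) be a nonsingular \(k\times k\) submatrix with \(\delta := \det B \neq 0\). The main tool is the adjugate identity \(B\,\mathrm{adj}(B) = \delta I\): the entries of \(\mathrm{adj}(B)\) are, up to sign, minors of order \(k-1\), which by the induction hypothesis lie in \(\{-1,0,1\}\). Let \(\mathbf{c}\) denote the first column of \(\mathrm{adj}(B)\), so that \(B\mathbf{c} = \delta\,\mathbf{e}_1\) with every entry of \(\mathbf{c}\) in \(\{-1,0,1\}\), and set \(J = \{j : c_j \neq 0\}\) (nonempty since \(\mathbf{c}\neq\mathbf{0}\)). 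A crucial preprocessing step is to observe that the hypothesis is invariant under multiplying any column of \(A\) by \(-1\) — such a flip merely moves that column between \(J_1\) and \(J_2\) in any admissible partition — and I would use this to normalize the columns indexed by \(J\) so that \(c_j = +1\) for all \(j\in J\), which leaves \(\lvert\det B\rvert\) unchanged.

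Next I would extract parity information from \(B\mathbf{c}=\delta\mathbf{e}_1\). Reading off the rows gives \(\sum_{j\in J}[B]_{ij} = 0\) for \(i \geq 2\) and \(\sum_{j\in J}[B]_{1j} = \delta\). Applying the hypothesis to the column set \(J\) produces a partition \(J = J_1 \cup J_2\); encode it as the vector \(\mathbf{w}\) with \(w_j = +1\) on \(J_1\), \(w_j = -1\) on \(J_2\), and \(w_j = 0\) otherwise, so that \((B\mathbf{w})_i = \sum_{j\in J_1}[B]_{ij} - \sum_{j\in J_2}[B]_{ij}\) satisfies \(\lvert (B\mathbf{w})_i\rvert \leq 1\) for all \(i\). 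Since \(\sum_{J_1}[B]_{ij} - \sum_{J_2}[B]_{ij} \equiv \sum_{j\in J}[B]_{ij} \pmod 2\), the row sums force \((B\mathbf{w})_i = 0\) for \(i\geq 2\) (an even integer of modulus at most \(1\)). Nonsingularity of \(B\) together with \(\mathbf{w}\neq \mathbf{0}\) rules out \(B\mathbf{w}=\mathbf{0}\), so \((B\mathbf{w})_1 = \pm 1\) and hence \(B\mathbf{w} = \pm\mathbf{e}_1\). Comparing with \(B\mathbf{c}=\delta\mathbf{e}_1\) and using nonsingularity gives \(\mathbf{w} = \pm\,\mathbf{c}/\delta\); evaluating the \(j\)-th entry for any \(j\in J\), where \(w_j=\pm 1\) and \(c_j = 1\), yields \(1 = 1/\lvert\delta\rvert\), i.e. \(\lvert\det B\rvert = 1\), completing the induction.

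The main obstacle is the sign and parity bookkeeping in the inductive step — specifically, arranging the column sign-normalization so that the adjugate column \(\mathbf{c}\) and the partition vector \(\mathbf{w}\) become directly comparable, and then upgrading the easy observation that \(\delta\) is odd to the sharp conclusion \(\lvert\delta\rvert = 1\). The decisive point is that \(\mathbf{c}/\delta\) is forced to coincide up to sign with a \(\{-1,0,1\}\)-valued vector, which is possible only when \(\lvert\delta\rvert=1\); ensuring that the partition furnished by the hypothesis lands exactly on this vector, rather than on some unrelated signing of \(J\), is where the parity argument does the real work.
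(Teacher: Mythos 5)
The paper never proves this statement: Theorem~\ref{thm:necSufTUMConditions} is the Ghouila--Houri sufficiency criterion, imported with a citation to Wolsey and Nemhauser and used purely as a black box in the proof of Proposition~\ref{prop:TUMIBtrees}, so there is no in-paper argument to compare yours against. Judged on its own, your proposal is correct, and it is essentially the classical proof of Ghouila--Houri's theorem: reduce total unimodularity to showing every nonsingular \(k\times k\) submatrix \(B\) has \(\lvert\det B\rvert=1\); induct on \(k\); use \(B\,\mathrm{adj}(B)=\delta I\) and the induction hypothesis to get a \(\{-1,0,1\}\)-vector \(\mathbf{c}\) with \(B\mathbf{c}=\delta\mathbf{e}_1\); apply the hypothesis to \(J=\{j: c_j\neq 0\}\) to get a \(\{-1,0,1\}\)-vector \(\mathbf{w}\) with \(\lvert(B\mathbf{w})_i\rvert\leq 1\); kill the entries \(i\geq 2\) by parity; and conclude \(B\mathbf{w}=\pm\mathbf{e}_1\), hence \(\mathbf{w}=\pm\mathbf{c}/\delta\) and \(\lvert\delta\rvert=1\). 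Two minor remarks. First, your sign-normalization has a harmless global-sign slip: if \(D\) is the diagonal \(\pm1\) matrix flipping the columns \(j\in J\) with \(c_j=-1\), then the first column of \(\mathrm{adj}(BD)\) equals \(\det(D)\,D\mathbf{c}\), which is \(\pm\mathbf{1}_J\) but not necessarily \(+\mathbf{1}_J\); this is immaterial, since the final comparison only uses \(\lvert w_j\rvert=\lvert c_j\rvert=1\). Second, the normalization is actually dispensable: since \(c_j[B]_{ij}\equiv [B]_{ij}\pmod 2\) whenever \(c_j=\pm1\), one has \((B\mathbf{w})_i\equiv(B\mathbf{c})_i\pmod 2\) for the un-normalized \(\mathbf{c}\) as well, so the parity step and the endgame go through verbatim without touching the columns of \(A\).
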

We are now in a position to show that the polyhedron \(\{\z : A\z \leq \vec{0},~\vec{0}\leq\z\leq\vec{1}\}\) in problem~\eqref{eq:LagrangianRelaxationOfLPrelax} is integral.
%
\begin{proposition}\label{prop:TUMIBtrees}
	Let \(A\) be the integer matrix in problem~\eqref{eq:LagrangianRelaxationOfLPrelax}.
	Then \(A\) is totally unimodular.
\end{proposition}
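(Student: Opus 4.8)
The plan is to apply Theorem~\ref{thm:necSufTUMConditions} directly, after first pinning down the entrywise structure of \(A\). Each row of \(A\) encodes one hierarchical constraint \([\z]_{t'} - [\z]_t \leq 0\) with \(t \in \mathcal{B}\) and \(t' \in \chd(t)\). Since \(t'\) is an expandable child it lies in \(\Nint(\T_\W)\) and therefore indexes a column of \(A\), and likewise \(t \in \mathcal{B} \subseteq \Nint(\T_\W)\) indexes a column. Hence in the row associated with the pair \((t, t')\) the only nonzero entries are a single \(+1\) in the column for \(t'\) and a single \(-1\) in the column for \(t\). In particular, every row of \(A\) contains exactly one \(+1\), exactly one \(-1\), and zeros elsewhere.

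With this structure established, I would verify the sufficient condition of Theorem~\ref{thm:necSufTUMConditions} using the trivial partition. Given any collection of columns \(J \subseteq \{1, \ldots, n\}\), set \(J_1 = J\) and \(J_2 = \varnothing\), so that the quantity to be bounded reduces to \(\lvert \sum_{j \in J} [A]_{ij} \rvert\) for each row \(i\). Because row \(i\) has a single \(+1\) (say in column \(c_+\)) and a single \(-1\) (in column \(c_-\)), the sum \(\sum_{j \in J} [A]_{ij}\) equals \(+1 - 1 = 0\) when both \(c_+, c_- \in J\), equals \(+1\) when only \(c_+ \in J\), equals \(-1\) when only \(c_- \in J\), and equals \(0\) when neither is in \(J\). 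In every case the absolute value is at most \(1\), so the hypothesis of Theorem~\ref{thm:necSufTUMConditions} holds for this single partition across all rows simultaneously, and the theorem yields that \(A\) is totally unimodular.

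I anticipate no genuine obstacle here; the only point requiring care is the bookkeeping behind ``exactly one \(+1\) and one \(-1\) per row,'' which rests on both \(t\) and its expandable child \(t'\) belonging to \(\Nint(\T_\W)\) and hence corresponding to columns. Even were one agnostic about whether a child column is always present, the same trivial partition still succeeds, since a row with a single nonzero entry contributes a sum of magnitude at most \(1\). I would close by remarking that this is precisely the classical fact that the incidence matrix of a directed graph---here the directed tree on \(\Nint(\T_\W)\) with one arc per parent-child constraint---is totally unimodular, which explains structurally why the hierarchical constraints are so well behaved and, via Theorem~\ref{thm:TUMandIntegralPolyhedra}, why dropping the integrality requirement on \(\z\) in~\eqref{eq:LagrangianRelaxationOfLPrelax} incurs no relaxation loss.
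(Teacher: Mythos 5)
Your proof is correct and follows essentially the same route as the paper: both observe that each row of \(A\) has exactly one \(+1\), one \(-1\), and zeros elsewhere, and then apply Theorem~\ref{thm:necSufTUMConditions} with the trivial partition \(J_1 = J\), \(J_2 = \varnothing\). The extra case analysis and the remark on directed-graph incidence matrices are fine additions but do not change the argument.
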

%
\begin{proof}
	The proof is given in Appendix~\ref{app:proofOfTUMProp}.
\end{proof}
As a result of Proposition~\ref{prop:TUMIBtrees}, we may conclude that the polyhedron in problem~\eqref{eq:LagrangianRelaxationOfLPrelax} is integral.
In turn, this implies that the integer program~\eqref{eq:LagrangianRelaxationOfLPrelax} may be solved by the linear program obtained when the integer constraint is removed.
More explicitly, Proposition~\ref{prop:TUMIBtrees} implies that~\eqref{eq:LagrangianRelaxationOfLPrelax} is equivalent to the \emph{linear program}
%
\begin{equation}\label{eq:LPformOfQtreeSearchProblem}
\min\{\z^\tp (\Delta_X - \beta \Delta_Y) : A\z \leq \vec{0},~\vec{0}\leq\z\leq\vec{1}\} + \beta D.
\end{equation}
We conclude that the problem~\eqref{eq:IBtreeProb} for which Q-tree search finds an optimal tree solution for a given value of \(\beta \geq 0\) can be equivalently realized as a linear program according to~\eqref{eq:LPformOfQtreeSearchProblem}.
Consequently, in order to obtain a solution to~\eqref{eq:IBtreeProb} for a given value of \(\beta \geq 0\), one may either employ Q-tree search or solve the linear program~\eqref{eq:LPformOfQtreeSearchProblem}.
In light of the above observations, we may exploit the strong duality of linear programs to develop an alternate method to Algorithm~\ref{alg:subgradOptPhaseTrans} for selecting a value of \(\beta \geq 0\) as a function of \(D \geq 0\).

\subsection{Leveraging LP Duality to Select \(\beta\) as a function of \(D\)}\label{subsec:}
While the observation that problem~\eqref{eq:LagrangianRelaxationOfLPrelax} may be equivalently solved as a linear program is useful on its own, it also tacitly provides a method by which one may select \(\beta \geq 0\) as a function of \(D \geq 0\) by leveraging the strong duality property of linear programs.
In more detail, the linear programming relaxation of the primal problem~\eqref{eq:shortFormILP} for a given value of \(D \geq 0\) is given by
%
\begin{equation}\label{eq:LPrelaxOfILP}
\bar v(D) = \min\{\z^\tp \Delta_X : \z^\tp \Delta_Y\geq D,~A\z\leq \vec{0},~\vec{0}\leq\z\leq\vec{1}\}.
\end{equation}
Note that~\eqref{eq:LPrelaxOfILP} is nothing more than~\eqref{eq:shortFormILP} with the integer constraints removed.
It is also important to note that the solution to~\eqref{eq:LPrelaxOfILP} need not be integer.
However, problems~\eqref{eq:LPformOfQtreeSearchProblem} and~\eqref{eq:LPrelaxOfILP} are related via the dual function of~\eqref{eq:LPrelaxOfILP}.
Specifically, the dual of the linear programming relaxation~\eqref{eq:LPrelaxOfILP} with respect to the constraint \(\z^\tp \Delta_Y\geq D\) is given by 
%
\begin{equation}\label{eq:dualFunctionOfLPRelax}
\bar d(\beta) = \min\{\z^{\tp}(\Delta_X - \beta \Delta_Y): A\z\leq\vec{0},~\vec{0}\leq\z\leq\vec{1}\} + \beta D,
\end{equation}
where \(\beta\geq 0\) is a dual variable.
The dual problem of~\eqref{eq:LPrelaxOfILP} is then 
%
\begin{equation}\label{eq:LPrelaxationDualProblem}
\max \{\bar d(\beta) : \beta \geq 0\}.
\end{equation}
Letting \(\beta^*\in\argmax\{\bar d(\beta) : \beta \geq 0\}\), it follows from the strong duality property of linear programs~\cite[ch.~4]{bertsimas1997introduction} that \(\bar d(\beta^*) = \bar v(D)\).
As a result, we see that a value of \(\beta^* \geq 0\) for a given value of \(D \geq 0\) can be obtained as the dual variable corresponding to the constraint \(\z^\tp \Delta_Y \geq D\) in the linear program~\eqref{eq:LPrelaxOfILP}.
It is important to note that, via the integrality property of the problem~\eqref{eq:LagrangianRelaxationOfLPrelax} (which is equivalent to~\eqref{eq:dualFunctionIBTrees}) and Lemma~\ref{lem:suffDualAndPTsOPT}, it follows that \(\beta^* \geq 0\) obtained by maximizing \(\bar d(\beta)\) over \(\beta \geq 0\) will satisfy \(d(\beta^*) = d(\hat \beta)\) for some tree phase transition \(\hat \beta\in \betaPhaseTransitionSet\).
Consequently, for a given value of \(D \geq 0\), we may view the linear programming dual approach delineated above and Algorithm~\ref{alg:subgradOptPhaseTrans} as equivalent in terms of finding a setting of \(\beta \geq 0\) that maximizes the dual function.
We remark, however, that Algorithm~\ref{alg:subgradOptPhaseTrans} returns the collection of \emph{all} tree phase transitions, whereas the LP dual approach discussed above returns only a single value of \(\beta \geq 0\).
Moreover, the approach employed by Algorithm~\ref{alg:subgradOptPhaseTrans} leverages the mechanism of Q-tree search to determine the tree phase transitions, and does not require solving a (potentially large) linear programming problem.
%

\section{Numerical Example \& Discussion} \label{sec:resultsDiscuss}

\begin{figure*}
    \subfloat[\(D / I(X;Y) =  0.59\)\label{fig:dualShape1}]{\includegraphics[width=0.3\textwidth]{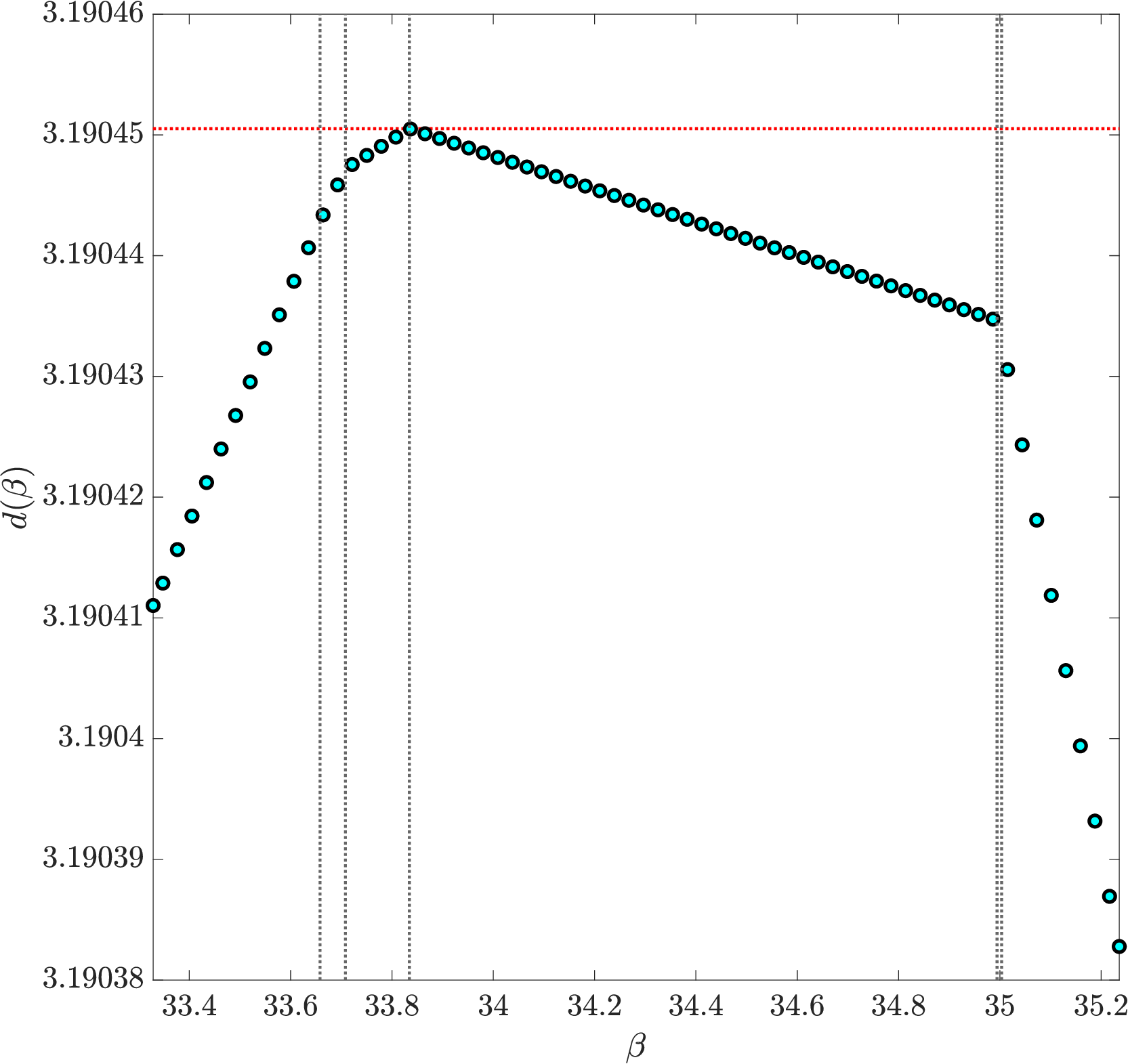}}
    \hfil
    \subfloat[\(D / I(X;Y) = 0.69\)\label{fig:dualShape2}]{\includegraphics[width=0.3\textwidth]{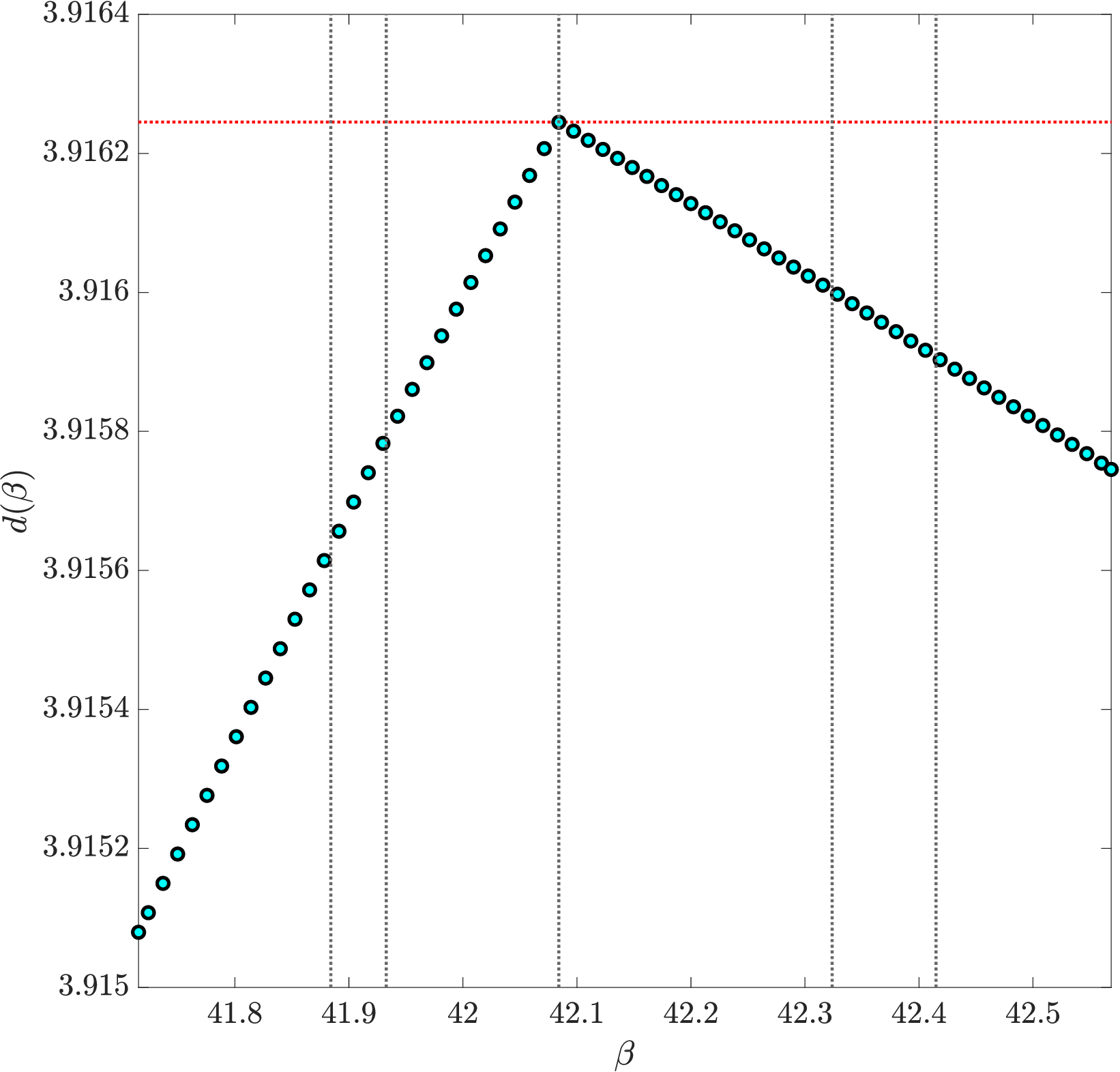}}
    \hfil
    \subfloat[\(D / I(X;Y) = 0.74\)\label{fig:dualShape3}]{\includegraphics[width=0.3\textwidth]{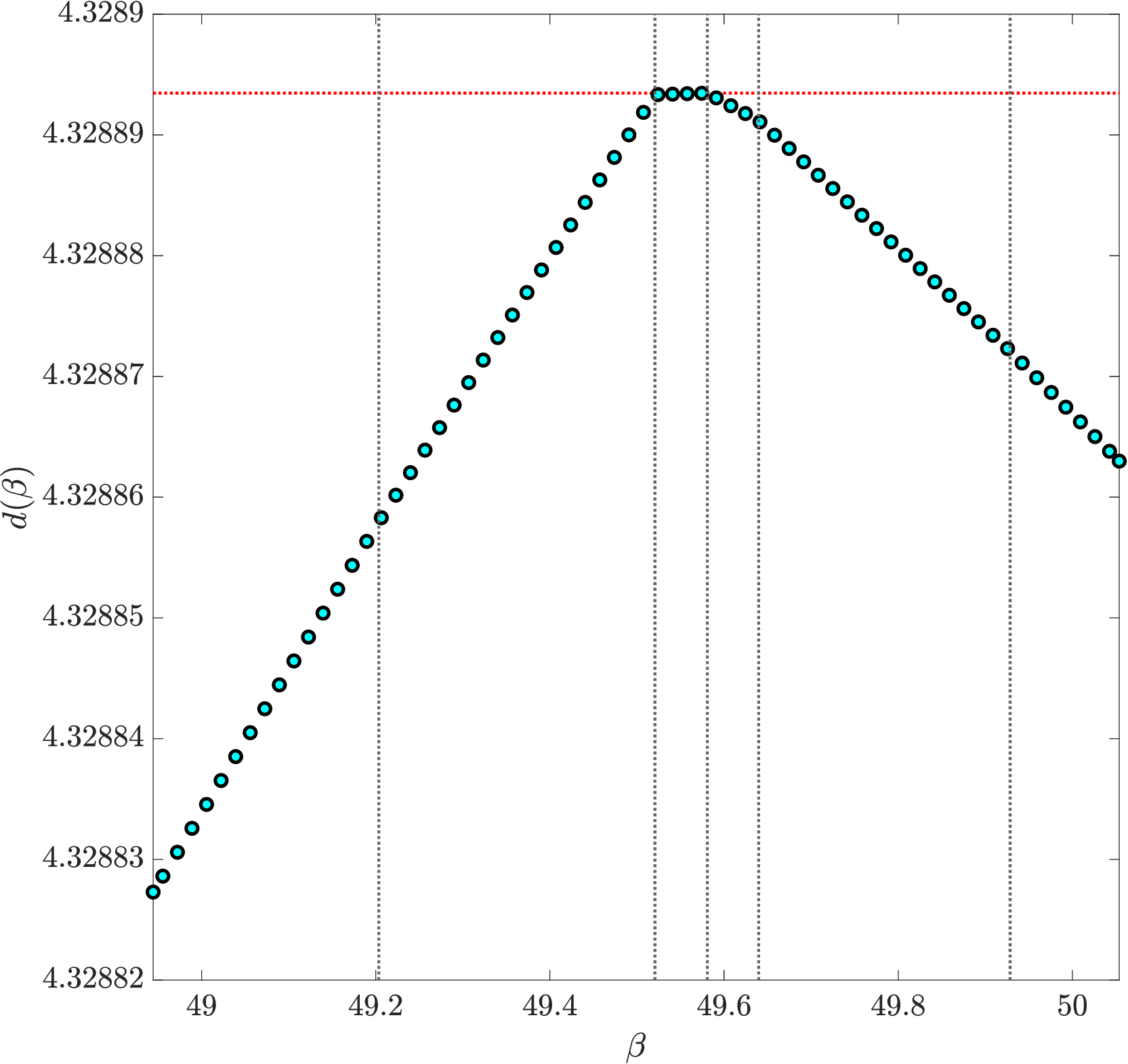}}
    \caption{Simulation results showing the shape of dual function in the vicinity of its maximal value versus \(\beta\) for various \(D\). In each figure, the dual function is shown using two methods: the Q-function method (black dots) and the phase-transition method (cyan dots). The maximal dual function value is obtained from the linear programming approach discussed in Section~\ref{sec:ILPandLPtrees} (dashed horizontal red line), alongside the computed tree phase transitions (vertical lines).}
    \label{fig:dualFunctionShapeLocalOPT}
\end{figure*}

%
In this section, we discuss simulation results obtained from implementation of the theoretical developments throughout the paper.
To this end, we consider generating hierarchical, multi-resolution, tree abstractions of the \(128 \times 128\) environment show in Figure~\ref{fig:std128Env}.
We assume that \(p(x,y) = p(y|x) p(x)\), where the relevant information in the environment is represented by \(Y: \Omega \to \{0,1\}\) where \(p(Y = 1|x)\) is given by the cell intensity shown in Figure~\ref{fig:std128Env}, and \(p(x)\) is assumed to be uniform over finest-resolution cells.
%

\begin{figure}[tbh]
    \centering
    \includegraphics[width=0.85\columnwidth]{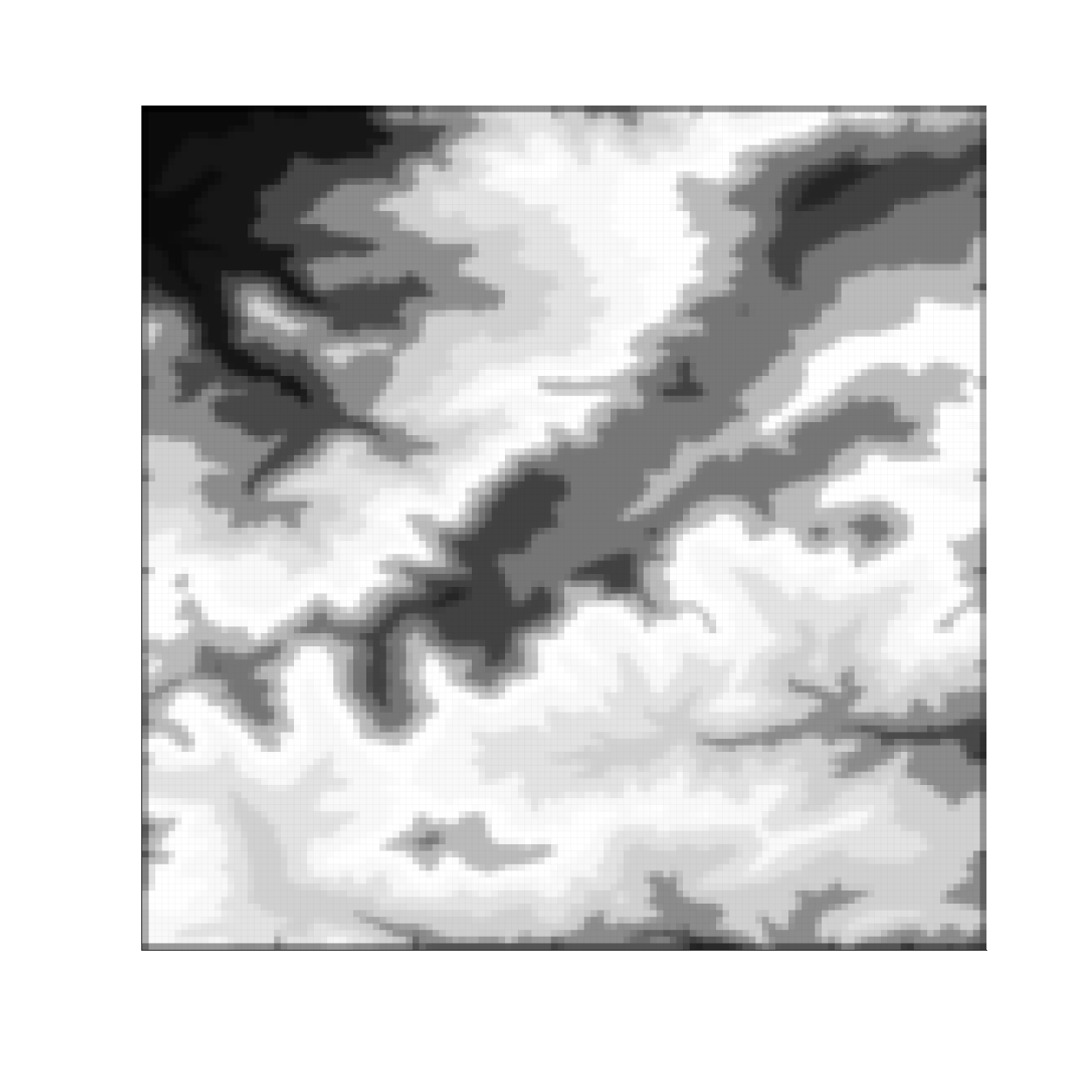}
    \caption{Finest-resolution \(128 \times 128\) enviroment.}
    \label{fig:std128Env}
\end{figure}
In Figure~\ref{fig:dualOptValueVariousMethods} we show the dual function in the vicinity of its maximial value for various settings of \(D\). 
It is important to note that there are three distinct methods employed to generate the results shown in Figure~\ref{fig:dualFunctionShapeLocalOPT}.
Specifically, the results in Figure~\ref{fig:dualFunctionShapeLocalOPT} were created using: (i) the Q-function method, which employs~\eqref{eq:dualFunctionQ} to generate the value of \(d(\beta)\), (ii) the phase transition method, which utilizes the phase transitions computed from Algorithm~\ref{alg:phaseTransitionAlgorithm} and~\eqref{eq:dualFunctionInTermsOfPTalgVars} to determine \(d(\beta)\), and (iii) the linear programming approach discussed in Section~\ref{sec:ILPandLPtrees} which uses the total unimodularity property to obtain the maximal value of the dual problem~\eqref{eq:IBtreeDualProblem}.
Even more importantly, Figure~\ref{fig:dualFunctionShapeLocalOPT} shows the piecewise linear character of the dual function, and how the dual function is changed at the tree phase transitions.
Namely, we see in the results displayed in Figure~\ref{fig:dualFunctionShapeLocalOPT} how the slope of \(d(\beta)\) is changed at the tree phase transitions, which are shown by the vertical lines in the figure (compare Figures~\ref{fig:dualFunctionVisualization} and~\ref{fig:dualFunctionShapeLocalOPT}).
%

\begin{figure}[bth]
    \centering
    \includegraphics[width=0.8\columnwidth]{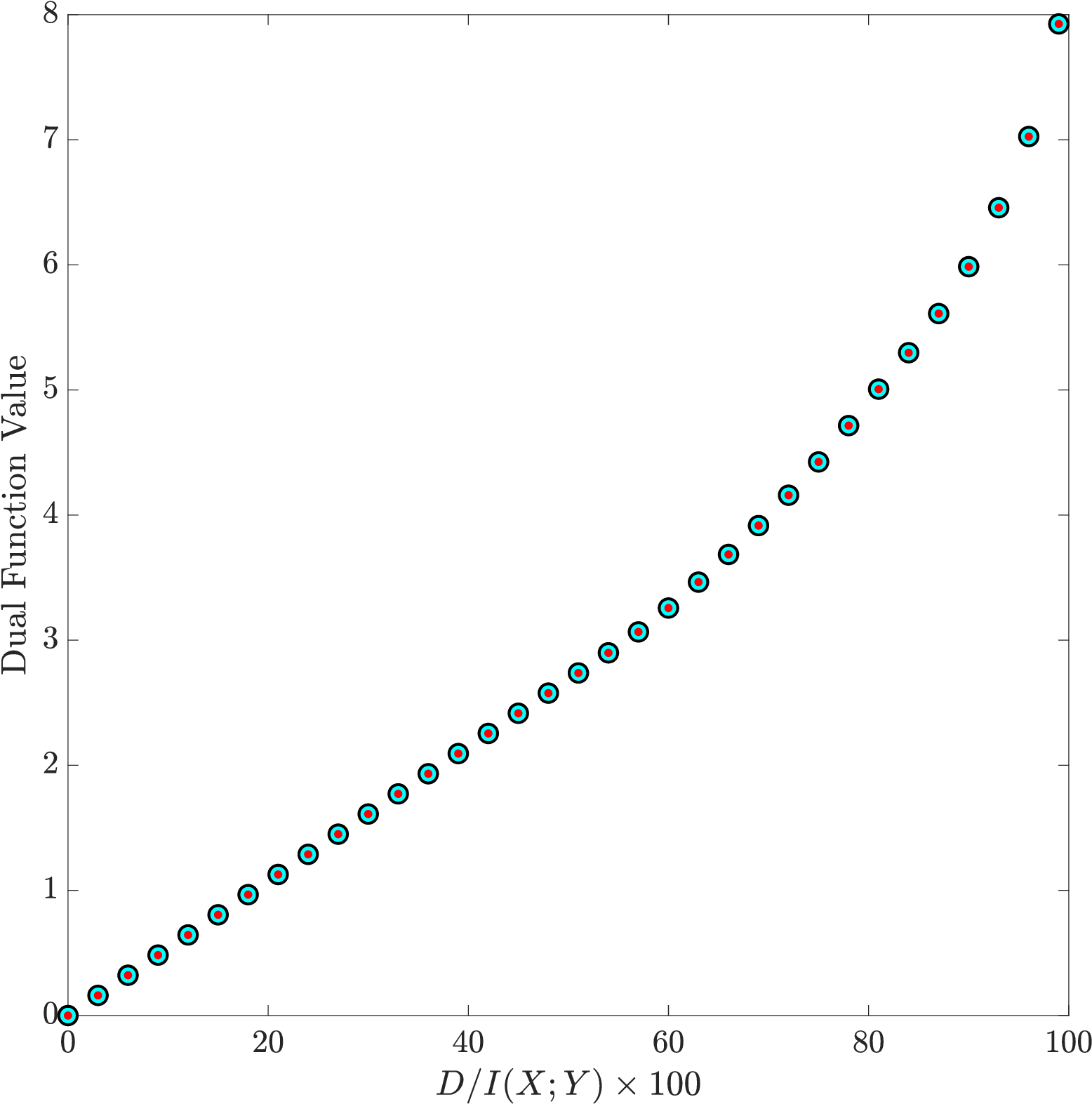}
    \caption{Optimal dual function value versus \(D\) obtained by employing three distinct approaches: (i) linear programming relaxation (red dots), (ii) phase transition method (i.e., Algorithm~\ref{alg:subgradOptPhaseTrans}) (cyan dots), (iii) Q-tree search (black squares).}
    \label{fig:dualOptValueVariousMethods}
\end{figure}
Moreover, we also see that employing Q-tree search to obtain a tree that attains the dual optimal value may result in a tree that is not primal feasible.
To more clearly understand why this is the case, we recall that, if \(\beta_{\text{lb}}\) and \(\beta_{\text{ub}}\) are two distinct tree phase transitions satisfying \(\beta_{\text{lb}} < \beta_{\text{ub}}\), then for all \(\beta \in (\beta_{\text{lb}}, \beta_{\text{ub}}]\) the tree solution to~\eqref{eq:IBtreeProb} returned by Q-tree search will be unchanged and determined by \(\beta_{\text{ub}}\).
Furthermore, if we let the corresponding Q-tree search solution for the interval \((\beta_{\text{lb}}, \beta_{\text{ub}}]\) be denoted \(\T_{\beta_{\text{ub}}}\in\T^\Q\), it then follows that \(D - I_Y(\T_{\beta_{\text{ub}}}) \geq 0\) or \(D - I_Y(\T_{\beta_{\text{ub}}}) < 0\) for the tree \(\T_{\beta_{\text{ub}}}\).
Thus, considering the search process of Algorithm~\ref{alg:subgradOptPhaseTrans}, we see that in cases when there does not exist a tree obtainable by Q-tree search with exactly \(D\) units of information, we have \(D - I_Y(\T_{\beta_{\text{ub}}}) > 0\) or \(D - I_Y(\T_{\beta_{\text{ub}}}) < 0\).
Of these two alternatives, Q-tree search will always return the tree for which \(D - I_Y(\T_{\beta_{\text{ub}}}) > 0\), as the maximal value of the dual function will be attained at the tree phase transition \(\beta_{\text{ub}}\).
As a result, the Q-tree search returned tree will not be primal feasible.
However, should a primal feasible tree be desired, one may create one by running Q-tree search with the trade-off parameter set to \(\beta_{\text{ub}} + \varepsilon\) for any \(\varepsilon > 0\), as this will ensure that a tree phase transition occurs resulting with \(D - I_Y(\T_{\beta_{\text{ub}} + \varepsilon}) < 0\).
Moreover, it follows from~\cite{Geoffrion2010} that the sub-optimality of the primal feasible tree \(\T_{\beta_{\text{ub}} + \varepsilon} \in \T^\Q\) is given by \(0 \leq I_X(\T_{\beta_{\text{ub}} + \varepsilon}) - v(D) \leq (\beta_{\text{ub}} + \varepsilon)(I_Y(\T_{\beta_{\text{ub}} + \varepsilon}) - D)\).
The case when Q-tree search does not return a primal feasible solution may be seen in Figures~\ref{fig:dualShape1} and~\ref{fig:dualShape2}, where the slope of the dual function switches from positive to negative on either side of the maximum (i.e., some tree phase transition). 
In light of the above discussion, if Q-tree search returns a tree that is not primal feasible, a primal feasible tree may be easily obtained by triggering a subsequent tree phase transition, which is accomplished by perturbing the dual variable by a small number \(\varepsilon > 0\).
Furthermore, we show in Figure~\ref{fig:dualOptValueVariousMethods} the dual optimal value as a function of \(D\) as determined by the approaches discussed in Sections~\ref{sec:PTsAndDualOpt} and~\ref{sec:ILPandLPtrees}.
Specifically, in Figure~\ref{fig:dualOptValueVariousMethods} we show the dual optimal value using: (i) the linear programming relaxation, which utilizes total unimodularity property, the phase transition method which employs Algorithm~\ref{alg:subgradOptPhaseTrans} to find a setting of the dual variable that maximizes the dual function, and (iii) the Q-function method~\eqref{eq:dualFunctionQ} when \(\beta\) is selected as the dual variable obtained from the linear programming relaxation method.
We again see very good agreement between the approaches.
%

\begin{figure}[bth]
    \centering
    \includegraphics[width=0.8\columnwidth]{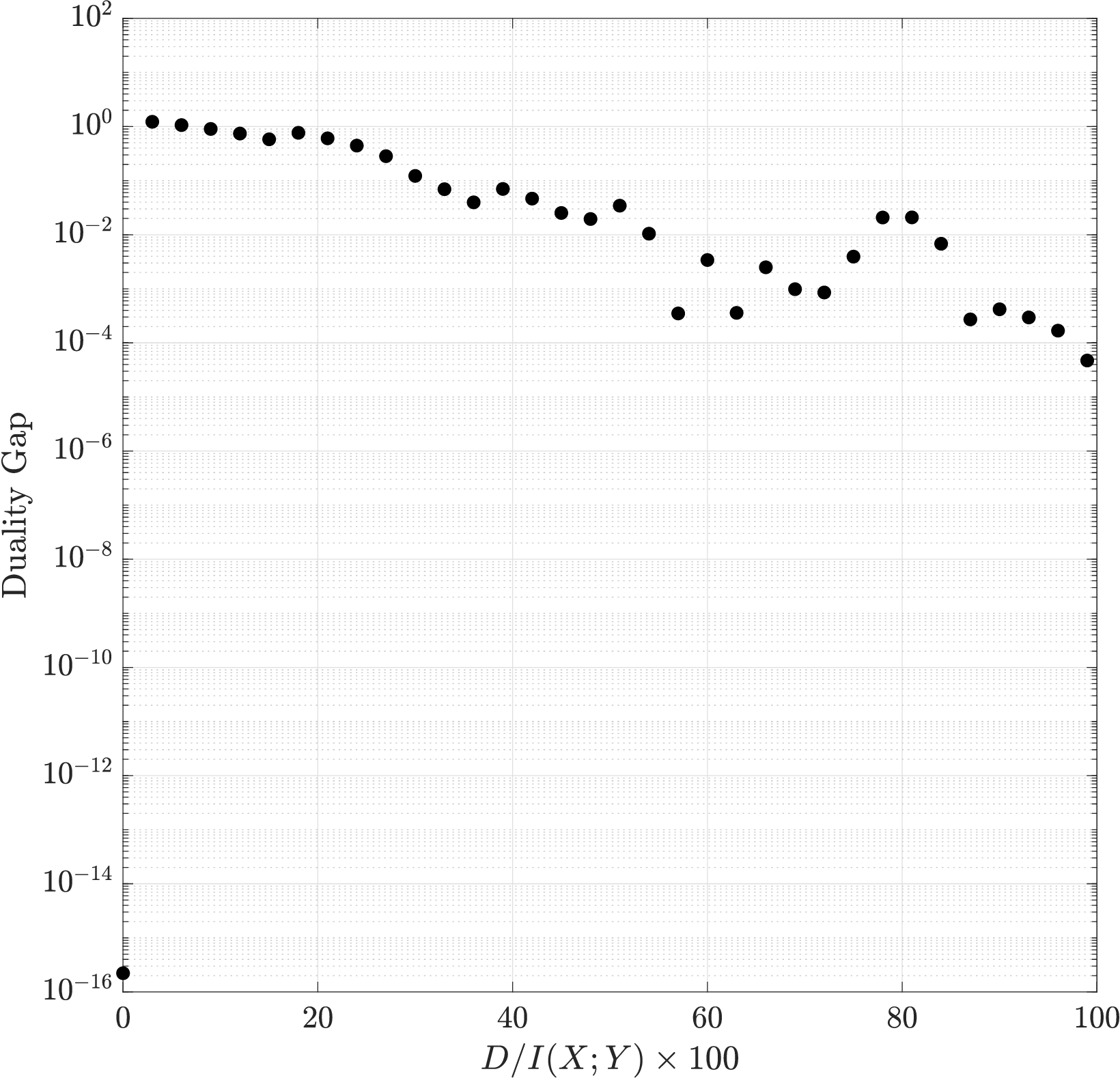}
    \caption{Duality gap as a function of \(D\).}
    \label{fig:dualGapVsD}
\end{figure}
Lastly, in Figure~\ref{fig:dualGapVsD}, we show the duality gap as the setting of the hard-constraint \(D\) is changed.
Recall that the duality gap is the difference between the primal and optimal values; that is, \(v(D) - d(\beta^*)\) where \(\beta^*\) is a solution to the dual problem~\eqref{eq:IBtreeDualProblem}.
In the results shown in Figure~\ref{fig:dualGapVsD} there is only a single point where strong duality holds, which occurs when \(D = 0\).
This is perhaps unsurprising, since the root tree (i.e., the tree aggregating all finest-resolution cells to a single node) does not contain any relevant information, and thus there exists a tree obtainable by employing Algorithm~\ref{alg:phaseTransitionAlgorithm} and~\ref{alg:subgradOptPhaseTrans} and so the conditions of Theorem~\ref{thm:charaterizeStrongDualityIBtree} are satisfied.
As shown in Figure~\ref{fig:dualGapVsD}, in all other cases there is a duality gap which occurs since there does not exist a tree obtainable by Q-tree search which has exactly \(D\) units of relevant information.
In other words, strong duality fails to hold when there does not exist a tree solution to~\eqref{eq:IBtreeProb} that retains precisely \(D\) units of information regarding the relevant variable \(Y\).
In this light we may think of the utility of Algorithm~\ref{alg:phaseTransitionAlgorithm} as not only enabling efficient methods to find a setting of the dual variable that solves the dual problem (e.g., via Algorithm~\ref{alg:subgradOptPhaseTrans}), but also to specify for exactly which values of \(D\) the Q-tree search method can be used to generate a primal optimal solution.
%

\section{Conclusions}\label{sec:conclusion}

In this paper, we have considered the problem of establishing a formal connection between two information-theoretic tree abstraction problems that leverage the information-bottleneck principle.
More specifically, we consider the problem of determining the relationship between the hard- and soft-constrainted tree-search formulations that originally appeared in~\cite{larsson2021information} and~\cite{larsson2020q}, respectively.
To accomplish our goal, we leverage concepts from Lagrangian relaxation, duality theory, and integer programming.
Our accomplishments in this paper are two-fold: (1) we establish that the soft-constrained and hard-constrained, discrete, tree-search formulations are generally not equivalent, and (2) we investigate and develop methods that allow the selection of the weight parameter of the soft-constrained formulation as a function of the primal constraint.
To accomplish our goals, we delineate how the weight parameter of the soft-constrained problem may be viewed as a dual variable of the hard-constrained problem by establishing a bridge between the dual function and the Q-function from Q-tree search~\cite{larsson2020q}.
We then show how the structure of our problem may be exploited to develop an algorithm which allows the dual problem to be tractably solved as a function of the primal constraint that operates by leveraging the properties of the Q-function and the mechanism of Q-tree search.
An algorithm for solving the dual problem is proposed and utilized to characterize the conditions under which strong duality will hold in the discrete, tree-abstraction problem.
Furthermore, we investigate the integer linear programming formulation of the hard-constrained problem~\cite{larsson2021information,larsson2021informationB} and develop an alternate approach towards the selection of the dual variable via strong duality of linear programs.
To this end, we show that the constraint matrix characterizing the set of all valid, multi-resolution hierarchical tree abstractions is totally unimodular, and delineate how this result may be utilized to draw a connection between the weight parameter in the soft-constrained tree-search problem and the dual variable in the linear programming relaxation of the hard-constrained formulation.
Empirical results are provided that corroborate the theoretical developments, and the importance of the work for abstraction design in resource-limited autonomous systems was discussed.
%

\begin{appendices}

\section{Proof of Proposition~\ref{prop:QfunctionAndObjIBTree}}\label{app:QfunctionAndObjIBTreePrf}	

Let \(\beta \geq 0\) be given, and assume \(\T_{\beta}\in\T^\Q\) is a tree that attains the minimum in~\eqref{eq:IBtreeProb}; that is, 
\begin{equation*}
	\T_\beta \in \argmin\{I_X(\T) - \beta I_Y(\T): \T\in\T^\Q\}.
\end{equation*}
Next, let \(\Tilde \T \in \T^\Q\) be the tree for which \(Q(\tRoot;\beta) = \sum_{s\in\Nint(\tilde \T_{(\tRoot)})}\Delta I_X(s) - \beta \Delta I_Y(s)\), which is known to exist from the second part of Lemma~\ref{lem:existenceOfTreeAndQfunction}.
Importantly, notice that \(\Nint(\tilde \T) = \Nint(\tilde \T_{(\tRoot)})\) as all nodes in the tree \(\tilde \T\) are descendant from the root node.
As a result, \(Q(\tRoot;\beta) = I_X(\tilde \T) - \beta I_Y(\tilde \T)\) for some \(\tilde \T\in\T^\Q\), and thus, from the definition of \(\T_\beta\), we have
\begin{equation*}
	I_X(\T_\beta) - \beta I_Y(\T_\beta) \leq I_X(\tilde \T) - \beta I_Y(\tilde \T) = Q(\tRoot;\beta).
\end{equation*}
However, from the first portion of Lemma~\ref{lem:existenceOfTreeAndQfunction}, it follows that \(Q(\tRoot;\beta) \leq \sum_{s\in \Nint(\T_{(\tRoot)})} \Delta I_X(s) - \beta \Delta I_Y(s) = I_X(\T) - \beta I_Y(\T)\) for all \(\T\in\T^\Q\).
Consequently, since \(\T_\beta \in \T^\Q\), we have 
\begin{align*}
	Q(\tRoot;\beta)& \leq \sum_{s \in \Nint(\T_{\beta(\tRoot)})}\Delta I_X(s) - \beta \Delta I_Y(s), \\
	&=  I_X(\T_\beta)- \beta I_Y(\T_\beta).
\end{align*}
As a result, we obtain
\begin{align*}
	Q(\tRoot;\beta) &\leq I_X(\T_\beta) - \beta I_Y(\T_\beta),\\
	&\leq I_X(\tilde \T) - \beta I_Y(\tilde \T) = Q(\tRoot;\beta),
\end{align*}
and therefore, \(Q(\tRoot;\beta) = I_X(\T_\beta) - \beta I_Y(\T_\beta)\), or, equivalently, 
\begin{equation*}
	Q(\tRoot;\beta) = \min\{I_X(\T) - \beta I_Y(\T): \T\in\T^\Q\},
\end{equation*}
which establishes the result. \hfill \(\qedsymbol\)

\section{Proof of Proposition~\ref{prop:QfunctionMonotone}}\label{app:QfunctionMonotonePrf}

The proof is given by induction.
Consider any node \(t \in \N_{\ell - 1}(\T_\W)\), and assume \(\beta_1 \geq \beta_2\).
Notice that, since \(\beta_1 \geq \beta_2\), we have \(\Delta I_X(t) - \beta_1 \Delta I_Y(t) \leq \Delta I_X(t) - \beta_2 \Delta I_Y(t)\).
Consequently, we have \(\min\{\Delta I_X(t) - \beta_1 \Delta I_Y(t),~0\} \leq \min\{\Delta I_X(t) - \beta_2 \Delta I_Y(t),~0\}\), and therefore
\begin{align*}
	Q(t;\beta_1) = &\min\{\Delta I_X(t) - \beta_1 \Delta I_Y(t),~0\}, \\
	&\leq \min\{\Delta I_X(t) - \beta_2 \Delta I_Y(t),~0\} = Q(t;\beta_2),
\end{align*}
where equality follows from the definition of the Q-function~\eqref{eq:QfunctionDef} for the node \(t\), recalling that nodes at depth \(\ell\) have a Q-function value equal to zero.
The above establishes \(Q(t;\beta_1) \leq Q(t;\beta_2)\) for all \(t\in\N_{\ell-1}(\T_\W)\).

\vspace{5pt}

\noindent 
Assume that the result holds for all \(t\in \N_{k+1}(\T_\W)\), \(0 \leq k \leq \ell - 2\), and consider any \(t\in \N_{k}(\T_\W)\).
Then, since \(\beta_1 \geq \beta_2\) we have that \(\Delta I_X(t) - \beta_1 \Delta I_Y(t) \leq \Delta I_X(t) - \beta_2 \Delta I_Y(t)\), which, together with the induction hypothesis, yields
\begin{align*}
	\Delta I_X(t) - \beta_1 &\Delta I_Y(t) + \sum_{t'\in\chd(t)} Q(t';\beta_1) \\
	&\leq \Delta I_X(t) - \beta_2 \Delta I_Y(t) + \sum_{t'\in\chd(t)} Q(t';\beta_1), \\
	&\leq \Delta I_X(t) - \beta_2 \Delta I_Y(t) + \sum_{t'\in\chd(t)} Q(t';\beta_2),
\end{align*}
where the final inequality utilizes the induction hypothesis as \(\chd(t) \subseteq \N_{k+1}(\T_\W)\).
As a result, we obtain
\begin{align*}
	Q(t;\beta_1) &= \min\{\Delta I_X(t) - \beta_1 \Delta I_Y(t) + \sum_{t'\in\chd(t)} Q(t';\beta_1),~0\},\\
	&\leq \min\{\Delta I_X(t) - \beta_2 \Delta I_Y(t) + \sum_{t'\in\chd(t)} Q(t';\beta_2),~0\},\\
	&= Q(t;\beta_2).
\end{align*}
Therefore, if \(\beta_1 \geq \beta_2\) then \(Q(t;\beta_1) \leq Q(t;\beta_2)\). \hfill \(\qedsymbol\)
%

\section{Proof of Lemma~\ref{lem:suffDualAndPTsOPT}}\label{app:suffDualAndPTsOPTProof}	

Let \(\beta^* \geq 0\) be a solution to the dual problem~\eqref{eq:IBtreeDualProblem}.
Now define \(\beta_{\textrm{min}} = \min(\betaPhaseTransitionSet)\) and \(\beta_{\textrm{max}} = \max(\betaPhaseTransitionSet)\), and consider three cases: \(0\leq \beta^* \leq \beta_{\textrm{min}}\), \(\beta_{\textrm{max}} \leq \beta^*\), and \(\beta_{\textrm{min}}\leq\beta^*\leq\beta_{\textrm{max}}\).

\vspace{5pt}

\noindent
Case I: \(0\leq \beta^* \leq \beta_{\textrm{min}}\). 
As \(\betacrq(\tRoot) = \beta_{\textrm{min}}\), it follows that for \(0\leq \beta^* \leq \beta_{\textrm{min}}\), \(Q(\tRoot;\beta^*) = 0\) and thus, \(d(\beta^*) = \beta^* D\) according to~\eqref{eq:dualFunctionQ}.
As a result, since \(D \geq 0\), \(d(\beta)\) is monotone increasing on \([0,\beta_{\textrm{min}}]\).
Consequently, \(d(\beta^*) \leq d(\beta_{\textrm{min}})\) and because \(d(\beta^*) \geq d(\beta)\) for all \(\beta \geq 0\), the above yields \(d(\beta^*) = d(\beta_{\textrm{min}})\).
Since \(\beta_{\textrm{min}} = \betacrq(\tRoot)\), and \(\betacrq(\tRoot)\) is a tree phase transition, it follows that there exists a tree phase transition \(\hat \beta = \betacrq(\tRoot)\) such that \(d(\hat \beta) = d(\beta^*)\).

\vspace{8pt}

\noindent
Case II: \(\beta_{\textrm{max}} \leq \beta^*\).
For \(\beta_{\textrm{max}} \leq \beta^*\), it follows from~\eqref{eq:dualIBtreeFunctionWithDeltaInfo} that the dual function linear: \(d(\beta) = I_X(\T_{\beta^*}) + \beta (D - I_Y(\T_{\beta^*}))\).
Moreover, since the primal problem is assumed to have a feasible solution, the dual function is bounded above. 
Therefore, as \(d(\beta)\) is a linear function on \([\beta_{\textrm{max}},\infty)\) with slope \(D - I_Y(\T_{\beta^*})\) that is bounded above, it must be that \((D - I_Y(\T_{\beta^*})) \leq 0\).
Consequently, \(d(\beta)\) is monotone decreasing on \([\beta_{\textrm{max}},\infty)\).
As a result, we have that for \(\beta_{\textrm{max}} \leq \beta^*\), \(d(\beta_{\textrm{max}}) \geq d(\beta^*)\), and thus \(d(\beta_{\textrm{max}}) = d(\beta^*)\).
Since \(\beta_{\textrm{max}} \in \betaPhaseTransitionSet\), it follows that there exists a tree phase transition \(\hat \beta = \beta_{\textrm{max}}\) such that \(d(\hat\beta) = d(\beta^*)\).

\vspace{8pt}

\noindent
Case III: \(\beta_{\textrm{min}}\leq\beta^*\leq\beta_{\textrm{max}}\).
In this case, there exists \(\beta_{\textrm{lb}} = \max\{\beta:\beta\leq \beta^*,~\beta\in\betaPhaseTransitionSet\}\) and \(\beta_{\textrm{ub}} = \min\{\beta:\beta^*\leq\beta,~\beta\in\betaPhaseTransitionSet\}\).
Notice that \(\beta_{\textrm{lb}}\) and \(\beta_{\textrm{ub}}\) are consecutive tree phase transitions satisfying \(\beta_{\textrm{lb}} \leq \beta^* \leq \beta_{\textrm{ub}}\).
Consequently, from~\eqref{eq:dualIBtreeFunctionWithDeltaInfo} and the definition of tree phase transition, it follows that, for any \(\beta\in [\beta_{\textrm{lb}},\beta_{\textrm{ub}}]\), the dual function is given by \(d(\beta) = I_X(\T_{\beta^*})+\beta (D - I_Y(\T_{\beta^*}))\).
There are now two possibilities: \((D - I_Y(\T_{\beta^*})) \geq 0\) or \((D - I_Y(\T_{\beta^*})) \leq 0\).
In the first scenario, \(d(\beta)\) is monotone increasing on \([\beta_{\textrm{lb}},\beta_{\textrm{ub}}]\).
Therefore \(d(\beta_{\textrm{ub}}) \geq d(\beta^*)\) and thus \(d(\beta_{\textrm{ub}}) = d(\beta^*)\).
In the second scenario, \(d(\beta)\) is monotone decreasing, and so \(d(\beta_{\textrm{lb}}) \geq d(\beta^*)\) leading to \(d(\beta_{\textrm{lb}}) = d(\beta^*)\).
Since both \(\beta_{\textrm{lb}}\) and \(\beta_{\textrm{ub}}\) are tree phase transitions, it follows that there exists a tree phase transition \(\hat \beta\) such that \(d(\hat \beta) = d(\beta^*)\).
\hfill \(\qedsymbol\)

\section{Proof of Theorem~\ref{thm:charaterizeStrongDualityIBtree}}\label{app:charaterizeStrongDualityIBtreeProof}

\noindent 
We first show that, if \(D = [Q_{Y,\tRoot}]_j\) for some \(j\) then strong duality holds in the IB tree problem.
Let \(D = [Q_{Y,\tRoot}]_j\) for some \(j\).
Then, choose \(\beta\geq0\) according to 
\begin{equation*}
    \beta = 
    \begin{cases}
        [\betaPhaseTransitionSetNode{\tRoot}]_{j+1}, & \text{ if } j < \lvert \betaPhaseTransitionSet \rvert, \\
        \max(\betaPhaseTransitionSetNode{\tRoot}) + \varepsilon, & \text{ otherwise},
    \end{cases}
\end{equation*}
where \(\varepsilon > 0\).
Next, let \(\T_\beta \in \T^\Q\) be a solution to the problem
\begin{equation*}
    \min\{I_X(\T) - \beta I_Y(\T) : \T\in\T^\Q\},
\end{equation*}
obtained from the Q-tree search algorithm.
Now, since \([\betaPhaseTransitionSetNode{\tRoot}]_j < \beta \leq [\betaPhaseTransitionSetNode{\tRoot}]_{j+1}\), it follows, by construction of \(Q_{Y,\tRoot}\) that the solution \(\T_{\beta} \in\T^\Q\) satisfies \(I_Y(\T_{\beta}) = [Q_{Y,\tRoot}]_j\).
Notice that \(\T_\beta\) is primal feasible since \(I_Y(\T_{\beta}) \geq D\).
Moreover, observe that \(\beta \geq 0\) and because \(\T_{\beta}\in\T^\Q\) is primal feasible, we have 
\begin{equation*}
    I_X(\T_{\beta}) + \beta (D - I_Y(\T_\beta)) \leq d(\beta^*) \leq I_X(\T^*) \leq I_X(\T_{\beta}),
\end{equation*}
where \(\T^* \in \T^\Q\) is an optimal solution to the primal problem.
However, since \(D - I_Y(\T_\beta) = 0\), we have that \(I_X(\T_\beta) + \beta(D - I_Y(\T_{\beta}) = I_X(\T_\beta)\), which means
\begin{align*}
    I_X(\T_\beta) = I_X(\T_\beta) + \beta(D-I_Y(\T_\beta))&\leq d(\beta^*),\\
    &\leq I_X(\T^*) \leq I_X(\T_\beta).
\end{align*}
Therefore, \(\T_\beta\) is primal optimal, \(\beta\) is dual optimal and \(I_X(\T_\beta) = I_X(\T^*) = d(\beta^*)\); that is, strong duality holds.

\vspace{5pt}

\noindent
Next, we show that if strong duality holds in the IB tree problem, then \(D = [Q_{Y,\tRoot}]_j\) for some \(j\).
To this end, since strong duality holds, there exists a primal solution \(\T^*\in\T^\Q\) and a dual solution \(\beta^* \geq 0\) such that \(I_X(\T^*) = d(\beta^*)\).
Consequently, we have from strong duality, Lemma~\ref{lem:suffDualAndPTsOPT}, relation~\eqref{eq:dualFunctionInTermsOfPTalgVars}, and the properties of minimum that
\begin{align*}
    I_X(\T^*) = d(\beta^*) &= d(\hat \beta), \\
    &= [Q_{X,\tRoot}]_j + \hat \beta (D - [Q_{Y,\tRoot}]_j),\\
    &\leq I_X(\T^*) + \hat \beta (D - I_Y(\T^*)),\\
    &\leq I_X(\T^*),
\end{align*}
where the final inequality follows from the fact that the tree phase transition \(\hat \beta \geq 0\) and, since \(\T^*\) is primal feasible, \(D - I_Y(\T^*) \leq 0\).
The above implies
\begin{align*}
    I_X(\T^*) &= [Q_{X,\tRoot}]_j + \hat \beta (D - [Q_{Y,\tRoot}]_j),\\
    &= I_X(\T^*) + \hat \beta (D - I_Y(\T^*)),
\end{align*}
and thus, \(I_X(\T^*) = [Q_{X,\tRoot}]_j\)  and \(I_Y(\T^*) = [Q_{Y,\tRoot}]_j\) for some \(j\).
Moreover, for \(I_X(\T^*) = I_X(\T^*) + \hat \beta (D - I_Y(\T^*))\), we must have
\begin{equation*}
    \hat \beta (D - I_Y(\T^*)) = \hat \beta (D - [Q_{Y,\tRoot}]_j) = 0.
\end{equation*}
However, since \(\hat \beta\) is a tree phase transition, it follows that \(\hat\beta \geq  \min(\betaPhaseTransitionSet) = \betacrq(\tRoot) > 0\).
Consequently, strong duality implies \(D - [Q_{Y,\tRoot}]_j = 0\), or equivalently, that \(D = [Q_{Y,\tRoot}]_j\) for some \(j\). \hfill \(\qedsymbol\)

\section{Proof of Proposition~\ref{prop:TUMIBtrees}}\label{app:proofOfTUMProp}

Let \(J \subseteq \{1,\ldots,n\}\) be any collection of columns of \(A\).
Notice that from~\eqref{eq:fullILPcons2}, each row of \(A\) has a column containing a \(1\), a column containing a \(-1\), with the rest of the entries in the row being \(0\).
Now consider the partition of \(J\) given by \(J_1 = J\) and \(J_2 = \varnothing\).
For this partition, it follows that \(\sum_{j\in J_1} [A]_{ij}\) equals either \(-1\), \(0\), or \(1\) for every row \(i\).
As a result, we have \(\lvert \sum_{j\in J_1} [A]_{ij} \rvert \leq 1\).
Therefore, for any collection \(J \subseteq \{1,\ldots,n\}\) of columns of \(A\), there exists a partition \(J_1 = J\) and \(J_2 = \varnothing\) such that \(\lvert \sum_{j\in J_1}[A]_{ij} - \sum_{j\in J_2} [A]_{ij}\rvert \leq 1\) for every row \(i\) of \(A\).
Invoking Theorem~\ref{thm:necSufTUMConditions} we conclude that \(A\) is totally unimodular. \hfill \(\qedsymbol\)

\end{appendices}

\bibliographystyle{IEEEtran}


\end{document}